\newcommand{\titlename}{Minimax Rates and Adaptivity in Combining Experimental and Observational Data}
\title{\titlename}
\newcommand{\rct}{c}
\newcommand{\Rct}{\calC}
\newcommand{\obs}{o}
\newcommand{\Obs}{\calO}
\newcommand{\weight}{\omega}
\newcommand{\amse}{{\mathsf{amse}}}
\newcommand{\cipw}{\hat\beta_{c, \mathsf{IPW}}}
\newcommand{\caipw}{\hat\beta_{c, \mathsf{AIPW}}}
\newcommand{\cippw}{\hat\beta_{c, \mathsf{IPPW}}}
\newcommand{\caippw}{\hat\beta_{c, \mathsf{AIPPW}}}
\newcommand{\ooipw}{\hat\beta_{o_2, \mathsf{IPW}}}
\newcommand{\ooaipw}{\hat\beta_{o_2, \mathsf{AIPW}}}
\newcommand{\len}{\mathsf{Len}}
\newcommand{\ci}{\mathsf{CI}}
\newcommand{\orac}{\mathsf{orac}}
\renewcommand{\widehat}{\hat}
\author[a]{Shuxiao Chen\thanks{Email:
\href{mailto:shuxiaoc@wharton.upenn.edu}{shuxiaoc@wharton.upenn.edu}}}
\author[a]{Bo Zhang\thanks{Email: \href{bozhan@wharton.upenn.edu}{bozhan@wharton.upenn.edu}}}
\author[b]{Ting Ye\thanks{Email: \href{mailto:tingye1@uw.edu}{tingye1@uw.edu}}}
\affil[a]{\textit{Department of Statistics and Data Science, The Wharton School, University of Pennsylvania}}
\affil[b]{\textit{Department of Biostatistics, University of Washington}}
\date{\today}
\begin{document}

\maketitle

\begin{abstract}

\noindent 
Randomized controlled trials (RCTs) are the gold standard \textcolor{black}{for evaluating} the causal effect of a treatment; however, they often have limited sample sizes and sometimes poor generalizability. 
On the other hand, non-randomized, observational data derived from large administrative databases have massive sample sizes and better generalizability, but they are prone to unmeasured confounding bias. 
It is thus of considerable interest to reconcile effect estimates obtained from randomized controlled trials and observational studies investigating the same intervention, potentially harvesting the best from both realms. 
In this paper,  we theoretically characterize the potential efficiency gain of integrating observational data into the RCT-based analysis from a minimax point of view.
For estimation, we derive the minimax rate of convergence for the mean squared error, and propose a fully adaptive \emph{anchored thresholding} estimator that attains the optimal rate up to poly-log factors.
For inference, we characterize the minimax rate for the length of confidence intervals and show that adaptation (to unknown confounding bias) is in general impossible. 
A curious phenomenon thus emerges: for estimation, the efficiency gain from data integration can be achieved without prior knowledge on the magnitude of the confounding bias; for inference, the same task becomes information-theoretically impossible in general.
We corroborate our theoretical findings 
using simulations and a real data example from the RCT DUPLICATE initiative \citep{franklin2021emulating}.
\end{abstract}

\smallskip

\noindent\textit{Keywords:} causal inference; generalizability; integrative data analysis; observational studies; randomized controlled trials; transfer learning.

\addtocontents{toc}{\protect\setcounter{tocdepth}{0}}

\section{Introduction}
Evaluating the causal effect of a medical treatment or policy intervention remains the central query in biomedical and social sciences. While the randomized controlled trials (RCTs) remain the gold standard \textcolor{black}{for generating high-quality causal evidence}, they often have limited sample sizes and are not representative of the target population of scientific interest \citep{rothwell2005external, Stuart:2011aa, deaton2009instruments,deaton2018understanding}. 
On the other hand, a plethora of observational data have been collected for administrative and research purposes and are increasingly available to researchers in the form of disease registries, biobanks, administrative claims databases and electronic health records. Observational data typically have massive sample sizes and better resemble the population of scientific interest; however, researchers often refrain from interpreting the obtained conclusion as \emph{causal} due to the almost universal concern of unmeasured confounding bias. In fact, according to authors instructions of the flagship {\it Journal of the American Medical Association} (\citealp{JAMA_instructions}), ``causal language (including use of terms such as effect and efficacy) should be used only for randomized clinical trials.'' The goal of this article is to study how to estimate the average treatment effect (ATE) of a well-defined target population by reconciling RCTs and observational studies investigating the same scientific question and potentially harvesting the best from both realms.

One important special case of our framework is estimating ATE of a target population using only relevant RCTs; in other words, we would like to generalize the RCT estimates to a target population of scientific interest. This topic has been extensively studied in the literature under the themes of ``generalizability", ``transportability'', and ``improving RCTs' external validity'' \citep{cole2010generalizing, Stuart:2011aa,tipton2013improving, pearl2014external,hartman2015sample, bareinboim2016causal,Buchanan:2018aa,Dahabreh:2019aa}; see review papers by \citet{colnet2020causal} and \citet{degtiar2021review} and references therein. 
In this special case, observational data are only used to inform researchers of subject characteristics in the target population; no treatment or outcome observational data are needed or leveraged in research along this line. 
Another important special case is when the target population concurs with the RCT study population and observational data are used to improve the efficiency of RCT estimates. 
To this end, \citet{gagnon2021precise} proposed a design-based approach that leverages large-scale observational data to fit regression or flexible machine learning models and conduct regression adjustment for randomization-based inference of RCT data; see \citet{aronow2013class} for similar strategies. 
Observational data are not used beyond helping gauge a better regression adjustment fit to improve the randomization inference efficiency and are thus not leveraged to their full potential.

Estimating causal effects using both RCTs and observational studies inevitably involves the century-old bias-variance trade-off. 
Observational data play the role of double-edged sword here: they are blessed with a sometimes enormous sample size and could dramatically improve estimation precision under suitable conditions; 
on the other hand, they may leak bias into the otherwise unbiased RCT estimates. 
Developing a principled and adaptive approach to bias-variance trade-off thus lies at the heart of combining RCTs and observational studies for optimal ATE estimation. 
To the best of our knowledge, few works have formally studied this problem, let alone the development of adaptive algorithms.
Two notable exceptions are \citet{yang2020elastic} and \citet{yang2020improved}. The former developed a hypothesis-testing-based integrative estimator that pools the RCT data and observational data when a certain criterion is met. The bias-variance trade-off is controlled by the choice of the critical region when testing whether the observational data violate the no unmeasured confounding assumption. 
The latter posited a functional form on the unmeasured confounding bias and proposed to estimate it using semiparametric theory. 
However, both works rely on the assumption that the sample size ratio between the observational data and the RCT data is \emph{bounded}. Such an assumption does not capture the fact that the sample size of observational data can be \emph{much larger} than that of the RCT data, and it effectively rules out the possibility of obtaining \emph{order-wise} improvement, even if the confounding bias is negligible.


Our overarching goal is to theoretically characterize the potential efficiency gain from integrating observational data into the analysis based on RCT alone. 
The main contributions are two-folds. 
\begin{enumerate}
\item For estimation, we derive the minimax rate of convergence for the mean squared error. Interestingly, the rate can be achieved by an oracle estimator that takes a \emph{dichotomous} form: when the magnitude of the confounding bias exceeds a certain threshold, we estimate the causal effect using the RCT alone; when the magnitude is below this threshold, we pool the two data sources together, as if the confounding bias never exists.
We further propose a fully adaptive estimator, termed \emph{anchored thresholding} estimator, that nearly attains the minimax rate up to poly-log factors without any prior knowledge on the magnitude of the bias. 

\item For statistical inference, we characterize the minimax {rate for the} length of confidence intervals (CIs).  {The minimax rate for CI length is similar in functional form to that for estimation, and can be attained similarly by a dichotomous oracle procedure.}
However, the story for adaptivity becomes drastically different.
We show that constructing adaptive CIs that attains the minimax rate is generally impossible, unless there is strong prior knowledge on the magnitude of the confounding bias.
As a corollary, our results indicate the existence of a threshold such that 
if one does not believe the magnitude of confounding bias is much smaller than this threshold, then \emph{any} uniformly valid CIs can only outperform the RCT-only CI by at most a constant factor.
\end{enumerate}


\noindent\textbf{Notations.}
We conclude this section by introducing some notations that will be used throughout this paper. 
Given $a, b \in \bbR$, we denote $a\lor b = \max\{a, b\}$ and $a\land b = \min\{a, b\}$.
For two positive sequences $\{a_n\}$ and $\{b_n\}$, we write 
$a_n \lesssim b_n$ to denote $\limsup a_n/b_n < \infty$, 
and we let $a_n \gtrsim b_n$ to denote $b_n \lesssim a_n$. 
Meanwhile, the notation $a_n\asymp b_n$ means $a_n \lesssim b_n$ and $a_n \gtrsim b_n$ simultaneously.
Moreover, we write $a_n\ll b_n$ or $a_n = o(b_n)$ to mean $b_n/a_n\to\infty$ and $a_n\gg b_n$ to mean $b_n \ll a_n$.
We use $\overset{p}{\to}$ to denote convergence in probability and we use $X_n = o_p(1)$ to denote $X_n\overset{p}{\to} 0$.
\section{Set-up}
\label{sec: setup}
\subsection{Estimands and Baseline Estimators}
Suppose the existence of a binary treatment $ A\in \{0,1\} $, potential outcomes $\{Y(1), Y(0)\}$ and baseline observed covariates $\mathbf{X}$. 
Throughout the article, we assume the consistency and Stable Unit Treatment Value Assumption (SUTVA)  so that the observed outcome $Y$ satisfies $Y= AY(1) + (1-A)Y(0)$ \citep{Rubin1980}.
Our estimand of interest is the average treatment effect of a target population (PATE) $\beta^\star = \mathbb{E}[Y{(1)} - Y{(0)}]$, where $\mathbb{E}[\cdot]$ is taken with respect to the joint distribution of $(\mathbf{X}, Y(0), Y(1))$ over the target population.

A random sample of size $n_c$ from the target population was selected into an RCT according to some sample selection protocol. We use $S = c$ to indicate selection into the RCT and let $\mathcal{C}$ denote the associated index set so that $|\mathcal{C}| = n_c$. We write the RCT data as $\mathcal{D}_{c} = \{(Y_i, A_i, \mathbf{X}_i, S_i = c):~i \in \mathcal{C}\}$, which are  assumed to be independent and identically distributed according to the joint law of $\big(Y(1), Y(0), A, \mathbf{X}\big) \mid S = c$. We denote the average treatment effect in the RCT population as $\beta^\star_{c} = \mathbb{E}[Y(1) - Y(0) \mid S = c]$. 
\begin{assumption}[Design of RCT]\label{assump: design of RCT} \rm
Assume $A \indep (Y(1), Y(0)) \mid \mathbf{X}, S=c$, the treatment assignment probability $ \pi_c(\mathbf{x}):= P(A=1\mid \mathbf{X}=\mathbf{x}, S=c) $ is \emph{known}, and $\pi_c(\mathbf{X})>0$ with probability 1 with respect to the law of $\bfX\mid S = c$. 
\end{assumption}

Under Assumption \ref{assump: design of RCT},  $\beta^\star_c$ is identified from  $\mathcal{D}_c$ and can be estimated via the inverse probability weighting (IPW) estimator  \citep{horvitz1952generalization}:
\begin{equation}
\label{eq:cipw}
\hat\beta_{c, \textsf{IPW}}  = \frac{1}{n_c} \sum_{i\in \mathcal{C}}  \bigg(\frac{A_i Y_i }{\pi_c(\mathbf{X}_i)} - \frac{(1-A_i) Y_i }{1- \pi_c(\mathbf{X}_i)}\bigg).
\end{equation}
Alternatively, $\beta^\star_c$ can be estimated via the following augmented inverse probability weighting (AIPW) estimator \citep{robins1994estimation, bang2005doubly}:
\begin{align*}
\hat\beta_{c, \textsf{AIPW}}
&  = \frac{1}{n_c} \sum_{i\in \mathcal{C}}\bigg(\frac{A_i (Y_i -\mu_{c1} (\mathbf{X}_i; \hat \bsalpha_{c1} ) )  }{\pi_c(\mathbf{X}_i)} - \frac{(1-A_i)(Y_i -\mu_{c0} (\mathbf{X}_i; \hat \bsalpha_{c0} ) ) }{1- \pi_c(\mathbf{X}_i)} \\
\label{eq:caipw}
& \qquad \qquad \qquad +\mu_{c1} (\mathbf{X}_i; \hat \bsalpha_{c1} )   - \mu_{c0} (\mathbf{X}_i; \hat \bsalpha_{c0} )\bigg),\numberthis
\end{align*}
where  $ \mu_{ca}(\bfx; \bsalpha_{ca} )$ is a parametric specification (parametrized by $\bsalpha_{ca}$) of the conditional outcome mean $\mu_{ca}(\bfx) = \mathbb{E}[Y \mid \bfX=\bfx, A = a, S = c]$ and $ \hat \bsalpha_{ca} $ an estimator of $ \bsalpha_{ca}$, for $a=0,1$. 
In an RCT with known treatment assignment probability $\pi_c(\bfx)$, the AIPW estimator $ \hat\beta_{c, \textsf{AIPW}}$ consistently estimates $\beta^\star_c$ even when $\mu_{ca} (\bfx; \bsalpha_{ca})$ do not correctly specify $\mu_{ca}(\bfx)$, and is semiparametric efficient if they do \citep{bang2005doubly,tsiatis2007semiparametric}. 
The AIPW estimator $ \hat\beta_{c, \textsf{AIPW}}$ is also guaranteed to be more efficient than $\hat\beta_{c, \textsf{IPW}}$ when $\pi_c(\bfx)$ does not depend on the covariate vector $\bfx$ (e.g., under random treatment assignment),  $\mu_{ca}(\bfx; \bsalpha_{ca})$ are linear parametric models of $\mu_{ca}(\bfx)$, and $ \hat \bsalpha_{ca}$ are the corresponding least squares estimators \citep{tsiatis2008covariate, ye2020principles}. 

In parallel, a random sample of size $n_o$ from the target population was collected in an observational study. We use $S = o$ to indicate selection into the observational study and let $\mathcal{O}$ denote the associated index set so that $|\mathcal{O}| = n_o$. Typically, $n_o \gg n_c$.  We write the observational data as
$\mathcal{D}_{o} = \{(Y_i, A_i, \mathbf{X}_i, S_i = o):~i \in \mathcal{O}\}$, which are assumed to be independent and identically distributed according to the joint law of $\big(Y(1), Y(0), A, \mathbf{X}\big) \mid S = o$. We denote the  average treatment effect in the observational study population as $\beta^\star_{o} = \mathbb{E}[Y(1) - Y(0) \mid S = o]$. Without additional assumptions, $\beta^\star_\rct \neq \beta^\star_\obs$ and neither agrees with the estimand of interest $\beta^\star$. 
    
For estimation, we impose a positivity assumption on the treatment assignment probability in the observational data  \citep{rosenbaum1983central}. 

\begin{assumption}[Positivity of Treatment Assignment in Observational Study]\label{assump: positivity obs} 
Assume the treatment assignment probability $\pi_o(\mathbf{X}):= P(A=1\mid \bfX, S=o)>0$ with probability 1 with respect to the law of $\bfX \mid S = \obs$.
\end{assumption}
Let  $\mu_{oa}(\bfx) = \mathbb{E}[Y \mid \mathbf{X}=\bfx, A = a, S = o]$ be the conditional outcome means in the observational data, and  $\mu_{oa}(\bfx;  \bsalpha_{oa})$ be their corresponding parametric specifications for $a = 0, 1$. Also let $\pi_{o}(\bfx; \bseta)$ be a parametric specification of $\pi_o(\bfX)$. Under Assumption \ref{assump: positivity obs},  the observational study counterparts of $\cipw$ and $\caipw$ defined in \eqref{eq:cipw} and \eqref{eq:caipw} can then be formed as
\begin{align}
\label{eqn:oipw}
    \hat\beta_{o, \textsf{IPW}} &= \frac{1}{n_o} \sum_{i\in \mathcal{O}}  \bigg(\frac{A_i Y_i }{\pi_{o}(\mathbf{X}_i; \hat \bseta)} - \frac{(1-A_i) Y_i }{1- \pi_{o}(\mathbf{X}_i; \hat \bseta) }\bigg) ,\\
        \hat    \beta_{o, \textsf{AIPW}} &=  \frac{1}{n_o} \sum_{i\in \mathcal{O}}\bigg(\frac{A_i (Y_i -\mu_{o1} (\mathbf{X}_i; \hat \bsalpha_{o1} ) )  }{\pi_{o}(\mathbf{X}_i; \hat\bseta )} - \frac{(1-A_i)(Y_i -\mu_{o0} (\mathbf{X}_i; \hat \bsalpha_{o0} ) ) }{1- \pi_{o}(\mathbf{X}_i;\hat\bseta)}  \nonumber\\
\label{eqn:oaipw}
        & \qquad \qquad\qquad + \mu_{o1} (\mathbf{X}_i; \hat \bsalpha_{o1} )   - \mu_{o0} (\mathbf{X}_i; \hat \bsalpha_{o0} )\bigg),
\end{align}
where $\hat\bseta$ and $\hat\bsalpha_{oa}$ are estimators of $\bseta$ and $\bsalpha_{oa}$, respectively. 

\subsection{Internal Validity Bias}
\label{subsec: internal validity}

Internal validity refers to an effect estimate being (asymptotically) unbiased for the causal effect. Under Assumption \ref{assump: design of RCT} which states that treatment assignment in the RCT population is manipulated by the experimenter and depends only on the observed covariates, both $\hat\beta_{c, \textsf{IPW}}$ and $\hat\beta_{c, \textsf{AIPW}}$ are internally valid: they are consistent and asymptotically normal estimators for the causal estimand $\beta^{\star}_c$.  On the other hand, under  Assumption \ref{assump: positivity obs}, $\hat\beta_{o, \textsf{IPW}}$ and $\hat\beta_{o, \textsf{AIPW}}$ have internal validity bias as they are estimating a non-causal quantity $
\beta_o=\mathbb{E}\left[Y\mid  S = o, A=1\right] - \mathbb{E}\left[Y\mid  S = o, A=0\right]
$, which is not necessarily equal to $\beta^\star_o$ due to unmeasured confounding; see Appendix \ref{appx:internal_bias} for a more detailed analysis of the internal validity bias.

\subsection{External Validity Bias}
\label{subsec: external validity}
External validity is concerned with how well the effect estimates generalize to other contexts \citep{degtiar2021review}. This bias often arises from the interplay between effect heterogeneity and the difference between the study and target populations. We focus primarily on the setting where the observational data represent the target population, as formalized below.
\begin{assumption}[Target Population]\label{assump:target}
   The target distribution $(Y(1), Y(0), \bfX)$ and the observational study population $(Y(1), Y(0), \bfX)\mid S=o$ are the same.
\end{assumption}

We remark that our method and theory developed in Sections \ref{sec:estimation} and \ref{sec:inference} can be extended in a \emph{mutatis mutandis} fashion to accommodate other settings, such as when the RCT data represent the target population, or even when there is a separate target population that is distinct from either RCT or observational data. 


It is clear that $\beta^\star = \beta^\star_\obs$ under Assumption \ref{assump:target}. When the observational data are representative of the target population, the two RCT-based estimators $\cipw$ and $\caipw$, albeit internally valid, may suffer from external validity bias. To eliminate the external validity bias, we make the following standard \emph{conditional mean exchangeability} assumption, also known as the \emph{generalizability} assumption \citep{Stuart:2011aa, Dahabreh:2019aa, degtiar2021review}. 

\begin{assumption}[Conditional Mean Exchangeability]\label{assump: study selection exchangeability} 
$\mathbb{E}[Y(1)- Y(0)| \mathbf{X}, S = c] = \mathbb{E}[Y(1)- Y(0) | \mathbf{X}]$.
\end{assumption}

Assumption \ref{assump: study selection exchangeability} states that participating in the RCT does not modify the treatment effect in every strata defined by $\mathbf{X}$, which will hold if all patient characteristics that modify the treatment effect and differ between the RCT population and the target population are measured.  

Under Assumptions \ref{assump:target}, \ref{assump: study selection exchangeability} and a often-made \emph{RCT participation positivity} assumption which states that the participation probability defined as $e_\rct(\bfx) = \bbP(S = \rct\mid \bfX =\bfx, S \in \{\rct, \obs\})$ is strictly bounded away from $0$ and $1$ almost surely, the PATE  $\beta^\star$ can be identified from the RCT data via inverse probability of participation weighting (IPPW) or augmented inverse probability of participation weighting (AIPPW) \citep{Stuart:2011aa,Dahabreh:2019aa,lu2019causal, degtiar2021review}. 
Here we use $S \in \{\rct, \obs\}$ to indicate that the probability is calculated for a subject in the RCT or the observational study. 
However, in practice, the  RCT participation probability would decay to zero if $n_\obs\gg n_\rct$ as both $n_o$ and $n_c$ go to infinity, thus invalidating the  RCT participation positivity assumption.  For this reason, we make a weaker assumption that does not assume away  decaying RCT participation probability.

\begin{assumption}[Bounded Density Ratio]
Assume $\bbP(\mathbf{X}= \mathbf{x}\mid S=o) / \bbP( \mathbf{X}= \mathbf{x} \mid S=c)<\infty$ for all $\mathbf{x}$ with positive density in the target population. 
\end{assumption}

To tackle the challenge imposed by decaying RCT participation probability, we randomly sample a subset of size $n_c$ from the observational data, denoted as $\mathcal{D}_{\mathcal{O}_1}$, and use $\mathcal{D}_{\mathcal{C}}\cup\mathcal{D}_{\mathcal{O}_1} $ to estimate the RCT participation probability and then adjust for the external validity bias of the RCT-based estimators using IPPW or AIPPW as follows:
\begin{align*}
\begin{split}
    \hat\beta_{c, \textsf{IPPW}}  &= \frac{1}{n_c} \sum_{i\in \mathcal{C}} \bigg[ \frac{A_i Y_i }{\pi_c(\mathbf{X}_i)} - \frac{(1-A_i) Y_i }{1- \pi_c(\mathbf{X}_i)}\bigg] \frac{ 1-e_{{c}}(\mathbf{X}_i; \hat  \bsxi )  }{e_{{c}}(\mathbf{X}_i; \hat  \bsxi )    },\\
\hat\beta_{c, \textsf{AIPPW}}   & = \frac{1}{n_c} \sum_{i\in \mathcal{C}} \bigg[\frac{A_i (Y_i -\mu_{c1} (\mathbf{X}_i; \hat \bsalpha_{c1} ) )  }{\pi_c(\mathbf{X}_i)} - \frac{(1-A_i)(Y_i -\mu_{c0} (\mathbf{X}_i; \hat \bsalpha_{c0} ) ) }{1- \pi_c(\mathbf{X}_i)}  \bigg] \frac{ 1-e_{{c}}(\mathbf{X}_i; \hat  \bsxi )  }{e_{{c}}(\mathbf{X}_i; \hat  \bsxi )  } \\
&\qquad + \frac{1}{n_c} \sum_{i\in \mathcal{O}_1} \mu_{c1} (\mathbf{X}_i; \hat \bsalpha_{c1} )   - \mu_{c0} (\mathbf{X}_i; \hat \bsalpha_{c0} ),
\end{split}
\end{align*}
where $e_{{c}}(\mathbf{x};  \bsxi )$ is a parametric specification of $e_\rct(\bfx)$ and $\hat \bsxi$ an estimator of $ \bsxi$ obtained from $\mathcal{D}_{\mathcal{C}}\cup\mathcal{D}_{\mathcal{O}_1} $. 
It can be shown that when $e_{{c}}(\mathbf{X}_i;  \bsxi )$ is correct, the probability limit of $\hat\beta_{c, \textsf{IPPW}} $ is $\beta^\star$; when either  $e_{{c}}(\mathbf{X}_i;  \bsxi )$ or $ \{\mu_{c1} (\mathbf{X}_i;  \bsalpha_{c1} ),  \mu_{c0} (\mathbf{X}_i;  \bsalpha_{c0} )\}$ is correct,  the probability limit of $\hat\beta_{c, \textsf{AIPPW}} $ is $\beta^\star$ \citep{Dahabreh:2019aa}. 

The remaining observational data $\mathcal{D}_{\mathcal{O}_2} = \mathcal{D}_{\mathcal{O}}\setminus\mathcal{D}_{\mathcal{O}_1}$ of size $n_{o_2}$ are then used to construct estimators $\hat\beta_{o_2,\textsf{IPW}}$ and $\hat\beta_{o_2,\textsf{AIPW}}$ according to \eqref{eqn:oipw} and \eqref{eqn:oaipw}, which potentially suffer from the internal validity bias as discussed in Section \ref{subsec: internal validity}. In practice, IPW-type estimators like $\cippw, \caippw, \ooipw$, and $\ooaipw$ may be unstable when the (estimated) weights are close to zero, and researchers often stabilize the weights by normalization  \citep{robins2007comment}; we use $\hat\beta_{c, \textsf{sIPPW}}$, $\hat\beta_{c, \textsf{sAIPPW}}$, $\hat\beta_{o_2, \textsf{sIPW}}$, and $\hat\beta_{o_2, \textsf{sAIPW}}$ to denote their stabilized versions. More details are provided in Appendix \ref{appx:stablization}.

The consistency, asymptotic normality, and influence function representation of all aforementioned estimators are standard and can be immediately obtained from \citet[Chapter 6.1]{newey1994}; see Appendix \ref{appx:influence_func} for details.

\section{Estimation}\label{sec:estimation}

\subsection{An Oracle Estimator and Its Optimality}
\textcolor{black}{
Let $\hat \beta_\rct$ denote a generic estimator constructed from the RCT data and $\hat \beta_\obs$ a generic estimator constructed from the observational data.
For instance, if there is no external validity bias, then researchers could take $\hat\beta_\rct = \hat\beta_{c, \textsf{IPW}}$, or $\hat\beta_{\rct} = \hat\beta_{\rct, \textsf{AIPW}}$, or their stabilized versions, and $\hat\beta_\obs = \hat\beta_{o, \textsf{IPW}}$, or $\hat\beta_\obs = \hat\beta_{o, \textsf{AIPW}}$, or their stabilized versions.
In the presence of external validity bias, based on the discussion in Section \ref{subsec: external validity}, it is desirable to choose $\hat\beta_\rct = \hat\beta_{c, \textsf{IPPW}}$, or $\hat\beta_\rct = \hat\beta_{c, \textsf{AIPPW}}$, or their stabilized versions, and $\hat\beta_\obs = \hat\beta_{o_2, \textsf{IPW}}$, or $\hat\beta_\obs = \hat\beta_{o_2, \textsf{AIPW}}$, or their stabilized versions.}
Recall that $\hat\beta_\obs$ potentially suffers from the internal validity bias quantified by 
\begin{equation*}
\Delta =  \beta^\star - \beta_\obs.
\end{equation*}

We assume both $\hat\beta_\rct$ and $\hat\beta_\obs$ admit asymptotic linear expansions, formalized by Assumption \ref{assump:asymp_linear}.
\begin{assumption}[Existence of Asymptotic Linear Expansions]
\label{assump:asymp_linear}
Suppose there exist influence functions $\psi_\rct(\bfX, A, Y)$, $\psi_\obs(\bfX, A, Y)$ such that as $n_\rct \land n_{\obs} \to \infty$, 
\begin{align*}
    \sqrt{n_\rct}(\hat\beta_\rct - \beta^\star) &  = \frac{1}{\sqrt{n_\rct}} \sum_{i\in \Rct}  \psi_\rct(\bfX_i, A_i, Y_i) +  o_p(1),  \\
    \sqrt{n_{\obs}}(\hat\beta_\obs - \beta_\obs) &  = \frac{1}{\sqrt{n_{\obs}}} \sum_{i\in\Obs}  \psi_\obs(\bfX_i, A_i, Y_i) +  o_p(1).
\end{align*}
Moreover, the influence functions are mean zero with finite second and third moments given by $\bbE[(\psi_\rct(\bfX_i, A_i, Y_i))^2] = \sigma_\rct^2$ and $\bbE[|\psi_\rct(\bfX_i, A_i, Y_i)|^3] = \rho_\rct$ for all $i \in \calC$, and $\bbE[(\psi_\obs(\bfX_i, A_i, Y_i))^2] = \sigma_\obs^2$ and $\bbE[|\psi_\obs(\bfX_i, A_i, Y_i)|^3] = \rho_\obs$ for all $i \in \Obs$. 
\end{assumption}

\begin{remark}
The asymptotic linear expansion part of Assumption \ref{assump:asymp_linear} is satisfied by various choices of $\hat\beta_\rct$ and $\hat\beta_\obs$; see Section \ref{subsec: internal validity} and \ref{subsec: external validity}. 
Assumption \ref{assump:asymp_linear} also assumes the existence of finite third moments as our analysis makes use of the Berry-Essen theorem. While this assumption could be relaxed, e.g., by invoking generalizations of Berry-Esseen theorem that only requires finite fractional moment \citep[Chapter 16]{feller2008introduction}, we do not pursue it here.
\end{remark}
\begin{remark}
    \textcolor{black}{
    Careful readers may have noticed that the asymptotic linear expansion technically does not hold for settings when a part of the observational data is used in the construction of $\hat\beta_\rct$ (e.g., $\hat\beta_{\rct, \textsf{IPPW}}$ and $\hat\beta_{\rct, \textsf{AIPPW}}$). However, in such scenarios, we can simply overload the notations and re-define 
    $\calC \leftarrow \calC \cup \calO_1$, $n_\obs\leftarrow n_{\obs_2}$, and $\calO\leftarrow\calO_{2}$, and all the theoretical results in later sections will hold provided $n_\obs \geq (1+\ep)n_\rct$ for some absolute constant $\ep >0$.}
\end{remark}



\textcolor{black}{
Let us introduce a scalar $\overline\Delta \geq 0$ and assume that $|\Delta| \leq \overline\Delta$.
To motivate the procedure, consider the following two extreme cases.
If $\overline\Delta$ is very large, then in the worst case, incorporating the observational data can only do harm to our task of estimating $\beta^\star$. 
In this case, the worst-case optimal choice is arguably the RCT-only estimator $\hat\beta_\rct$. 
On the other hand, if $\overline{\Delta}$ is negligibly small, then the problem is reduced to aggregating two asymptotically (almost) unbiased and Gaussian estimators.
As the bias is negligibly small, a natural choice is taking the convex combination of the two estimators, with the weights chosen to minimize the overall variance:}
\begin{align}
    \label{eq:naive_estimator}
    \hat\beta_\weight = (1-\weight) \hat\beta_\rct + \weight \hat \beta_\obs,
    \qquad 
    \textnormal{where }
    \weight = \frac{\sigma_\rct^2/n_\rct}{ \sigma_\rct^2/n_\rct + \sigma_\obs^2/n_{\obs}}.
\end{align}
We refer to $\hat\beta_\weight$ as the \emph{naively-pooled} estimator.

\textcolor{black}{Following the above discussion, if we have prior knowledge on $\overline\Delta$, then we can choose which estimator to use based on the magnitude of $\overline\Delta$.}
In particular, we can implement the following oracle strategy: (1) when $\overline\Delta$ exceeds a certain threshold, then use the RCT-only estimator; (2) when $\overline\Delta$ is below that threshold, then use the naively-pooled estimator. The performance of such a procedure is given in the following theorem.

\begin{theorem}[Performance of the Oracle Estimator]
\label{thm:oracle_estimator}
For any estimator $\hat\beta$ of $\beta^\star$, we let $\amse(\hat\beta)$ be its asymptotic mean-squared error, so that 
$
    {\bbE[|\hat \beta - \beta^\star|^2]}/{\amse(\hat\beta)} \to 1
$
as $n_\rct \land n_{\obs}\to \infty$.
Under Assumption \ref{assump:asymp_linear}, as $n_\rct\land n_{\obs} \to \infty$, we have
\begin{align*}
    \amse(\hat\beta_\rct) = \frac{\sigma_\rct^2}{n_\rct},
    \qquad
    \amse(\hat\beta_\weight) =  \frac{\sigma_\rct^2 \sigma_\obs^2}{n_\rct \sigma_\obs^2 + n_{\obs} \sigma_\rct^2} + \bigg(\frac{\sigma_\rct^2/n_\rct}{\sigma_\rct^2/n_\rct + \sigma_\obs^2/n_{\obs}}\bigg)^2 \cdot \Delta^2.
\end{align*}
As a result, the following oracle estimator
\begin{align}
    \label{eq:oracle_estimator}
    \hat\beta_{\orac} = 
    \begin{cases}
        \hat\beta_\rct & \textnormal{ if } \overline\Delta \geq {\sigma_\rct}/{\sqrt{n_\rct}}\\
        \hat\beta_\weight & \textnormal{ otherwise}
    \end{cases}
\end{align}
satisfies 
\begin{align}
    \label{eq:amse_oracle_estimator}
    \amse(\hat\beta_\orac) \leq\frac{\sigma_\rct^2 \sigma_\obs^2}{n_\rct \sigma_\obs^2 + n_{\obs} \sigma_\rct^2} + \frac{\sigma_\rct^2}{n_\rct} \land \overline\Delta^2.
\end{align}
\end{theorem}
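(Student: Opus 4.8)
The plan is to reduce both $\amse$ computations to the leading-order i.i.d.\ averages supplied by Assumption \ref{assump:asymp_linear}, exploiting that the RCT sample $\Rct$ and the observational sample $\Obs$ are independent. The one conceptual point to keep in mind throughout is that $\hat\beta_\obs$ is centered at $\beta_\obs = \beta^\star - \Delta$ rather than at $\beta^\star$; consequently $\hat\beta_\obs - \beta^\star = \frac{1}{n_\obs}\sum_{i\in\Obs}\psi_\obs(\bfX_i,A_i,Y_i) - \Delta + o_p(n_\obs^{-1/2})$ carries the deterministic offset $-\Delta$, which is precisely the source of the bias term in $\amse(\hat\beta_\weight)$. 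For $\amse(\hat\beta_\rct)$ I would write $\hat\beta_\rct - \beta^\star = \bar\psi_\rct + o_p(n_\rct^{-1/2})$ with $\bar\psi_\rct = \frac{1}{n_\rct}\sum_{i\in\Rct}\psi_\rct(\bfX_i,A_i,Y_i)$; since the $\psi_\rct$ are i.i.d., mean zero, with second moment $\sigma_\rct^2$, we get $\bbE[\bar\psi_\rct^2] = \sigma_\rct^2/n_\rct$ and zero bias, so the leading term of $\bbE[|\hat\beta_\rct-\beta^\star|^2]$ is $\sigma_\rct^2/n_\rct$, giving the first formula.

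For $\amse(\hat\beta_\weight)$ I would decompose $\hat\beta_\weight - \beta^\star = (1-\weight)(\hat\beta_\rct-\beta^\star) + \weight(\hat\beta_\obs - \beta^\star)$ into mean and fluctuation. The leading-order bias is $-\weight\Delta$, so the squared bias is $\weight^2\Delta^2$. Using independence of the two samples, the leading-order variance is $(1-\weight)^2 \sigma_\rct^2/n_\rct + \weight^2 \sigma_\obs^2/n_\obs$ (the cross-covariance vanishes). Writing $a = \sigma_\rct^2/n_\rct$ and $b = \sigma_\obs^2/n_\obs$ and substituting the optimal $\weight = a/(a+b)$, the variance collapses algebraically, $(1-\weight)^2 a + \weight^2 b = \frac{ab^2+a^2b}{(a+b)^2} = \frac{ab}{a+b} = \frac{\sigma_\rct^2\sigma_\obs^2}{n_\rct\sigma_\obs^2 + n_\obs\sigma_\rct^2}$, which together with the squared bias yields the second formula. (This $\weight$ is exactly the minimizer of the variance-only problem, which is what triggers the simplification.)

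The oracle bound \eqref{eq:amse_oracle_estimator} then follows by a two-line case analysis according to whether $\overline\Delta \geq \sigma_\rct/\sqrt{n_\rct}$. If $\overline\Delta \geq \sigma_\rct/\sqrt{n_\rct}$ then $\hat\beta_\orac = \hat\beta_\rct$ and $\frac{\sigma_\rct^2}{n_\rct}\wedge\overline\Delta^2 = \frac{\sigma_\rct^2}{n_\rct}$, so the bound holds because the variance term $\frac{\sigma_\rct^2\sigma_\obs^2}{n_\rct\sigma_\obs^2 + n_\obs\sigma_\rct^2}$ on the right is nonnegative. If $\overline\Delta < \sigma_\rct/\sqrt{n_\rct}$ then $\hat\beta_\orac=\hat\beta_\weight$ and $\frac{\sigma_\rct^2}{n_\rct}\wedge\overline\Delta^2 = \overline\Delta^2$; since $\weight\le 1$ and $|\Delta|\le\overline\Delta$ we have $\weight^2\Delta^2 \le \overline\Delta^2$, which closes the bound.

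The main obstacle is not the algebra but the passage from the $o_p(n^{-1/2})$ remainders in the linear expansions to a genuine statement about $\bbE[|\hat\beta - \beta^\star|^2]$, i.e.\ that these remainders and the induced cross terms are negligible \emph{in mean square} so that the leading-order variance/bias computations really equal the $\amse$ in the sense of the stated ratio tending to one. I would handle this by a uniform-integrability argument (or, equivalently, by reading $\amse$ as the leading term of the expansion, as the definition permits), leaning on the finite second and third moment conditions of Assumption \ref{assump:asymp_linear}.
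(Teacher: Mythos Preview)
Your proposal is correct and shares the paper's overall scaffold: compute $\amse$ for a general convex combination, then do the two-case check against $\overline\Delta$. The technical route differs slightly. Rather than computing moments of the i.i.d.\ averages directly, the paper invokes the Berry--Esseen theorem to show that the standardized statistic
\[
\frac{\hat\beta_{\tilde\omega}-\bigl((1-\tilde\omega)\beta^\star+\tilde\omega\beta_\obs\bigr)}{\sqrt{(1-\tilde\omega)^2\sigma_\rct^2/n_\rct+\tilde\omega^2\sigma_\obs^2/n_\obs}}
\]
is asymptotically $N(0,1)$ \emph{uniformly in $\tilde\omega\in[0,1]$}, and reads the normalized first and second moments off of that; the bias--fluctuation cross term is then handled explicitly via the elementary bound $2|ab|/(a^2+b^2)\le 1$ to show it is $o(1)\cdot\amse(\hat\beta_{\tilde\omega})$. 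Your head-on moment computation is arguably cleaner for the leading variance (the i.i.d.\ average has \emph{exact} variance $\sigma^2/n$, so no approximation is needed there), and you correctly flag the one genuine obstacle---promoting the $o_p$ remainders to $L^2$ negligibility---which the paper's argument also does not fully resolve without an implicit uniform-integrability step. The Berry--Esseen detour buys the paper uniformity over $\tilde\omega$, relevant because $\omega$ and the oracle's case switch both drift with $n$; in your approach this uniformity is implicit in the explicit variance formula, so nothing is lost.
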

\begin{proof}
    See Appendix \ref{prf:thm:oracle_estimator}.
\end{proof}

The choice of setting the threshold at $\sigma_\rct/\sqrt{n_\rct}$ in the oracle estimator is based on the following calculations. Note that in practice, the observational data are usually much more abundant than the RCT data, i.e., ${\sigma_\obs^2}/{n_{\obs}} \ll \sigma_\rct^2/n_\rct$. In this case, we have $\weight = 1 + o(1)$, and ${\sigma_\rct^2 \sigma_\obs^2}/({n_\rct \sigma_\obs^2 + n_{\obs} \sigma_\rct^2}) \ll {\sigma_\rct^2}/{n_\rct}$. 
Thus, we can simplify the formula for $\amse(\hat\beta_\weight)$ as $o(\sigma_\rct^2/n_\rct) + (1+o(1))\Delta^2$.
It follows that the RCT-only estimator would outperform the naively-pooled estimator (in terms of $\amse$) provided the magnitude of the bias exceeds $\sigma_\rct/\sqrt{n_\rct}$.
Such a choice of the threshold is also intuitive from the perspective of the classical bias-variance decomposition: incorporating $\hat\beta_\obs$ will only be useful if its bias does not exceed the standard deviation of $\hat\beta_\rct$.

Curious readers may wonder if the upper bound in \eqref{eq:amse_oracle_estimator}, achieved by the oracle estimator, is information-theoretically optimal among all estimators.
In the following, we describe a simple yet canonical data-generating process and show that $\hat\beta_\orac$ is indeed minimax rate optimal. 
Apart from being a model of what happens in practice, this data generating process serves as a theoretical benchmark that enables us to give a rigorous treatment of optimality. 

\begin{model}
\label{model}
Suppose the RCT data are generated as follows. For each $i\in\Rct$, we first generate a covariate vector $\bfX_i\sim \bbP_{X}$. Then, we generate the treatment $A_i \sim \Bern(\pi_\rct(\bfX_i))$ for some function $\pi_\rct(\bfx) \in (0, 1)$. 
Finally, we generate $Y_i(A_i) \sim N((A_i-1/2)\beta^\star + \bsgamma_\rct^\top \bfX_i, \sigma_\rct^2)$ where $\bsgamma_\rct$ is a vector with the same dimension as $\bfX_i$. 

Meanwhile, suppose the observational data are generated as follows. 
For each $i\in\Obs$, we first generate the treatment $A_i\sim \Bern(\pi_\obs)$ for some scalar $\pi_\obs\in(0, 1)$.
Then, we generate an unmeasured confounding variable $U_i\sim N(-(A_i-1/2)\Delta, \sigma_\obs^2/2)$ and a covariate vector $\bfX_i\sim \bbP_{X}$.
We finally generate $Y_i(A_i)\sim N((A_i-1/2)\beta^\star + U_i + \bsgamma_\obs^\top \bfX_i, \sigma_\obs^2/2)$, where $\bsgamma_\obs$ is a vector with the same dimension as $\bfX_i$. 
\end{model}

By the design of RCT (i.e., Assumption \ref{assump: design of RCT}), the observed responses in the RCT data satisfy 
$$
Y_i\mid A_i, \bfX_i \overset{i.i.d.}{\sim} N((A_i-1/2)\beta^\star + \bsgamma_\rct^\top \bfX_i, \sigma_\rct^2).
$$ 
From our discussion in Section \ref{sec: setup}, it is easy to construct an estimator $\hat\beta_\rct$ that is asymptotically normal with mean $\beta^\star$ and variance of order $\sigma_\rct^2/n_\rct$ (note that we do not need to weight by the participation probability as $\beta^\star_\rct = \beta^\star_\obs = \beta^\star$ in the above model).
Meanwhile, marginalizing over $U_i$, the observed responses in the observational data satisfy 
$$
Y_i\mid A_i, \bfX_i \overset{i.i.d.}{\sim} N((A_i-1/2)\beta_\obs + \bsgamma_\obs^\top \bfX_i , \sigma_\obs^2),
$$ 
where $\beta_\obs = \beta^\star - \Delta$. Thus, even if $\bsgamma_\obs$ is known, the best one can do is to produce an estimator $\hat\beta_\obs$ that is centered at $\beta_\obs$. 

Consider the following parameter space:
\begin{align}
    \label{eq:param_space}
    \calP_{\overline\Delta} = \{(\beta^\star, \beta_\obs): |\beta^\star - \beta_\obs|\leq \overline\Delta\}.
\end{align}
The following theorem gives the minimax lower bound under Model \ref{model}.

\begin{theorem}[Minimax Lower Bound]
\label{thm:lb}
Under Model \ref{model}, we have
\begin{equation*}
    \inf_{\hat \beta} \sup_{(\beta^\star, \beta_\obs)\in\calP_{\overline\Delta}} \bbE[|\hat\beta - \beta^\star|^2] \gtrsim \frac{\sigma_\rct^2 \sigma_\obs^2}{n_\rct \sigma_\obs^2 + n_{\obs} \sigma_\rct^2} + \frac{\sigma_\rct^2}{n_\rct} \land \overline\Delta^2,
\end{equation*}
where the infimum is taken over all measurable functions of the observed RCT and observational data.
\end{theorem}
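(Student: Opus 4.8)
The plan is to exploit the additive structure of the target bound: since $a+b \le 2(a\lor b)$, it suffices to lower bound the minimax risk, up to absolute constants, by each of the two summands separately and then combine. Both summands will come from Le Cam's two-point method, applied to a clean Gaussian location reduction of Model~\ref{model}. The first step is to record that reduction. Because $\bsgamma_\rct$ and $\bsgamma_\obs$ are fixed and known, I may pass to the shifted responses $Y_i - \bsgamma_\rct^\top\bfX_i$ and $Y_i - \bsgamma_\obs^\top\bfX_i$, so effectively $\bsgamma_\rct=\bsgamma_\obs=0$; and marginalizing out the unmeasured $U_i$ leaves the observational responses distributed as $N((A_i-1/2)\beta_\obs,\sigma_\obs^2)$. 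Thus $\beta^\star$ enters only the RCT likelihood and $\beta_\obs$ only the observational likelihood, the two being coupled \emph{solely} through the constraint $|\beta^\star-\beta_\obs|\le\overline\Delta$. A convenient fact I would use throughout is that $\bbE[(A-1/2)^2]=1/4$ for any $\Bern(\pi)$ variable, so the per-observation Fisher information for the location is $1/(4\sigma_\rct^2)$ in the RCT and $1/(4\sigma_\obs^2)$ in the observational data, irrespective of $\pi_\rct,\pi_\obs$.

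For the pooled-variance term, I would restrict attention to the diagonal sub-family $\beta_\obs=\beta^\star$, which lies in $\calP_{\overline\Delta}$ for every $\overline\Delta\ge 0$, and take the two points $(\beta^\star,\beta_\obs)=(0,0)$ and $(\delta_1,\delta_1)$. Since the common location now appears in \emph{both} likelihoods, the additivity of KL over the $n_\rct+n_\obs$ independent observations gives total divergence $\mathrm{KL}(P_0\|P_1)=\tfrac{\delta_1^2}{8}\big(n_\rct/\sigma_\rct^2+n_\obs/\sigma_\obs^2\big)$. Choosing $\delta_1^2\asymp\big(n_\rct/\sigma_\rct^2+n_\obs/\sigma_\obs^2\big)^{-1}$ keeps this KL, and hence (via Pinsker) the total variation distance, bounded away from one; Le Cam's inequality then lower bounds the minimax risk by a constant multiple of $\delta_1^2=\sigma_\rct^2\sigma_\obs^2/(n_\rct\sigma_\obs^2+n_\obs\sigma_\rct^2)$.

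For the bias term I would instead hold $\beta_\obs=0$ fixed and take the two points $(\beta^\star,\beta_\obs)=(s,0)$ and $(-s,0)$, which are feasible precisely when $s\le\overline\Delta$. Here the observational laws coincide and contribute nothing to the divergence, so only the RCT data differ, yielding $\mathrm{KL}(P_0\|P_1)=\tfrac{(2s)^2}{8}\,n_\rct/\sigma_\rct^2=s^2 n_\rct/(2\sigma_\rct^2)$. Taking $s\asymp\sqrt{\sigma_\rct^2/n_\rct}\land\overline\Delta$ simultaneously keeps the KL bounded and respects feasibility, while the separation in the estimand is $2s$; Le Cam then delivers a lower bound of order $s^2\asymp \sigma_\rct^2/n_\rct\land\overline\Delta^2$. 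Combining the two displays, the minimax risk is at least the maximum of the two lower bounds, and $a\lor b\ge\tfrac12(a+b)$ produces the claimed sum.

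The genuinely substantive point — and the thing to state carefully — is the decoupling established in the first paragraph: after marginalizing out the confounder, the observational data are informative about $\beta_\obs$ but carry \emph{no} direct information about $\beta^\star$, so even an arbitrarily precise estimate of $\beta_\obs$ leaves a residual $\overline\Delta$-ambiguity in $\beta^\star$ that only the RCT can resolve. This is exactly why the bias term is the RCT precision $\sigma_\rct^2/n_\rct$ capped at $\overline\Delta^2$, and it is what the second two-point construction is designed to expose. The remaining work is routine: checking that both constructions lie in $\calP_{\overline\Delta}$ and absorbing all multiplicative constants (including the factor relating the model's conditional variance to the influence-function scaling $\sigma_\rct^2$) into the $\gtrsim$.
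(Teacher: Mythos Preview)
Your proposal is correct and follows essentially the same route as the paper: two separate two-point lower bounds---one on the diagonal $\beta^\star=\beta_\obs$ to capture the pooled-variance term, and one varying only $\beta^\star$ with $\beta_\obs$ fixed to capture the bias term---combined via $a\lor b\asymp a+b$. The only cosmetic difference is that the paper computes the optimal testing errors explicitly via Neyman--Pearson (yielding exact $\Phi(\cdot)$ expressions), whereas you go through KL and Pinsker; both are standard and arrive at the same bound.
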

\begin{proof}
    See Appendix \ref{prf:thm:lb}.
\end{proof}

\begin{figure}[t]
    \centering
    \includegraphics[width = 0.49\textwidth]{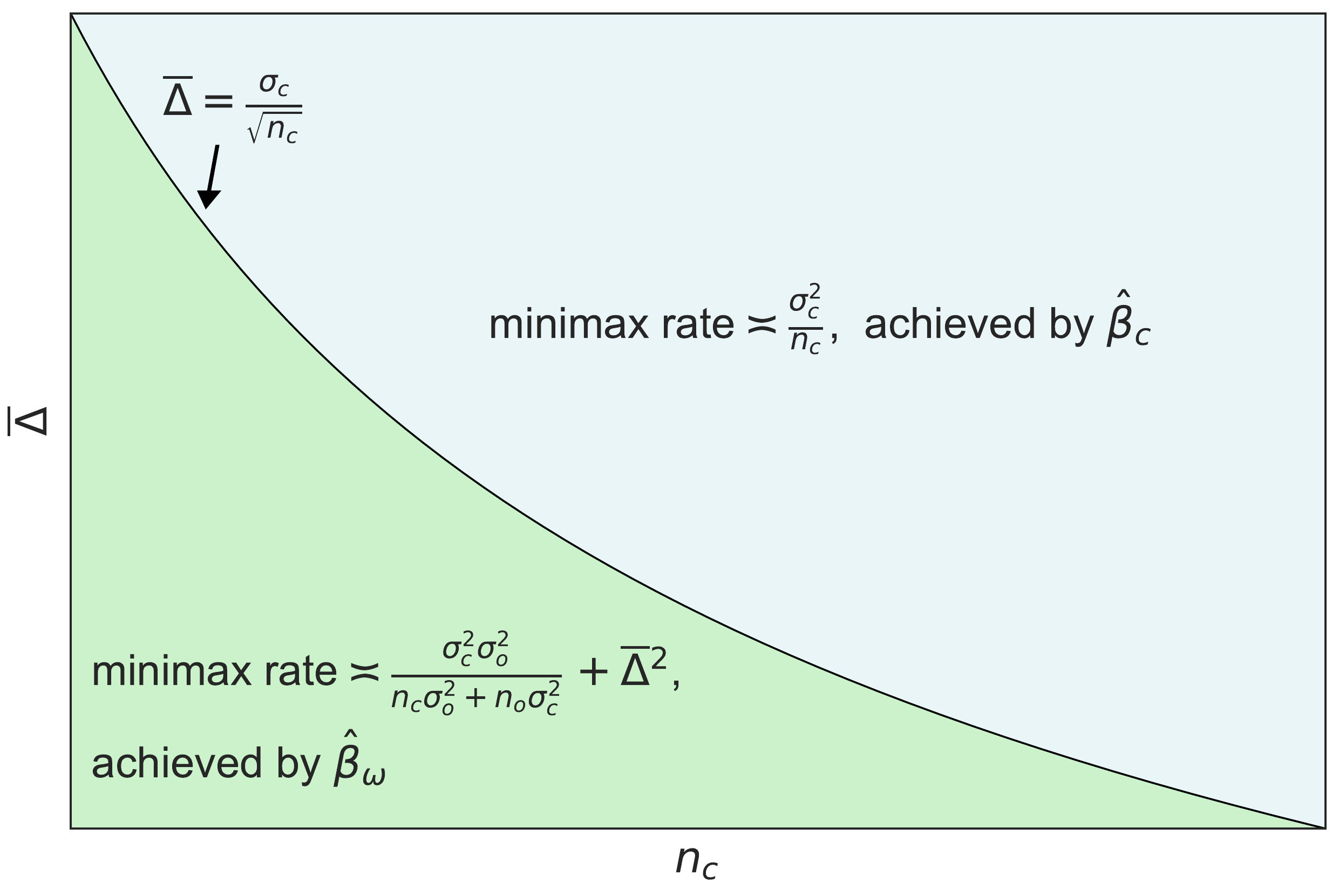}
    \caption{Phase transition for the minimax rate. When $\overline\Delta\gtrsim \sigma_\rct/\sqrt{n_\rct}$, the RCT-only estimator is minimax optimal; when $\overline\Delta\lesssim \sigma_\rct/\sqrt{n_\rct}$, the naively-pooled estimator is minimax optimal.} 
    \label{fig: phase_transition}
\end{figure}

The above theorem, along with Theorem \ref{thm:oracle_estimator}, confirms the minimax optimality of the oracle estimator $\hat\beta_\orac$. These two theorems together also reveal a curious phase transition phenomenon depicted in Figure \ref{fig: phase_transition}: when $\overline\Delta \gtrsim \sigma_\rct/\sqrt{n_\rct}$, then the minimax rate becomes $\Theta(\sigma_\rct^2/n_\rct)$, and such a rate is achieved by the RCT-only estimator $\hat\beta_\rct$;
when $\overline\Delta \lesssim \sigma_\rct/\sqrt{n_\rct}$, then the minimax rate becomes $\Theta({\sigma_\rct^2 \sigma_\obs^2}/({n_\rct \sigma_\obs^2 + n_{\obs} \sigma_\rct^2}) + \overline\Delta^2)$, and such a rate is achieved by the naively-pooled estimator $\hat\beta_\weight$.
A conceptually similar phase transition phenomenon has been identified in other statistical problems that involve multiple data sources \citep{chen2021theorem}.

\subsection{Adaptation via {Anchored} Thresholding}\label{subsec: adaptive_procedure}
In this section, we design a fully adaptive estimator that does not need prior knowledge of $\overline\Delta$, and we prove that this estimator achieves nearly the same performance as the oracle estimator $\hat\beta_{\orac}$, up to poly-log factors. 

If we can estimate the bias $\Delta$ sufficiently well by some $\hat\Delta$, then we can ``de-bias'' the observational-data-only estimator by computing $\hat\beta_\obs + \hat\Delta$. Since the resulting estimator is approximately unbiased, we may aggregate $\hat\beta_\rct$ and $\hat\beta_\obs + \hat\Delta$ by forming
$
    (1 - \weight) \hat\beta_\rct + \weight (\hat \beta_\obs + \hat\Delta),
$
similar to our construction of the naively-pooled estimator in \eqref{eq:naive_estimator}.

There are two key issues in the above construction. First, since the variances are usually unknown in practice, we need to replace the weight $\omega$ by its estimated counterpart
$$
    \hat \weight = \frac{\hat \sigma_\rct^2/n_\rct}{\hat \sigma_\rct^2/n_\rct + \sigma_\obs^2/n_{\obs}},
$$
where $\hat\sigma_\rct, \hat\sigma_\obs$ are estimators of $\sigma_\rct, \sigma_\obs$, respectively. The other, and perhaps more important issue, lies in the choice of $\hat\Delta$. Since $\hat\beta_\rct$ and $\hat\beta_\obs$ are respectively asymptotically unbiased for $\beta^\star$ and $\beta_\obs$, a tempting choice of $\hat\Delta$ is
$$
    \tilde\Delta = \hat\beta_\rct - \hat\beta_\obs.
$$
However, a careful thought reveals that $\tilde\Delta$ is not sufficient for our purpose of obtaining a better estimator than $\hat\beta_\rct$: if we use $\hat\Delta = \tilde \Delta$, then we would get
$$
    (1-\weight) \hat\beta_\rct + \weight (\hat\beta_\obs + \tilde\Delta) = (1-\weight) \hat\beta_\rct + \weight \hat\beta_\rct = \hat\beta_\rct.
$$
In other words, the accuracy of $\tilde\Delta$ is not enough to provide any efficiency gain against the RCT-only estimator $\hat\beta_\rct$. Clearly, more work is needed.


To proceed, we make two observations. First, we know that $\tilde \Delta$, after proper scaling, is asymptotically Gaussian as $n_\rct \land n_{\obs}\to\infty$, and the asymptotic mean and variance of $\tilde \Delta$ are given by $\Delta$ and ${\sigma_\rct^2}/{n_\rct} + {\sigma_\obs^2}/{n_{\obs}}$, respectively. A standard concentration of measure argument gives that $|\Delta - \tilde \Delta| \lesssim \lambda \sqrt{({\sigma_\rct^2}/{n_\rct} + {\sigma_\obs^2}/{n_{\obs}})}$ with probability tending to one, as long as $\lambda$ is a diverging sequence as $n_\rct\land n_{\obs}\to\infty$ \citep{boucheron2013concentration}.
Second, we know that the observational data 
are {useful only} when the bias $\Delta$ is small. These two observations motivate the following procedure:

\begin{equation}
    \label{eq:est_delta}
    \hat \Delta_\lambda = \argmin_{\delta} |\delta| \qquad \textnormal{subject to } |\delta - \tilde \Delta| \leq 
    \lambda \cdot 
    \sqrt{\frac{\hat\sigma_\rct^2}{n_\rct} + \frac{\hat\sigma_\obs^2}{n_{\obs}}},
\end{equation} 
where $\lambda > 0$ is a hyperparameter. Intuitively, the above procedure {selects an estimator of the bias $\Delta$} with the minimum absolute value (based on our second observation), among all ``plausible'' candidates, where the ``plausibility'' is quantified by our first observation.

Such a procedure is reminiscent of many well-known methods in high-dimensional statistics, such as the Dantzig selector in sparse linear models \citep{candes2007dantzig} and nuclear norm minimization in matrix completion \citep{candes2009exact,candes2010power}, where one seeks for an estimator that minimizes certain complexity measures (in our case, the absolute value) subject to a constraint that ensures the error control.

Note that \eqref{eq:est_delta} can be solved via soft-thresholding: 
$$
    \hat \Delta_\lambda = 
    \begin{cases}
        \textnormal{sgn}(\tilde \Delta)\Big(|\tilde \Delta| - \lambda \cdot
        \sqrt{\frac{\hat\sigma_\rct^2}{n_\rct} + \frac{\hat\sigma_\obs^2}{n_{\obs}}}\Big)
        & \textnormal{if } 
        |\tilde \Delta| \geq \lambda \cdot
        \sqrt{\frac{\hat\sigma_\rct^2}{n_\rct} + \frac{\hat\sigma_\obs^2}{n_{\obs}}},\\
        0
        & \textnormal{otherwise}.
    \end{cases}
$$
Thus, one can alternatively view \eqref{eq:est_delta} as a LASSO-type estimator \citep{tibshirani1996regression} {as it searches for the $\delta$} that minimizes $|\tilde \Delta - \delta|^2 + \lambda' |\delta|$ for a particular choice of $\lambda'$.

Once $\hat\Delta_\lambda$ is obtained, the final estimator for $\beta^\star$ is given by 
\begin{equation*}
    \hat \beta_\lambda = (1-\hat\weight) \hat\beta_\rct + \hat\weight (\hat\beta_\obs + \hat\Delta_\lambda).
\end{equation*}
{Because this estimator is anchored at the RCT-only estimator $\hat\beta_\rct$ and integrates a de-biased observational-data-only estimator obtained from soft-thresholding, we name $\hat\beta_\lambda$ the \emph{anchored thresholding} estimator.} 
The following theorem characterizes the performance of $\hat\beta_\lambda$.

\begin{theorem}[Performance of the Anchored Thresholding Estimator]
\label{thm:high_prob_bound_adaptive_estimator}
Let Assumption \ref{assump:asymp_linear} hold. In addition, assume $\sigma_\rct^2/n_\rct\gtrsim \sigma_\obs^2/n_{\obs}$
and $\hat\sigma_\rct \overset{p}{\to} \sigma_\rct, \hat\sigma_\obs \overset{p}{\to} \sigma_\obs$ as $n_\rct \land n_\obs \to \infty$. 
As long as we choose $\lambda \gtrsim 1$, there exists an absolute constant $\sfc > 0$ such that with probability at least $1- o(1) - e^{-\sfc \lambda^2}$, we have
\begin{equation*}
     |{\hat\beta_\lambda} - \beta^\star|^2 \lesssim  \lambda^2 \cdot \bigg(\frac{\sigma_\rct^2 \sigma_\obs^2}{n_\rct \sigma_\obs^2 + n_{\obs} \sigma_\rct^2} + \frac{\sigma_\rct^2}{n_\rct} \land \overline{\Delta}^2\bigg).
\end{equation*} 
\end{theorem}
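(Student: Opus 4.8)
The plan is to reduce the claim to a clean exact decomposition of the error and then control it on a single high-probability event. Write $e_\rct := \hat\beta_\rct - \beta^\star$ and $e_\obs := \hat\beta_\obs - \beta_\obs$, so that $\tilde\Delta = \hat\beta_\rct - \hat\beta_\obs = \Delta + (e_\rct - e_\obs)$. Substituting $e_\obs = e_\rct + \Delta - \tilde\Delta$ into the definition of $\hat\beta_\lambda$ collapses everything into the identity
\[
\hat\beta_\lambda - \beta^\star = e_\rct + \hat\weight\,(\hat\Delta_\lambda - \tilde\Delta),
\qquad
\hat\Delta_\lambda - \tilde\Delta = -\mathrm{sgn}(\tilde\Delta)\min\{|\tilde\Delta|,\tau\},
\quad \tau := \lambda\hat s,
\]
where $\hat s := (\hat\sigma_\rct^2/n_\rct + \hat\sigma_\obs^2/n_\obs)^{1/2}$. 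This makes the behavior dichotomous: when the soft-threshold zeros out ($|\tilde\Delta|\leq\tau$, call it \emph{Case A}) the identity reduces to $\hat\beta_\lambda - \beta^\star = (1-\hat\weight)e_\rct + \hat\weight e_\obs - \hat\weight\Delta$, i.e.\ $\hat\beta_\lambda$ is exactly the naively-pooled estimator; when it does not (\emph{Case B}), $\hat\beta_\lambda - \beta^\star = e_\rct - \hat\weight\,\mathrm{sgn}(\tilde\Delta)\,\tau$. I would bound each case separately.

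Next I would construct the good event. Under Assumption \ref{assump:asymp_linear}, $\sqrt{n_\rct}e_\rct/\sigma_\rct$, $\sqrt{n_\obs}e_\obs/\sigma_\obs$, and $z/s$ (with $z := e_\rct - e_\obs$ and $s := (\sigma_\rct^2/n_\rct + \sigma_\obs^2/n_\obs)^{1/2}$) are each a sum of independent mean-zero terms with finite third moment plus an $o_p(1)$ remainder, so the Berry--Esseen theorem gives Gaussian tails: each of $\{|e_\rct|\leq\lambda\sigma_\rct/\sqrt{n_\rct}\}$, $\{|e_\obs|\leq\lambda\sigma_\obs/\sqrt{n_\obs}\}$, $\{|z|\leq\tfrac12\lambda s\}$ holds with probability at least $1 - o(1) - e^{-\mathsf{c}\lambda^2}$, the $o(1)$ absorbing the expansion remainders. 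Consistency $\hat\sigma_\rct\overset{p}{\to}\sigma_\rct$, $\hat\sigma_\obs\overset{p}{\to}\sigma_\obs$ gives, with probability $1-o(1)$, both $\hat s\asymp s$ and $\hat\weight/\weight\to1$, $(1-\hat\weight)/(1-\weight)\to1$; combined with $\sigma_\rct^2/n_\rct\gtrsim\sigma_\obs^2/n_\obs$ this yields $s^2\asymp\sigma_\rct^2/n_\rct$. Let $E$ be the intersection, so $\bbP(E)\geq 1 - o(1) - e^{-\mathsf{c}\lambda^2}$; I work on $E$ and abbreviate $V := \sigma_\rct^2\sigma_\obs^2/(n_\rct\sigma_\obs^2 + n_\obs\sigma_\rct^2)$, recording the identities $V = \weight\,\sigma_\obs^2/n_\obs = (1-\weight)\,\sigma_\rct^2/n_\rct$.

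For the case analysis, in Case A I bound the stochastic part by $(1-\hat\weight)|e_\rct| + \hat\weight|e_\obs|$ and use the weight ratios together with the two identities for $V$ to get $((1-\hat\weight)|e_\rct| + \hat\weight|e_\obs|)^2 \lesssim (1-\weight)\lambda^2 V + \weight\lambda^2 V \lesssim \lambda^2 V$; for the bias, the zeroing constraint $|\tilde\Delta|\leq\tau\lesssim\lambda s$ with $|z|\leq\tfrac12\lambda s$ forces $|\Delta|\lesssim\lambda\sigma_\rct/\sqrt{n_\rct}$, while trivially $|\Delta|\leq\overline\Delta$, so $\hat\weight^2\Delta^2\leq\min\{C\lambda^2\sigma_\rct^2/n_\rct,\overline\Delta^2\}\lesssim\lambda^2(\sigma_\rct^2/n_\rct\land\overline\Delta^2)$. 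Hence Case A already delivers the target $\lesssim\lambda^2(V + \sigma_\rct^2/n_\rct\land\overline\Delta^2)$. In Case B, $|e_\rct|\lesssim\lambda\sigma_\rct/\sqrt{n_\rct}$ and $\hat\weight\tau\lesssim\weight\lambda s$ with $\weight^2 s^2 = \weight^2(\sigma_\rct^2/n_\rct + \sigma_\obs^2/n_\obs)\leq\sigma_\rct^2/n_\rct + V\lesssim\sigma_\rct^2/n_\rct$, giving only the cruder bound $|\hat\beta_\lambda - \beta^\star|^2\lesssim\lambda^2\sigma_\rct^2/n_\rct$.

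Finally I combine the cases according to the size of $\overline\Delta$, which is where the difficulty concentrates. If $\overline\Delta\geq\sigma_\rct/\sqrt{n_\rct}$ then $\sigma_\rct^2/n_\rct\land\overline\Delta^2 = \sigma_\rct^2/n_\rct$, so the crude Case-B bound is itself $\lesssim\lambda^2(V + \sigma_\rct^2/n_\rct\land\overline\Delta^2)$ and both cases are fine. The delicate regime is $\overline\Delta < \sigma_\rct/\sqrt{n_\rct}$, where the Case-B bound is too large; the resolution is to exclude Case B probabilistically. On $E$ we have $|\tilde\Delta|\leq\overline\Delta + |z|\leq\sigma_\rct/\sqrt{n_\rct} + \tfrac12\lambda s$, which lies strictly below $\tau = \lambda\hat s\gtrsim\lambda s\gtrsim\lambda\sigma_\rct/\sqrt{n_\rct}$ once $\lambda$ exceeds an absolute constant (permitted since $\lambda\gtrsim1$); thus Case B simply cannot occur on $E$, and only the Case-A bound applies, again matching the target. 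The main obstacle is precisely this borderline regime in which $\overline\Delta$ is comparable to the noise level and the thresholding decision is random: a worst-case bound over the thresholding event loses the adaptivity, and the fix is to set the concentration radius ($\tfrac12\lambda s$) strictly below the threshold ($\lambda\hat s$) so that, when $\overline\Delta$ is small, the failure of thresholding to zero out is as rare as the $e^{-\mathsf{c}\lambda^2}$ tail and can be folded into the stated failure probability.
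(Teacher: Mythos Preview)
Your argument is correct, but it takes a genuinely different route from the paper's proof, and the comparison is instructive.

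The paper decomposes around $\hat\omega(\hat\Delta_\lambda - \Delta)$ rather than $\hat\omega(\hat\Delta_\lambda - \tilde\Delta)$. Its key step is to observe that on the good event the true bias $\Delta$ is \emph{feasible} for the constrained minimization defining $\hat\Delta_\lambda$; since $\hat\Delta_\lambda$ minimizes $|\delta|$ over feasible $\delta$, this yields $|\hat\Delta_\lambda|\leq|\Delta|$ and hence $|\hat\Delta_\lambda-\Delta|\leq 2|\Delta|$. Combined with the trivial constraint bound $|\hat\Delta_\lambda-\Delta|\leq 2\lambda\hat s$, this gives the single uniform estimate $|\hat\Delta_\lambda-\Delta|\lesssim(\lambda s)\wedge|\Delta|$, from which the theorem follows directly via the decomposition $|\hat\beta_\lambda-\beta^\star|^2\lesssim |\tilde\beta-((1-\hat\omega)\beta^\star+\hat\omega\beta_\obs)|^2+\hat\omega^2|\hat\Delta_\lambda-\Delta|^2$, with no case split on $\overline\Delta$ and no branch on the thresholding event.

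Your identity $\hat\beta_\lambda-\beta^\star=e_\rct+\hat\omega(\hat\Delta_\lambda-\tilde\Delta)$ and the explicit soft-threshold formula make the two regimes of the estimator (naively-pooled vs.\ near-RCT-only) completely transparent, which is a nice structural insight the paper's proof does not surface. The cost is the extra work: your Case B only delivers $\lambda^2\sigma_\rct^2/n_\rct$, so in the small-$\overline\Delta$ regime you must argue Case B cannot occur on $E$, which requires coordinating the constant $\tfrac12$ in $|z|\leq\tfrac12\lambda s$ against the constant in $\hat s\gtrsim s$, and then absorbing a specific lower bound on $\lambda$ into the hypothesis $\lambda\gtrsim1$. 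The paper's feasibility trick sidesteps all of this and is constant-robust: any $\lambda\gtrsim1$ works with no interlocking constants to track.
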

\begin{proof}
    See Appendix \ref{prf:thm:high_prob_bound_adaptive_estimator}.
\end{proof}

By the above theorem, if we choose $\lambda \asymp \sqrt{\log (n_\rct\land n_{\obs})}$, then we conclude that with probability tending to one, the adaptive estimator $\hat\beta_\lambda$ achieves nearly the same performance as the oracle estimator, up to poly-log factors.
The $o(1)$ term in the high probability statement depends on the rate of convergence of $\hat\sigma_\rct$ and $\hat\sigma_\obs$, as well as the two $o_p(1)$ terms appeared in the asymptotic linear expansion of $\hat\beta_\rct$ and $\hat\beta_\obs$.

\section{Statistical Inference}\label{sec:inference}

\subsection{Minimax Rate for the Length of Confidence Intervals}
Under Assumption \ref{assump:asymp_linear}, we can always construct a valid CI for $\beta^\star$ using the RCT-only estimator $\hat\beta_\rct$, and its length is of order $\sigma_\rct/\sqrt{n_\rct}$.
If $n_\obs \gg n_\rct$, is it possible to construct a CI that is \emph{order-wise} better than the naive construction?
To answer this question in a rigorous fashion, we adopt the minimax framework pioneered by \cite{cai2004adaptation} and \cite{cai2017confidence}. 
We again use Model \ref{model} as our working model to benchmark the information-theoretic limit of CI construction.

Recall the parameter $\calP_{\overline\Delta}$ specified in \eqref{eq:param_space}. Let $\calI_\alpha(\calP_{\overline{\Delta}})$ be the set of all level $(1-\alpha)$ CIs that are uniformly valid over $\calP_{\overline{\Delta}}$:
$$
    \calI_\alpha(\calP_{\overline\Delta}) = \bigg\{[L(\calD_\Rct, \calD_\Obs)), U(\calD_\Rct, \calD_\Obs) ]: \bbP_{(\beta^\star, \beta_\obs)}(L \leq \beta^\star \leq  U)\geq 1-\alpha , \forall (\beta^\star, \beta_\obs) \in \calP_{\overline{\Delta}}\bigg\}.
$$
In the above display, $L$ and $U$ are two measurable functions of both the RCT data $\calD_\Rct$ and the observational data $\calD_\Obs$.
For any $\ci_\alpha = [L, U]\in \calI_\alpha(\calP_{\overline\Delta})$, its length is 
$
    \len(\ci_\alpha) = U - L.
$
The \emph{minimax length} of CIs over $\calP_{\overline\Delta}$ is defined as
$$
    \len_\alpha^\star(\calP_{\overline\Delta}) = \inf_{\ci_\alpha\in \calI_\alpha(\calP_{\overline\Delta})} \sup_{(\beta^\star, \beta_\obs)\in\calP_{\overline\Delta}} \bbE[\len(\ci_\alpha)].
$$
Analogous to the notion of minimax rate {for the mean squared error}, the minimax length captures the worst-case performance of CI construction over a pre-specified parameter space.

The following theorem characterizes the rate of convergence for the minimax length.
\begin{theorem}[{Minimax Rate of CI Length}]
\label{thm:minimax_length}
Under Model \ref{model}, we have
\begin{equation}
    \label{eq:minimax_length}
    \len_\alpha^\star(\calP_{\overline\Delta}) \gtrsim 
    \frac{\sigma_\rct\sigma_\obs}{\sqrt{n_\rct\sigma_\obs^2 + n_\obs\sigma_\rct^2}} + \frac{\sigma_\rct}{\sqrt{n_\rct}} \land \overline\Delta,
\end{equation} 
provided $\alpha \leq \Phi(-1/4)-\ep$ for some absolute constant $\ep\in(0, \Phi(-1/4))$.
\end{theorem}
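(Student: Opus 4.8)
The plan is to reduce the problem, via ancillarity and sufficiency, to a two-sample Gaussian location problem, and then to lower bound the minimax length through a two-point testing argument. Under Model~\ref{model} the pairs $\{(\bfX_i, A_i)\}$ have a distribution not depending on $(\beta^\star, \beta_\obs)$, so they are ancillary and I may condition on them; moreover $(A_i - 1/2)^2 = 1/4$ deterministically and the offsets $\bsgamma_\rct^\top\bfX_i$, $\bsgamma_\obs^\top\bfX_i$ are known and can be subtracted (for a valid lower bound it suffices to restrict to the submodel with these fixed). Conditionally, the RCT responses then admit a one-dimensional sufficient statistic $T_\rct \sim N(\beta^\star, v_\rct)$ with $v_\rct = 4\sigma_\rct^2/n_\rct$, while (after marginalizing over the unmeasured $U_i$) the observational responses admit an independent $T_\obs \sim N(\beta_\obs, v_\obs)$ with $v_\obs = 4\sigma_\obs^2/n_\obs$. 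Because total variation is preserved under sufficiency and the ancillary part is shared, it suffices to work with $(T_\rct, T_\obs)$; note $\sqrt{v_\rct}\asymp \sigma_\rct/\sqrt{n_\rct}$ and $(1/v_\rct + 1/v_\obs)^{-1/2}\asymp \sigma_\rct\sigma_\obs/\sqrt{n_\rct\sigma_\obs^2 + n_\obs\sigma_\rct^2}$, which are exactly the two terms on the right-hand side of \eqref{eq:minimax_length}.

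The engine is a two-point bound for CI length: if $P_0, P_1$ are two data laws with targets $b_0 < b_1$, and $\ci_\alpha = [L,U]\in\calI_\alpha(\calP_{\overline\Delta})$ covers the target with probability at least $1-\alpha$ under each, then $U - L \geq (b_1-b_0)\mathbf{1}\{L\leq b_0, U\geq b_1\}$, and a Bonferroni step together with transferring coverage of $b_1$ from $P_1$ to $P_0$ (using $P_0(E)\geq P_1(E) - \mathsf{TV}(P_0,P_1)$) gives $P_0(L\leq b_0, U\geq b_1)\geq 1 - 2\alpha - \mathsf{TV}(P_0,P_1)$. Hence $\bbE_0[\len(\ci_\alpha)] \geq (b_1-b_0)\,(1 - 2\alpha - \mathsf{TV}(P_0,P_1))$, so $\len^\star_\alpha(\calP_{\overline\Delta})$ is bounded below by $(b_1-b_0)$ times a positive constant whenever the two configurations lie in $\calP_{\overline\Delta}$ and their data laws have total variation bounded away from $1-2\alpha$.

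For the variance term I take both configurations with $\Delta = 0$ (hence in $\calP_{\overline\Delta}$) and $\beta^\star \in \{0, d'\}$ with $d' = \tfrac12 (1/v_\rct + 1/v_\obs)^{-1/2}$. The precision-weighted statistic $\bar T \sim N(\beta^\star, (1/v_\rct+1/v_\obs)^{-1})$ is sufficient, so $\mathsf{TV}(P_0,P_1) = 2\Phi\big(d'/(2(1/v_\rct+1/v_\obs)^{-1/2})\big) - 1 = 2\Phi(1/4) - 1$. For the bias term, which is the crux, I exploit the unmeasured confounding to move $\beta^\star$ while leaving the \emph{observational} marginal untouched: take configuration $0$ with $(\beta^\star,\beta_\obs) = (0,0)$ and configuration $1$ with $\beta^\star = d = \tfrac12\sqrt{v_\rct}\land\overline\Delta$ and $\beta_\obs = 0$, so $\Delta = d\leq\overline\Delta$ keeps configuration $1$ in $\calP_{\overline\Delta}$. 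Then $T_\obs$ has the same law in both, and $P_0,P_1$ differ only through $T_\rct \sim N(0,v_\rct)$ versus $N(d, v_\rct)$; since $d\leq\sqrt{v_\rct}/2$ this again gives $\mathsf{TV}(P_0,P_1)\leq 2\Phi(1/4)-1$. In both constructions, $\alpha\leq\Phi(-1/4)-\ep$ yields $1 - 2\alpha - \mathsf{TV}(P_0,P_1) \geq 2 - 2(\Phi(-1/4)+\Phi(1/4)) + 2\ep = 2\ep$, using $\Phi(-1/4)+\Phi(1/4)=1$; this is precisely why the threshold $\Phi(-1/4)$ appears. Thus each construction forces $\len^\star_\alpha(\calP_{\overline\Delta})\gtrsim d'$ and $\gtrsim d \gtrsim \sigma_\rct/\sqrt{n_\rct}\land\overline\Delta$ respectively, and taking the larger of the two lower bounds — which dominates half their sum — yields \eqref{eq:minimax_length}.

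The main obstacle, and the step warranting the most care, is the bias-term construction: I must check that the freedom in $\Delta$ allows $\beta^\star$ to vary by a full half-standard-deviation $\sqrt{v_\rct}/2$ of the RCT statistic \emph{without} changing the observational data law, and then calibrate this shift so the induced Gaussian total variation combines with the bound on $\alpha$ to keep $1-2\alpha-\mathsf{TV}$ strictly positive. By contrast, the reduction by ancillarity and sufficiency, the computation of the Gaussian total variation $2\Phi(\cdot)-1$, and the final combination of the two regimes are routine.
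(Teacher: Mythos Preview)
Your proposal is correct and follows essentially the same approach as the paper: a two-point CI-length lower bound in the spirit of \cite{cai2017confidence} applied to two parameter configurations, one with $\beta_\obs$ fixed and $\beta^\star$ shifted (your bias term; the paper's Cases A/B) and one with $\beta^\star=\beta_\obs$ shifted together (your variance term; the paper's Case C), each yielding a total-variation bound of $2\Phi(1/4)-1$ so that the assumption $\alpha\leq\Phi(-1/4)-\ep$ makes $1-2\alpha-\mathsf{TV}\geq 2\ep>0$. The only cosmetic differences are that you first reduce to the sufficient statistics $(T_\rct,T_\obs)$ and handle the bias regime in one shot via $d=\tfrac12\sqrt{v_\rct}\wedge\overline\Delta$, whereas the paper conditions on $\{\bfX_i,A_i\}$, works with the full Gaussian likelihood ratio, and splits that regime into two subcases.
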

\begin{proof}
    This is a special case of Theorem \ref{thm:minimax_adaptive_length} proved in Appendix \ref{prf:thm:minimax_adaptive_length}.
\end{proof}

The above result should not come as a surprise, as the right-hand side of \eqref{eq:minimax_length} is precisely the square-root of the minimax rate for the mean squared error given in Theorems \ref{thm:oracle_estimator} and \ref{thm:lb}.

Similar to the oracle estimator introduced in \eqref{eq:oracle_estimator}, we can construct an oracle CI that achieves the above lower bound. In particular, we define 
\begin{align}
    \label{eq:oracle_ci_L}
    L_\alpha & = 
    \begin{cases}
        \hat\beta_\rct - \Phi^{-1}(1-\frac{\alpha}{2}) \cdot \frac{\sigma_\rct}{\sqrt{n_\rct}}
        & \textnormal{ if } \overline\Delta \geq \sigma_\rct/\sqrt{n_\rct}\\
        \hat\beta_\weight - \Phi^{-1}(1-\frac{\alpha}{2})\cdot \frac{\sigma_\rct\sigma_\obs}{\sqrt{n_\rct\sigma_\obs^2 + n_{\obs}\sigma_\rct^2}} -\overline\Delta
        & \textnormal{ otherwise},
    \end{cases}\\
    \label{eq:oracle_ci_U}
    U_\alpha & = 
    \begin{cases}
        \hat\beta_\rct + \Phi^{-1}(1-\frac{\alpha}{2}) \cdot \frac{\sigma_\rct}{\sqrt{n_\rct}}
        & \textnormal{ if } \overline\Delta \geq \sigma_\rct/\sqrt{n_\rct}\\
        \hat\beta_\weight + \Phi^{-1}(1-\frac{\alpha}{2})\cdot \frac{\sigma_\rct\sigma_\obs}{\sqrt{n_\rct\sigma_\obs^2 + n_{\obs}\sigma_\rct^2}} + \overline\Delta
        & \textnormal{ otherwise}.
    \end{cases}
\end{align}
That is, when $\overline\Delta \geq \sigma_\rct/\sqrt{n_\rct}$, we construct the CI based on the RCT-only estimator $\hat\beta_\rct$, whereas when $\overline\Delta < \sigma_\rct/\sqrt{n_\rct}$, we first construct the CI based on the naively-pooled estimator and then enlarge it by $\overline\Delta$. The properties of this CI are given below.

\begin{theorem}[Oracle CI]
\label{thm:oracle_ci}
Consider the CI defined in \eqref{eq:oracle_ci_L} and \eqref{eq:oracle_ci_U}.
Under Assumption \ref{assump:asymp_linear}, for any $\alpha\in(0, 1)$, we have
$$
    \bbP(\beta^\star \in [L_\alpha, U_\alpha]) \geq 1-\alpha - o(1)
$$
as $n_\rct \land n_{\obs}\to \infty$. 
If 
$\alpha \geq \ep$ for some absolute constant $\ep\in(0, 1)$, then 
$$
\bbE[\len([L_\alpha, U_\alpha])]\lesssim \frac{\sigma_\rct\sigma_\obs}{\sqrt{n_\rct\sigma_\obs^2 + n_\obs\sigma_\rct^2}} + \frac{\sigma_\rct}{\sqrt{n_\rct}} \land \overline\Delta.
$$
\end{theorem}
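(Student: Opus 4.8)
The plan is to handle the two assertions—coverage and length—separately, and within each to split along the oracle's two regimes, $\overline\Delta \ge \sigma_\rct/\sqrt{n_\rct}$ and $\overline\Delta < \sigma_\rct/\sqrt{n_\rct}$. A useful preliminary observation is that in both regimes the width $U_\alpha - L_\alpha$ is a \emph{deterministic} quantity, so $\bbE[\len([L_\alpha, U_\alpha])] = U_\alpha - L_\alpha$ and the length claim reduces to a purely algebraic comparison; all the probabilistic content lives in the coverage claim.

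For coverage, the first regime is immediate: there $[L_\alpha, U_\alpha]$ is the Wald interval centered at $\hat\beta_\rct$, and Assumption \ref{assump:asymp_linear} gives that $\sqrt{n_\rct}(\hat\beta_\rct - \beta^\star)/\sigma_\rct$ is asymptotically standard normal, so the coverage converges to $1-\alpha$ with the gap absorbed into the $o(1)$ term. The substance lies in the second regime. I would decompose $\hat\beta_\weight - \beta^\star = Z - \weight\Delta$, where $Z := (1-\weight)(\hat\beta_\rct - \beta^\star) + \weight(\hat\beta_\obs - \beta_\obs)$ is the centered stochastic part and $-\weight\Delta$ is a deterministic bias. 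A direct computation using $\weight = (\sigma_\rct^2/n_\rct)/(\sigma_\rct^2/n_\rct + \sigma_\obs^2/n_\obs)$ shows that $Z$ has variance exactly $\tau^2 := \sigma_\rct^2\sigma_\obs^2/(n_\rct\sigma_\obs^2 + n_\obs\sigma_\rct^2)$, matching the scale of the half-width. Since $0 \le \weight \le 1$ and $|\Delta| \le \overline\Delta$, the bias obeys $|\weight\Delta| \le \overline\Delta$, whence $|\hat\beta_\weight - \beta^\star| \le |Z| + \overline\Delta$ and the event $\{|Z| \le \Phi^{-1}(1-\alpha/2)\,\tau\}$ is contained in $\{\beta^\star \in [L_\alpha, U_\alpha]\}$. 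The key point is that the deterministic enlargement by $\overline\Delta$ exactly offsets the worst-case bias, so coverage follows once $Z/\tau$ is shown to be asymptotically standard normal.

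Establishing this last central limit theorem is the main technical obstacle, since the weights $\weight$ and the normalizer $\tau$ both depend on $(n_\rct, n_\obs)$. I would treat $Z$ as a sum of independent mean-zero contributions across the disjoint (hence independent) RCT and observational samples and apply a Lyapunov/Berry--Esseen argument, using the finite third moments $\rho_\rct, \rho_\obs$ from Assumption \ref{assump:asymp_linear}; the Lyapunov ratio is controlled by quantities of order $\rho_\rct/(\sigma_\rct^3\sqrt{n_\rct})$ and $\rho_\obs/(\sigma_\obs^3\sqrt{n_\obs})$, both vanishing as $n_\rct\land n_\obs\to\infty$. I would separately verify that the two $o_p(1)$ remainders in the asymptotic linear expansions are $o_p(\tau)$: after dividing by $\tau$, their coefficients $(1-\weight)/\sqrt{n_\rct}$ and $\weight/\sqrt{n_\obs}$ are $O(1/\sigma_\rct)$ and $O(1/\sigma_\obs)$, so they do not affect the limit. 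Collecting these facts gives coverage $\ge 1-\alpha-o(1)$ in both regimes.

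For the length bound, I would first replace $\Phi^{-1}(1-\alpha/2)$ by the constant $\Phi^{-1}(1-\ep/2)$ using $\alpha \ge \ep$; this is exactly where the restriction on $\alpha$ enters, as the half-width scales linearly in this quantile and would otherwise blow up as $\alpha\to 0$. In the first regime the width equals $2\Phi^{-1}(1-\alpha/2)\,\sigma_\rct/\sqrt{n_\rct}$, while $\overline\Delta \ge \sigma_\rct/\sqrt{n_\rct}$ forces the target right-hand side to be of order $\sigma_\rct/\sqrt{n_\rct}$ (using $\tau \le \sigma_\rct/\sqrt{n_\rct}$ always), so the two agree up to a constant. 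In the second regime the width equals $2\Phi^{-1}(1-\alpha/2)\,\tau + 2\overline\Delta$, and since $\overline\Delta < \sigma_\rct/\sqrt{n_\rct}$ gives $(\sigma_\rct/\sqrt{n_\rct})\land\overline\Delta = \overline\Delta$, this is precisely of the order $\tau + \overline\Delta$ claimed. Combining the two regimes yields $\bbE[\len([L_\alpha,U_\alpha])] \lesssim \sigma_\rct\sigma_\obs/\sqrt{n_\rct\sigma_\obs^2 + n_\obs\sigma_\rct^2} + (\sigma_\rct/\sqrt{n_\rct})\land\overline\Delta$, completing the proof.
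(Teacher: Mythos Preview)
Your proposal is correct and follows essentially the same route as the paper's proof: split into the two regimes, invoke the asymptotic normality of $\hat\beta_\rct$ in the first and of the centered combination $\hat\beta_\weight - [(1-\weight)\beta^\star + \weight\beta_\obs]$ (your $Z$) in the second, absorb the bias $\weight\Delta$ by the deterministic enlargement $\overline\Delta$, and then bound the deterministic width using $\Phi^{-1}(1-\alpha/2)\lesssim 1$ from $\alpha\ge\ep$. The paper simply cites the Berry--Esseen calculation already done in the proof of Theorem~\ref{thm:oracle_estimator} for the CLT step, whereas you spell out the Lyapunov ratio and the $o_p(\tau)$ handling of the remainders explicitly; the content is the same.
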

\begin{proof}
    See Appendix \ref{prf:thm:oracle_ci}. 
\end{proof}

Combining \textcolor{black}{Theorems \ref{thm:minimax_length} and \ref{thm:oracle_ci}}, we conclude that when $\overline\Delta$ is known, it is indeed possible to construct a CI that is strictly better than the naive one based on the RCT-only estimator, and such an oracle CI is minimax optimal in terms of its expected length.

\subsection{When Is Adaptation Possible?}
Up to now, the results regarding inference is nearly identical to the results regarding estimation: as long as $\overline\Delta$ is known, we have an oracle procedure that attains the minimax lower bound. However, we are to see that the story for adaptation is drastically different. Without prior knowledge of $\overline\Delta$, it is in general not possible to construct a CI that achieves comparable performance as the oracle CI.

To formalize the above claim, let us introduce another parameter $\underline\Delta \leq \overline\Delta$ and consider a smaller parameter space
$$
    \calP_{\underline\Delta} = \{(\beta^\star, \beta_\obs): |\beta^\star - \beta_\obs|\leq \underline\Delta\} \subseteq \calP_{\overline\Delta}.
$$
We then define
$$
    \len_\alpha^\star(\calP_{\underline\Delta}, \calP_{\overline\Delta}) = 
    \inf_{\ci_\alpha\in \calI_\alpha(\calP_{\overline\Delta})} \sup_{(\beta^\star, \beta_\obs)\in\calP_{\underline\Delta}} \bbE_{(\beta^\star, \beta_\obs)} [\len(\ci_\alpha)],
$$
i.e., the minimax CI length over the smaller parameter space $\calP_{\underline\Delta}$ when we require the CI to be \emph{uniformly valid} over the larger parameter space $\calP_{\overline\Delta}$.
The above quantity measures the difficulty of adaptation to unknown $\underline\Delta$. 
For example, if 
$$
    \len_\alpha^\star(\calP_{\underline\Delta}, \calP_{\overline\Delta}) \asymp \len_\alpha^\star(\calP_{\underline\Delta}, \calP_{\underline\Delta}) = \len_\alpha^\star(\calP_{\underline\Delta}),
$$
then it tells that even if we require uniform coverage over the larger parameter space $\calP_{\overline{\Delta}}$, we can still construct a CI whose minimax length is of the same order as if we only require coverage over the smaller parameter space $\calP_{\underline{\Delta}}$.
In this case, we conclude that adaptation is possible.
On the contrary, if
$$
    \len_\alpha^\star(\calP_{\underline\Delta}, \calP_{\overline\Delta}) \gg  \len_\alpha^\star(\calP_{\underline\Delta}),
$$
then we know that requiring uniform coverage over $\calP_{\overline\Delta}$ would result in a much wider CI, and hence adaptation is impossible.

The following theorem gives a lower bound of $\len_\alpha^\star(\calP_{\underline\Delta}, \calP_{\overline\Delta})$.
\begin{theorem}[Minimax Length of Adaptive CIs]
\label{thm:minimax_adaptive_length}
Under Model \ref{model}, for any $\underline\Delta\leq \overline\Delta$, we have
\begin{equation}
    \label{eq:minimax_adaptive_length}
    \len_\alpha^\star(\calP_{\underline\Delta}, \calP_{\overline\Delta}) \gtrsim 
    \frac{\sigma_\rct\sigma_\obs}{\sqrt{n_\rct\sigma_\obs^2 + n_\obs\sigma_\rct^2}} + \frac{\sigma_\rct}{\sqrt{n_\rct}} \land \overline\Delta,
\end{equation} 
provided $\alpha \leq \Phi(-1/4)-\ep$ for some absolute constant $\ep\in(0, \Phi(-1/4))$.
\end{theorem}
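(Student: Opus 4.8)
The plan is to establish the two terms on the right-hand side of \eqref{eq:minimax_adaptive_length} by separate two-hypothesis arguments and then combine them via $a\lor b\geq (a+b)/2$. The workhorse is the following elementary fact. Suppose $\ci_\alpha=[L,U]\in\calI_\alpha(\calP_{\overline\Delta})$ and $\theta_0,\theta_1\in\calP_{\overline\Delta}$ are two configurations with target values $\beta^\star_0,\beta^\star_1$ satisfying $|\beta^\star_0-\beta^\star_1|=d$; writing $\bbP_0,\bbP_1$ for the induced data laws, validity gives $\bbP_0(\beta^\star_0\in\ci_\alpha)\geq 1-\alpha$ and $\bbP_0(\beta^\star_1\in\ci_\alpha)\geq \bbP_1(\beta^\star_1\in\ci_\alpha)-\|\bbP_0-\bbP_1\|_{\mathrm{TV}}\geq 1-\alpha-\|\bbP_0-\bbP_1\|_{\mathrm{TV}}$, so that $\bbP_0(\beta^\star_0\in\ci_\alpha \text{ and } \beta^\star_1\in\ci_\alpha)\geq 1-2\alpha-\|\bbP_0-\bbP_1\|_{\mathrm{TV}}$. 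On the event that both targets are covered the length is at least $d$, hence $\bbE_0[\len(\ci_\alpha)]\geq d\,(1-2\alpha-\|\bbP_0-\bbP_1\|_{\mathrm{TV}})$. Whenever $\theta_0\in\calP_{\underline\Delta}$, this directly lower bounds $\len_\alpha^\star(\calP_{\underline\Delta},\calP_{\overline\Delta})$, because the supremum over $\calP_{\underline\Delta}$ dominates $\bbE_0[\len(\ci_\alpha)]$.

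To apply this I would first reduce Model \ref{model} to a Gaussian location problem. Fixing the nuisances (say $\bsgamma_\rct=\bsgamma_\obs=0$ and $\pi_\rct\equiv\pi_\obs=1/2$) and reparametrizing by $(\beta^\star,\Delta)$ with $\beta_\obs=\beta^\star-\Delta$, the difference-in-means statistics are independent Gaussians $Z_\rct\sim N(\beta^\star,v_\rct)$ and $Z_\obs\sim N(\beta_\obs,v_\obs)$ with $v_\rct\asymp\sigma_\rct^2/n_\rct$ and $v_\obs\asymp\sigma_\obs^2/n_\obs$ (note that marginalizing out $U_i$ makes the observational law depend on the configuration only through $\beta_\obs$). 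Since total variation between two fixed hypotheses is preserved under sufficient reduction, I can compute all divergences from these statistics using the exact identity $\|N(\mu_0,s^2)-N(\mu_1,s^2)\|_{\mathrm{TV}}=1-2\Phi(-|\mu_1-\mu_0|/(2s))$. For the pooled term I take $\theta_0=(\beta^\star=0,\Delta=0)$ and $\theta_1=(\beta^\star=\tau,\Delta=0)$, both having zero bias and hence lying in $\calP_{\underline\Delta}$. With $\Delta=0$ the inverse-variance weighted average $\bar Z\sim N(\beta^\star,s_{\mathrm{pool}}^2)$, $s_{\mathrm{pool}}^2=1/(1/v_\rct+1/v_\obs)\asymp \sigma_\rct^2\sigma_\obs^2/(n_\rct\sigma_\obs^2+n_\obs\sigma_\rct^2)$, is sufficient for $\beta^\star$; choosing $\tau$ equal to half of $s_{\mathrm{pool}}$ yields $\|\bbP_0-\bbP_1\|_{\mathrm{TV}}=1-2\Phi(-1/4)$ and hence $\len_\alpha^\star(\calP_{\underline\Delta},\calP_{\overline\Delta})\gtrsim s_{\mathrm{pool}}\asymp \sigma_\rct\sigma_\obs/\sqrt{n_\rct\sigma_\obs^2+n_\obs\sigma_\rct^2}$, the first term.

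The bias term is the crux and the step I expect to require the most care, since it is precisely where adaptation fails. Here I would take $\theta_0=(\beta^\star=0,\Delta=0)$ and $\theta_1=(\beta^\star=\eta,\Delta=\eta)$, chosen so that $\beta_\obs=0$ under \emph{both} configurations. The decisive observation is that the observational-data law is then identical under $\theta_0$ and $\theta_1$: no matter how large $n_\obs$ is, the abundant observational data contains no information separating the two hypotheses, so $\|\bbP_0-\bbP_1\|_{\mathrm{TV}}$ reduces to the total variation of the RCT statistic alone, namely $1-2\Phi(-\eta/(2\sqrt{v_\rct}))$. Meanwhile the targets differ by $\eta$, and validity is demanded only over $\calP_{\overline\Delta}$, to which $\theta_1$ belongs exactly when $\eta\leq\overline\Delta$. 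Setting $\eta=\min(\tfrac12\sqrt{v_\rct},\,\overline\Delta)$ keeps the perturbation at most half the RCT standard deviation, so $\|\bbP_0-\bbP_1\|_{\mathrm{TV}}\leq 1-2\Phi(-1/4)$, and the lemma gives $\len_\alpha^\star(\calP_{\underline\Delta},\calP_{\overline\Delta})\gtrsim \eta\asymp \sigma_\rct/\sqrt{n_\rct}\land\overline\Delta$. This is exactly the impossibility-of-adaptation mechanism: the length bound is governed by $\overline\Delta$ (the validity region) and not by $\underline\Delta$ (the evaluation region), because $\theta_0\in\calP_{\underline\Delta}$ while the statistically indistinguishable alternative $\theta_1$ is permitted to carry bias all the way up to $\overline\Delta$.

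Finally, the hypothesis $\alpha\leq\Phi(-1/4)-\ep$ is precisely what guarantees $1-2\alpha-\|\bbP_0-\bbP_1\|_{\mathrm{TV}}\geq 2\ep>0$ in both constructions. It arises from fixing the perturbation at half the standard deviation of the relevant Gaussian sufficient statistic, so that the argument of $\Phi$ in the two-Gaussian total-variation formula is exactly $-1/4$; tracking the multiplicative constant relating $\sqrt{v_\rct}$ to $\sigma_\rct/\sqrt{n_\rct}$ (coming from $\pi_\rct=1/2$) is what pins the threshold down exactly. Combining the pooled-term and bias-term lower bounds via $a\lor b\geq (a+b)/2$ then yields \eqref{eq:minimax_adaptive_length}, and specializing to $\underline\Delta=\overline\Delta$ recovers Theorem \ref{thm:minimax_length} as the announced special case.
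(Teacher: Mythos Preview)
Your proposal is correct and follows essentially the same approach as the paper. Both rely on the same two-point lemma (the paper cites it as Lemma~1 from \cite{cai2017confidence}), the same hypothesis pairs (placing $\theta_0$ at the origin and $\theta_1$ either shifting only $\beta^\star$ with $\beta_\obs$ fixed, or shifting both equally), and the same Gaussian total-variation calculation leading to the $\Phi(-1/4)$ threshold. The only organizational difference is that the paper splits into three cases according to the size of $\overline\Delta$, whereas you establish the two terms of the bound by two universally valid constructions and then combine via $a\lor b\gtrsim a+b$; your ``bias'' construction with $\eta=\min(\tfrac12\sqrt{v_\rct},\overline\Delta)$ simply merges the paper's Cases~A and~B into one step.
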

\begin{proof}
    See Appendix \ref{prf:thm:minimax_adaptive_length}.
\end{proof}

Note that the lower bound in \eqref{eq:minimax_adaptive_length} has no dependence on $\underline\Delta$. Combining the above result with Theorems \ref{thm:minimax_length} and \ref{thm:oracle_ci}, we conclude that adaptation is in general not possible. As a concrete example, suppose that 
$$
    \frac{\sigma_\rct\sigma_\obs}{\sqrt{n_\rct\sigma_\obs^2 + n_\obs\sigma_\rct^2}} \lesssim \underline\Delta \ll \overline\Delta \lesssim \frac{\sigma_\rct}{\sqrt{n_\rct}},
$$
under which case we have
$$
    \len_\alpha^\star(\calP_{\underline\Delta}, \calP_{\overline\Delta}) \gtrsim \overline\Delta \gg \underline\Delta \asymp 
     \len_\alpha^\star(\calP_{\underline\Delta}).
$$
Moreover, the gap for adaptation, measured by the ratio between $\overline\Delta$ and $\underline\Delta$, can be as large as $\sqrt{n_\obs/n_\rct}$.

There are certain cases where adaptation is possible. For example, if we have prior knowledge that 
$
    \overline\Delta \lesssim {\sigma_\rct\sigma_\obs}/{\sqrt{n_\rct\sigma_\obs^2 + n_\obs\sigma_\rct^2}},
$
then the lower bound \eqref{eq:minimax_adaptive_length} becomes ${\sigma_\rct\sigma_\obs}/{\sqrt{n_\rct\sigma_\obs^2 + n_\obs\sigma_\rct^2}}$. Such a length can be achieved by constructing a CI based on the naively-pooled estimator and enlarging it by a constant multiple of ${\sigma_\rct\sigma_\obs}/{\sqrt{n_\rct\sigma_\obs^2 + n_\obs\sigma_\rct^2}}$. As another example, if $\overline\Delta\gtrsim \sigma_\rct/\sqrt{n_\rct}$, then the lower bound \eqref{eq:minimax_adaptive_length} reads $\sigma_\rct/\sqrt{n_\rct}$, and such a length is achieved by the CI based on the RCT-only estimator.

In view of Theorem \ref{thm:minimax_adaptive_length}, in practice, if we do not have strong prior knowledge that $\overline\Delta\ll \sigma_\rct/\sqrt{n_\rct}$, then the CI based on the RCT-only estimator is optimal from a worst-case point of view.


\section{Simulation Studies}
We consider the following data-generating process in the population:
\begin{equation} \label{eq: simu population}
    X_1, X_2, X_3 \sim N(0, 1);\qquad 
    Y(0) = X_1 + X_2 + X_3 + \epsilon,~Y(1) = Y(0) + \beta(X_1, X_2, X_3),
\end{equation}
where $\epsilon \sim N(0, 1)$. The first factor in the simulation design concerns about the effect heterogeneity:
\begin{description}
\item[Factor 1:] Treatment effect heterogeneity: a constant treatment effect $\beta(X_1, X_2, X_3) = 2$ and a heterogeneous treatment effect $\beta(X_1, X_2, X_3) = 2 - X_1 - X_2$. In either case, the PATE $\beta^\star = 2.$
\end{description}
We consider the following selection model into an RCT:
\begin{equation*}
\text{logit}\{\bbP(S = c \mid X_1, X_2, X_3)\} = -7 + X_1 - X_2.
\end{equation*}
According to this sample selection model, the marginal probability $P(S = c) \approx 0.25\%$. We simulate a population of $100,000$ so that $n_c \approx 250$ in each simulated RCT dataset. For each simulated RCT sample, we assign treatment $A$ randomly with probability $\pi_c = 1/2$ and $Y = A\cdot Y(1) + (1-A)\cdot Y(0)$. Note that the interplay between selection heterogeneity and treatment effect heterogeneity induces an external validity bias when $\beta(X_1, X_2, X_3) = 2 - X_1 + X_2$.

In parallel, we simulate an observational dataset with sample size $n_{\text{o}}$. Covariates and potential outcomes are simulated under \eqref{eq: simu population}. We assign treatment $A$ according to  $A \sim \text{Bern}(\text{expit}\{1 - X_1 - b \cdot X_3\})$, and $Y = A\cdot Y(1) + (1-A)\cdot Y(0)$. 
We consider a scenario where $X_3$ is an unmeasured confounder and hence the observed observational data $\mathcal{D}_o$ consist only of $\{X_1, X_2, A, Y\}$. This induces an internal validity bias when $b \neq 0$. The second and third factors are concerned about the observational data generating process. 

\begin{description}
\item[Factor 2:] Observational data sample size $n_o = 10,000$, $50,000$, and $100,000$.
\item[Factor 3:] Magnitude of internal validity bias controlled by $b$: (i) small or no bias $b = 0, 0.01, 0.1$, (ii) moderate bias $b = 0.5, 0.6, 0.7$, and (iii) large bias $b = 2, 3, 10$.
\end{description}

For each simulated pair of RCT and observational data, we construct an RCT-based estimator $\hat\beta_c$, an observational-data-based estimator $\hat\beta_o$,  an oracle estimator $\hat\beta_\orac$ that uses the oracle bias $\Delta$, and a family of adaptive estimators $\hat\beta_{\lambda}$ parameterized by $\lambda$. 
For a constant treatment effect $\beta(X_1, X_2, X_3) = 2$, we consider $(\hat\beta_c, \hat\beta_o) = (\hat\beta_{c, \textsf{IPW}}, \hat\beta_{o, \textsf{IPW}})$, $(\hat\beta_c, \hat\beta_o) = (\hat\beta_{c, \textsf{AIPW}}, \hat\beta_{o, \textsf{AIPW}})$, and we set $\lambda = \lambda_1 \cdot \sqrt{\log (n_\rct\land n_{\obs})}$ for some constant $\lambda_1$.  
For the heterogeneous treatment effect $\beta(X_1, X_2, X_3) = 2 - X_1 - X_2$, we consider $(\hat\beta_c, \hat\beta_o) = (\hat\beta_{c, \textsf{IPPW}}, \hat\beta_{o_2, \textsf{IPW}})$ and $(\hat\beta_c, \hat\beta_{o}) = (\hat\beta_{c, \textsf{AIPPW}}, \hat\beta_{o_2, \textsf{AIPW}})$. 
When constructing the $\hat\beta_{c, \textsf{IPPW}}$ and $\hat\beta_{c, \textsf{AIPPW}}$, we estimate the participation probability using RCT data and a random subsample of size $n_c$ from the observational data, and we set $\lambda = \lambda_1 \cdot \sqrt{\log (n_\rct\land n_{\obs_2})}$ for some constant $\lambda_1$.

\begin{table}[t]
\centering
\caption{Ratios of the mean squared error of each estimator to that of the oracle estimator $\hat{\beta}_\weight$ when $n_o = 10000$. Estimators considered in the table are the RCT-based estimator $\hat{\beta}_c$, the observational-data-based estimator $\hat{\beta}_o$, and the anchored thresholding estimator $\hat{\beta}_{\lambda}$. 
}
\label{tbl: simulation results}
\resizebox{\textwidth}{!}{
\begin{tabular}{ccccccccccc}
 &$\hat\beta_c$ & $\hat\beta_o$ &\multicolumn{2}{c}{$\hat\beta_{\lambda}$} & $\hat\beta_c$ & $\hat\beta_o$ &\multicolumn{2}{c}{$\hat\beta_{\lambda}$}\\
    \hline
    \hline
   \multicolumn{9}{c}{$\beta(X_1, X_2, X_3) = 2$} \\
    \hline
 \multirow{2}{*}{\begin{tabular}{c}b\end{tabular}} 
 &\multirow{2}{*}{\begin{tabular}{c}\textsf{IPW}\end{tabular}} 
 &\multirow{2}{*}{\begin{tabular}{c}\textsf{IPW}\end{tabular}} 
   & \multirow{2}{*}{\begin{tabular}{c}$\lambda_1=0.5$\end{tabular}}
   & \multirow{2}{*}{\begin{tabular}{c}$\lambda_1=0.6$\end{tabular}}
    &\multirow{2}{*}{\begin{tabular}{c}\textsf{AIPW}\end{tabular}} 
 &\multirow{2}{*}{\begin{tabular}{c}\textsf{AIPW}\end{tabular}} 
   & \multirow{2}{*}{\begin{tabular}{c}$\lambda_1=0.5$\end{tabular}}
   & \multirow{2}{*}{\begin{tabular}{c}$\lambda_1=0.6$\end{tabular}}
  \\  \\
  
  0.00 & 29.03 & 1.03 & 3.61 & 2.14  & 23.11 & 1.06 & 3.05 & 1.89 \\ 
  0.01 & 28.23 & 1.06 & 4.25 & 2.72  & 23.60 & 1.06 & 3.53 & 2.23 \\ 
  0.10 & 8.85 & 1.07 & 1.73 & 1.34  & 6.62 & 1.09 & 1.50 & 1.25 \\ 
  0.50 & 1.00 & 1.98 & 1.18 & 1.33  & 1.00 & 2.84 & 1.48 & 1.74 \\ 
  0.60 & 1.00 & 2.91 & 1.51 & 1.76  & 1.00 & 3.88 & 1.72 & 2.07 \\ 
  0.70 & 1.00 & 3.74 & 1.71 & 2.06  & 1.00 & 5.30 & 1.95 & 2.44 \\ 
  2.00 & 1.00 & 12.87 & 2.32 & 3.05  & 1.00 & 18.13 & 2.40 & 3.17 \\ 
  3.00 & 1.00 & 17.72 & 2.21 & 2.98  & 1.00 & 23.54 & 2.20 & 2.93 \\ 
  10.00 & 1.00 & 20.87 & 2.26 & 3.01 & 1.00 & 29.93 & 2.28 & 3.04 \\  
    \hline
    \hline
   \multicolumn{9}{c}{$\beta(X_1, X_2, X_3) = 2 - X_1 - X_2$} \\
    \hline

\multirow{2}{*}{\begin{tabular}{c}b\end{tabular}} 
 &\multirow{2}{*}{\begin{tabular}{c}\textsf{IPPW}\end{tabular}} 
 &\multirow{2}{*}{\begin{tabular}{c}\textsf{IPW}\end{tabular}} 
   & \multirow{2}{*}{\begin{tabular}{c}$\lambda_1=0.5$\end{tabular}}
   & \multirow{2}{*}{\begin{tabular}{c}$\lambda_1=0.6$\end{tabular}}
    &\multirow{2}{*}{\begin{tabular}{c}\textsf{AIPPW}\end{tabular}} 
 &\multirow{2}{*}{\begin{tabular}{c}\textsf{AIPW}\end{tabular}} 
   & \multirow{2}{*}{\begin{tabular}{c}$\lambda_1=0.5$\end{tabular}}
   & \multirow{2}{*}{\begin{tabular}{c}$\lambda_1=0.6$\end{tabular}}
  \\  \\  
  0.00 & 28.81 & 1.04 & 3.61 & 2.23 & 25.26 & 1.05 & 3.25 & 2.04 \\ 
   0.01 & 31.66 & 1.05 & 4.51 & 2.73 & 27.70 & 1.05 & 3.94 & 2.43 \\ 
   0.10 & 8.91 & 1.06 & 1.73 & 1.33 & 7.34 & 1.07 & 1.69 & 1.35 \\ 
   0.50 & 1.00 & 1.91 & 1.17 & 1.32 & 1.00 & 2.63 & 1.39 & 1.60 \\ 
   0.60 & 1.00 & 2.81 & 1.41 & 1.66 & 1.00 & 3.67 & 1.58 & 1.91 \\ 
   0.70 & 1.00 & 3.76 & 1.66 & 2.02 & 1.00 & 4.87 & 1.80 & 2.24 \\ 
   2.00 & 1.00 & 13.07 & 2.22 & 2.94 & 1.00 & 18.44 & 2.21 & 2.93 \\ 
   3.00 & 1.00 & 18.60 & 2.42 & 3.25 & 1.00 & 26.37 & 2.44 & 3.29 \\ 
   10.00 & 1.00 & 23.45 & 2.49 & 3.36 & 1.00 & 30.90 & 2.35 & 3.14 \\ \hline
\end{tabular}}
\end{table}

Table \ref{tbl: simulation results} summarizes the simulation results when $n_o = 10,000$. Results for $n_o = 50,000$ and $n_o = 100,000$ are similar and can be found in Appendix \ref{appx:more_simulation}. Under each data-generating process, we calculated the mean squared error of each estimator, including the oracle estimator $\hat\beta_{\orac}$, the RCT-based estimator $\hat\beta_c$, the observational-data-based estimator $\hat\beta_o$, and a collection of adaptive estimators $\hat\beta_{\lambda}$ with different choices of $\lambda_1$, and reported the MSE of each estimator as a multiple of that of the oracle estimator. The simulation results aligned well with theory and intuition. When $b$ and hence the level of unmeasured confounding is small, the RCT-based estimator $\hat\beta_c$ performs much poorly compared to the oracle estimator. For instance, when $\beta(X_1, X_2, X_3) = 2$ and $b = 0.01$, $\hat\beta_c$ has a mean squared error as large as $28$ times of that of $\hat\beta_\orac$ when $\hat\beta_c$ is constructed via inverse probability weighting and a mean squared error as large as $23$ times of that of $\hat\beta_\orac$ when constructed via augmented inverse probability weighting. At the other end of the spectrum, the observational-data-based estimator $\hat\beta_o$ performs poorly when there is a large degree of unmeasured confounding. On the other hand, the adaptive estimator $\hat\beta_{\lambda}$ has rather consistent performance across the spectrum of the level of unmeasured confounding, with a mean squared error within a small multiple of that of the oracle estimator. These patterns remain true when the treatment effect is homogeneous or heterogeneous, and when $\hat\beta_c$ and $\hat\beta_o$ are constructed in different ways.




\section{Real Data Example: the  RCT DUPLICATE Initiative}

\begin{table}[t]
    \centering
    \resizebox{\textwidth}{!}{
    \begin{tabular}{lcccccc}
       \multirow{4}{*}{\begin{tabular}{c}\textbf{Study Name}\end{tabular}}   & \multirow{4}{*}{\begin{tabular}{c}\textbf{Outcome}\end{tabular}}  & 
       \multirow{4}{*}{\begin{tabular}{c}\textbf{RCT} \\\textbf{Sample}\\\textbf{Size}\end{tabular}}  & 
       \multirow{4}{*}{\begin{tabular}{c}\textbf{RCT} \\\textbf{Estimate}\end{tabular}}  & 
       \multirow{4}{*}{\begin{tabular}{c}\textbf{OBS} \\\textbf{Sample}\\\textbf{Size}\end{tabular}}  & 
       \multirow{4}{*}{\begin{tabular}{c}\textbf{OBS} \\\textbf{Estimate}\end{tabular}}  & 
       \multirow{4}{*}{\begin{tabular}{c}\textbf{Anchored} \\\textbf{Thresholding} \\\textbf{Estimate}\end{tabular}} 
       \\ \\ \\ \\
       \hline
       \multirow{2}{*}{\begin{tabular}{c}\textsf{LEADER} \end{tabular}} & \multirow{2}{*}{\begin{tabular}{c}3P MACE \end{tabular}} & \multirow{2}{*}{\begin{tabular}{c}9,340 \end{tabular}} & \multirow{2}{*}{\begin{tabular}{c}-0.0183\\ (0.0072) \end{tabular}}  & \multirow{2}{*}{\begin{tabular}{c}168,692 \end{tabular}} & \multirow{2}{*}{\begin{tabular}{c}-0.0071 \\ (0.0007) \end{tabular}}  & \multirow{2}{*}{\begin{tabular}{c}-0.0075 \end{tabular}} \\ \\
       
        \multirow{2}{*}{\begin{tabular}{c}\textsf{DECLARE-TIMI 58} \end{tabular}} & \multirow{2}{*}{\begin{tabular}{c}HHF + death \end{tabular}} & \multirow{2}{*}{\begin{tabular}{c}17,160 \end{tabular}} & \multirow{2}{*}{\begin{tabular}{c}-0.0092\\ (0.0034) \end{tabular}}  & \multirow{2}{*}{\begin{tabular}{c}49,790 \end{tabular}} & \multirow{2}{*}{\begin{tabular}{c}-0.0050 \\(0.0010) \end{tabular}}  & \multirow{2}{*}{\begin{tabular}{c}-0.0053 \end{tabular}} \\ \\
        
        \multirow{2}{*}{\begin{tabular}{c} \textsf{EMPA-REG OUTCOME} \end{tabular}} & \multirow{2}{*}{\begin{tabular}{c}3P MACE \end{tabular}} & \multirow{2}{*}{\begin{tabular}{c}7,020 \end{tabular}} & \multirow{2}{*}{\begin{tabular}{c}-0.0163\\ (0.0081) \end{tabular}}  & \multirow{2}{*}{\begin{tabular}{c}103,750 \end{tabular}} & \multirow{2}{*}{\begin{tabular}{c}-0.0012\\ (0.0006) \end{tabular}}  & \multirow{2}{*}{\begin{tabular}{c}-0.0043 \end{tabular}} \\ \\
        
           \multirow{2}{*}{\begin{tabular}{c} \textsf{CANVAS} \end{tabular}} & \multirow{2}{*}{\begin{tabular}{c}3P MACE \end{tabular}} & \multirow{2}{*}{\begin{tabular}{c}10,142 \end{tabular}} & \multirow{2}{*}{\begin{tabular}{c}-0.0168 \\ (0.0062) \end{tabular}}  & \multirow{2}{*}{\begin{tabular}{c}152,198 \end{tabular}} & \multirow{2}{*}{\begin{tabular}{c}-0.0029\\ (0.0005) \end{tabular}}  & \multirow{2}{*}{\begin{tabular}{c}-0.0074 \end{tabular}} \\ \\
           
      \multirow{2}{*}{\begin{tabular}{c} \textsf{CARMELINA} \end{tabular}} & \multirow{2}{*}{\begin{tabular}{c}3P MACE \end{tabular}} & \multirow{2}{*}{\begin{tabular}{c}6,979 \end{tabular}} & \multirow{2}{*}{\begin{tabular}{c}0.0037\\ (0.0078) \end{tabular}}  & \multirow{2}{*}{\begin{tabular}{c}101,826 \end{tabular}} & \multirow{2}{*}{\begin{tabular}{c}-0.0056\\ (0.0011) \end{tabular}}  & \multirow{2}{*}{\begin{tabular}{c}-0.0054 \end{tabular}} \\ \\
      
         \multirow{2}{*}{\begin{tabular}{c} \textsf{TECOS} \end{tabular}} & \multirow{2}{*}{\begin{tabular}{c}3P MACE + angina \end{tabular}} & \multirow{2}{*}{\begin{tabular}{c}14,523 \end{tabular}} & \multirow{2}{*}{\begin{tabular}{c}-0.0015\\ (0.0053) \end{tabular}}  & \multirow{2}{*}{\begin{tabular}{c}349,478 \end{tabular}} & \multirow{2}{*}{\begin{tabular}{c}-0.0091 \\ (0.0007) \end{tabular}}  & \multirow{2}{*}{\begin{tabular}{c}-0.0089 \end{tabular}} \\ \\

       \multirow{2}{*}{\begin{tabular}{c} \textsf{SAVOR-TIMI 53} \end{tabular}} & \multirow{2}{*}{\begin{tabular}{c}3P MACE \end{tabular}} & \multirow{2}{*}{\begin{tabular}{c}16,492 \end{tabular}} & \multirow{2}{*}{\begin{tabular}{c}-0.0001\\ (0.0041) \end{tabular}}  & \multirow{2}{*}{\begin{tabular}{c}182,128 \end{tabular}} & \multirow{2}{*}{\begin{tabular}{c}-0.0080 \\ (0.0007) \end{tabular}}  & \multirow{2}{*}{\begin{tabular}{c}-0.0064 \end{tabular}} \\ \\
       
       \multirow{2}{*}{\begin{tabular}{c} \textsf{CAROLINA} \end{tabular}} & \multirow{2}{*}{\begin{tabular}{c}3P MACE \end{tabular}} & \multirow{2}{*}{\begin{tabular}{c}6,033 \end{tabular}} & \multirow{2}{*}{\begin{tabular}{c}-0.0025\\ (0.0083) \end{tabular}}  & \multirow{2}{*}{\begin{tabular}{c}48,262 \end{tabular}} & \multirow{2}{*}{\begin{tabular}{c}-0.0035\\ (0.0012) \end{tabular}}  & \multirow{2}{*}{\begin{tabular}{c}-0.0035 \end{tabular}} \\ \\
       
        \multirow{2}{*}{\begin{tabular}{c} \textsf{TRITON-TIMI 38} \end{tabular}} & \multirow{2}{*}{\begin{tabular}{c}3P MACE \end{tabular}} & \multirow{2}{*}{\begin{tabular}{c}13,608 \end{tabular}} & \multirow{2}{*}{\begin{tabular}{c}-0.0206\\ (0.0052) \end{tabular}}  & \multirow{2}{*}{\begin{tabular}{c}43,864 \end{tabular}} & \multirow{2}{*}{\begin{tabular}{c}-0.0110\\ (0.0018) \end{tabular}}  & \multirow{2}{*}{\begin{tabular}{c}-0.0129 \end{tabular}} \\ \\

         \multirow{2}{*}{\begin{tabular}{c} \textsf{PLATO} \end{tabular}} & \multirow{2}{*}{\begin{tabular}{c}3P MACE \end{tabular}} & \multirow{2}{*}{\begin{tabular}{c}18,624 \end{tabular}} & \multirow{2}{*}{\begin{tabular}{c}-0.0166\\ (0.0044) \end{tabular}}  & \multirow{2}{*}{\begin{tabular}{c}27,960 \end{tabular}} & \multirow{2}{*}{\begin{tabular}{c}-0.0149\\ (0.0027) \end{tabular}}  & \multirow{2}{*}{\begin{tabular}{c}-0.0154 \end{tabular}} \\ \\
       
    \end{tabular}}
    \caption{Results from 10 RCTs and their corresponding emulation studies. 3P MACE indicates 3-point major adverse cardiovascular events (myocardial infarction, stroke, or cardiovascular death). }
    \label{tab: real data summary table}
\end{table}
To assess whether non-randomized, observational claims data could provide accurate causal estimates of medical products, \cite{franklin2020nonrandomized, Franklin:2021aa} launched the RCT DUPLICATE initiative (Randomized, Controlled Trials Duplicated Using Prospective Longitudinal Insurance Claims: Applying Techniques of Epidemiology) where findings from a selected cohort of RCTs are compared to findings from observational health care claims data investigating the same clinical questions. Specifically, \cite{Franklin:2021aa} emulated the study design of $8$ cardiovascular outcome trials of antidiabetic medications and $2$ trials of antiplatelets using observational claims data by selecting the same primary outcome, treatment strategy, and applying the same inclusion and exclusion criteria so that the RCT cohort and observational cohort are similar in observed covariates. To estimate the effect of intervention from observational claims data, \cite{Franklin:2021aa} used $1$-to-$1$ propensity-score matching (\citealp{rosenbaum1983central,rosenbaum1985constructing}) and embedded observational data into an approximate randomized controlled trial. Just like any covariate adjustment methodology, statistical matching can only remove the overt bias but not the hidden bias due to unmeasured confounding; therefore, the effect estimate based on the observational data could be potentially biased. Using published summary data in \cite{Franklin:2021aa}, we calculated the effect of each intervention on event rate (defined as the rate that a pre-specified composite outcome, e.g., 3-point major adverse cardiovascular events, occurs) in each of the $10$ RCTs and their corresponding matched observational studies. Table \ref{tab: real data summary table} summarizes the results from 10 RCTs and their corresponding observational study emulations. Overall, we observe that the observational-data-based estimator and the RCT-based estimator align reasonably well. We then applied the adaptive procedure to pool together the RCT-based and observational-data-based estimates. We set the hyper-parameter $\lambda_1 = 0.5$ as guided by our simulation studies. The anchored thresholding estimates agreed with the RCT-only estimates in sign in all but the CARMELINA trial, and fell within the $95\%$ confidence interval of the RCT-only estimates in all $10$ trials. As demonstrated by our extensive simulations, the anchored thresholding estimates are expected to have near optimal mean square errors regardless of the magnitude of hidden bias in observational study emulations.

\section{Discussion}
In this paper, we characterize the potential efficiency gain from integrating observational data into the RCT-based analysis by establishing the minimax rates for both estimation and confidence interval construction. For estimation, we propose a fully adaptive estimator that nearly achieves the minimax rate; for inference, we show that adaptation is in general not possible without additional knowledge on the magnitude of the bias.


The anchored thresholding estimator proposed in Section \ref{subsec: adaptive_procedure} is a general strategy for aggregating two estimators, where one is unbiased and the other is not. This strategy may be extended to other application scenarios, such as estimating individualized treatment rules \citep{wu2021transfer}.

\small{
\setlength{\bibsep}{0.2pt plus 0.3ex}
\bibliographystyle{abbrvnat}
\bibliography{references}
}

\newpage
\appendixtitleon
\appendixtitletocon
\addtocontents{toc}{\protect\setcounter{tocdepth}{2}}
\begin{appendices}

\section{Additional Details}

\subsection{Bias Analysis of the Observational-Data-Only Estimators}\label{appx:internal_bias}
 To understand how the internal validity bias arises from unobserved confounding, we consider an unmeasured confounder $U$ such that the treatment assignment becomes ignorable once conditional on $(\mathbf{X}, U)$ \citep{rosenbaum1983sens,rosenbaum1983central, Imbens2003}, i.e., $ A  \indep ( Y(1), Y(0)) \mid \mathbf{X}, U, S=o$ but $ A \nindep ( Y(1), Y(0)) \mid \mathbf{X}, S=o$. Hence,
 \begin{align*}
\beta_o^\star&= \bbE( Y {(1)}  \mid S=o ) -  \bbE( Y {(0)}  \mid S=o ) \\
	&= \bbE\left[ \frac{A Y }{\pi_o(\bfX, U)} \mid S= o \right]  - \bbE\left[ \frac{(1-A) Y}{1-\pi_o(\bfX, U) } \mid S= o \right] \\
	&=   \bbE\left[ \frac{AY}{\pi_o(\bfX)  } \left\{ \frac{1-\pi_o(\bfX)}{\gamma (\bfX, U)}  + \pi_o(\bfX)  \right\}  \mid S= o \right]  \\
	&\qquad - \bbE\left[ \frac{(1-A) Y}{1-\pi_o(\bfX)} \left\{ \pi_o(\bfX)\gamma(\bfX, U)  + 1-\pi_o(\bfX) \right\}  \mid S= o \right],     
\end{align*}
where $\gamma(\mathbf{X}, U) = \text{OR}(\pi_{o}(\mathbf{X}, U), \pi_{o}(\mathbf{X}))$ is the odds ratio of the true propensity score $\pi_{o}(\mathbf{X}, U) = P(A= 1 \mid \mathbf{X}, U, S = o)$ that takes into account the unmeasured confounder $U$ and the observed propensity score $\pi_{o}(\mathbf{X}) = \bbP(A = 1 \mid \mathbf{X}, S = o)$.
Given that
\begin{align*}
    \beta_o= \bbE\left[ \frac{A Y }{\pi_o(\bfX)} \mid S= o \right]  - \bbE\left[ \frac{(1-A) Y}{1-\pi_o(\bfX) } \mid S= o \right],
\end{align*}
one can readily check that the difference $\Delta = \beta^\star - \beta_\obs =  \beta_o^\star - \beta_o$ takes the following form:
\begin{align*}
  \Delta & =  \mathbb{E}\left[ \frac{(1-A) Y\pi_{o}(\mathbf{X})}{1-\pi_{o}(\mathbf{X})} \left\{  1-\gamma(\mathbf{X}, U)  \right\}  \mid S= o \right]\\
  & \qquad - \mathbb{E}\left[ \frac{AY(1-\pi_{o}(\mathbf{X}))}{\pi_{o}(\mathbf{X})  } \left\{   1- \frac{1}{\gamma (\mathbf{X}, U)}  \right\}  \mid S= o \right].
\end{align*}
In this way, $\Delta$ measures the internal validity bias from the observational data. 
There are other ways to parametrize and represent the bias $\Delta$. Our method and theory developed in Sections \ref{sec:estimation} and \ref{sec:inference} do \emph{not} depend on the particular parametrization and allows for arbitrary bias from the observational-data-only estimator.

\subsection{Stabilized Estimators}\label{appx:stablization}
It is well know that IPW-type estimators like $\cippw, \caippw, \ooipw$, and $\ooaipw$ can be unstable when the (estimated) weights are close to zero. To alleviate this issue, one can apply the stabilization method by normalizing the weights to sum to one \citep{robins2007comment}. The stabilized estimators are asymptotically equivalent to their unstabilized counterparts but they have better finite-sample performance.
For example, the stabilized version of $\caippw$ is 
\begin{align*}
        &\hat\beta_{c, \textsf{sAIPPW}} \\
        &= \bigg[  \frac{1}{n_c} \sum_{i\in \mathcal{C}} \frac{A_i(1-e_{{c}}(\mathbf{X}_i; \hat \bsxi )) }{\pi_c(\mathbf{X}_i)e_{{c}}(\mathbf{X}_i; \hat \bsxi ) }\bigg]^{-1} \frac{1}{n_c} \sum_{i\in \mathcal{C}} \frac{A_i(1-e_{{c}}(\mathbf{X}_i; \hat \bsxi )) (Y_i -\mu_{c1} (\mathbf{X}_i; \hat \bsalpha_{c1} ) )  }{\pi_c(\mathbf{X}_i)e_{{c}}(\mathbf{X}_i; \hat \bsxi ) } \\
        &  - \bigg[  \frac{1}{n_c} \sum_{i\in \mathcal{C}} \frac{(1-A_i)(1-e_{{c}}(\mathbf{X}_i; \hat \bsxi ))}{(1- \pi_c(\mathbf{X}_i))e_{{c}}(\mathbf{X}_i; \hat \bsxi )} \bigg]^{-1}\frac{1}{n_c} \sum_{i\in \mathcal{C}} \frac{(1-A_i)(1-e_{{c}}(\mathbf{X}_i; \hat \bsxi ))(Y_i -\mu_{c0} (\mathbf{X}_i; \hat \bsalpha_{c0} ) ) }{(1- \pi_c(\mathbf{X}_i))e_{{c}}(\mathbf{X}_i; \hat \bsxi )}   \\
& + \frac{1}{n_c} \sum_{i\in \mathcal{O}_1} \mu_{c1} (\mathbf{X}_i; \hat \bsalpha_{c1} )   - \mu_{c0} (\mathbf{X}_i; \hat \bsalpha_{c0} ). 
\end{align*}
and the stabilized AIPW estimator version of $\ooaipw$ is given by 
\begin{align*}
        \hat\beta_{o_2, \textsf{sAIPW}} &=
        \bigg[\frac{1}{n_{o_2}} \sum_{i\in \mathcal{O}_2}\frac{A_i  }{\pi_{o}(\mathbf{X}_i; \hat\bseta )} \bigg]^{-1}\frac{1}{n_{o_2}} \sum_{i\in \mathcal{O}_2} \frac{A_i (Y_i -\mu_{o1} (\mathbf{X}_i; \hat \bsalpha_{o1} ) )  }{\pi_{o}(\mathbf{X}_i; \hat\bseta )}  \\
        & \qquad - \bigg[\frac{1}{n_{o_2}} \sum_{i\in \mathcal{O}_2}\frac{1-A_i  }{1-\pi_{o}(\mathbf{X}_i; \hat\bseta )} \bigg]^{-1}\frac{1}{n_{o_2}} \sum_{i\in \mathcal{O}_2}\frac{(1-A_i)(Y_i -\mu_{o0} (\mathbf{X}_i; \hat \bsalpha_{o0} ) ) }{1- \pi_{o}(\mathbf{X}_i;\hat\bseta)}\\
        &    \qquad   +\frac{1}{n_{o_2}} \sum_{i\in \mathcal{O}_2} \mu_{o1} (\mathbf{X}_i; \hat \bsalpha_{o1} )   - \mu_{o0} (\mathbf{X}_i; \hat \bsalpha_{o0} ). 
\end{align*}

\subsection{Influence Function Representation}\label{appx:influence_func}
The estimators $\cippw, \caippw, \ooipw$, and $\ooaipw$ discussed in Section \ref{sec: setup} of the main article are generally referred to as the two-step estimators, which depend on some ``first-step''  estimators of nuisance parameters. A general device to analyze the two-step estimators is to stack the moment equations from the first step and the second step and apply the generalized method of moments (GMM) framework \citep[Chapter 6.1]{newey1994}. 

Take $\ooaipw$, the AIPW estimator based on the observational study, as an example.  Suppose that the first component of $\bfX$ is 1, representing the intercept. Let $O:= (Y, \bfX, A)$ denote the observed data vector and let $\bsphi = (\bsalpha_{o0}, \bsalpha_{o1},  \bseta) $ collect all nuisance parameters.  Then $\ooaipw$ can be written as the solution to a moment equation $n_{o_2}^{-1} \sum_{i\in \mathcal{O}_2}  g
(O_i, \beta; \hat \bsphi) = 0 $, where 
\begin{align*}
    g(O, \beta; \bsphi) &= \frac{A (Y- \mu_{o1} (\bfX; \bsalpha_{o1} )) }{\pi_o (\bfX; \bseta)} - \frac{(1-A) (Y-\mu_{o0} (\bfX; \bsalpha_{o0} )) }{1- \pi_o (\bfX; \bseta)} \\
    &\qquad + \mu_{o1} (\bfX; \bsalpha_{o1} )  - \mu_{o0} (\bfX; \bsalpha_{o0} ) - \beta. 
\end{align*}
The nuisance parameter estimator $\hat\bsphi$ can also be written as the solution to a set of moment equations $n_{o_2}^{-1} \sum_{i\in \mathcal{O}_2} m(O_i,  \bsphi) = 0 $, with $m(O, \bsphi)  = ( m(O, \bsalpha_{o0})^\top,  m(O, \bsalpha_{o1})^\top,  m(O, \bseta)^\top )^\top$, where 
\begin{align*}
       m(O, \bsalpha_{o0})& = (1- A) \bfX (Y-  \bsalpha_{o0}^\top \bfX), \\
       m(O, \bsalpha_{o1})& = A \bfX (Y-  \bsalpha_{o1}^\top \bfX), \\
       m(O, \bseta)& = \bfX (A-  \textsf{expit}(\bseta^\top \bfX)). 
\end{align*}
The definition of the moment equations $g $ and $m$, and the nuisance parameter $\bsphi$ should depend on the focal estimator, but we omit the subscripts for notational simplicity. 

 Let $\bar \bsphi$ be the solution to $\bbE[m(O, \bsphi)\mid S=o]= 0$ and let 
\begin{align*}
   &   G_\bsphi = \bbE [ \nabla_\bsphi g(O, \beta_o;\bar\bsphi)  \mid S=o ], \qquad M =  \bbE [ \nabla_\bsphi m(O, \bar\bsphi)  \mid S=o ],\\
    & g(O) = g(O, \beta_o;\bar\bsphi),  \qquad m(O) = m(O, \bar\bsphi). 
\end{align*}
By Theorem 6.1 of \cite{newey1994}, when either $\pi_o (\bfX; \bseta)$ or $\{\mu_{o0} (\bfX; \bsalpha_{o0} ), \mu_{o1} (\bfX; \bsalpha_{o1} )\}$ is correct,  we have that $\ooaipw$ is asymptotically linear, i.e., 
\begin{align*}
    \sqrt{n_{o_2}} (\ooaipw - \beta_o) =   \sum_{i\in \mathcal{O}_2} \{ g(O_i) - G_\bsphi M^{-1} m(O_i)  \} / \sqrt{n_{o_2}} +o_p(1). 
\end{align*}
Moreover, $\ooaipw$ is consistent and asymptotically normal, satisfying 
\begin{align*}
    \sqrt{n_{o_2}} (\ooaipw - \beta_o) \xrightarrow{d} N\big(0,  \bbE[ \{ g(O) - G_\bsphi M^{-1} m(O) \}^2\mid S=o ] \big). 
\end{align*}
This also gives a consistent variance estimator of $\ooaipw$ by replacing the population-level quantities in its asymptotic variance with their sample analogues. In practice, there is an easy way to obtain the variance estimator: first stack $g(O, \beta;\bsphi)$ and $m(O, \phi)$ together to form $\tilde g(O,\beta;\bsphi)= (g(O, \beta;\bsphi)^\top,m(O, \phi)^\top)^\top $, then solve for estimators of $(\beta, \bsphi^\top)^\top$ simultaneously using GMM (e.g., using the \textsf{gmm} package in \textsf{R} \citep{gmmpackage}). The variance estimator of $\ooaipw$ can be obtained immediately from the corresponding diagonal component of the outputted variance-covariance matrix.
The other estimators $\cippw, \caippw $ and $ \ooipw$ can be analyzed in the same way; the details are omitted.

\newpage
\section{Technical Proofs}

\subsection{Proof of Theorem \ref{thm:oracle_estimator}}\label{prf:thm:oracle_estimator}
The proof relies on the following version of Berry-Esseen theorem.
\begin{theorem}[Berry-Esseen theorem]
\label{thm:berry_esseen}
Let $X_1, \hdots, X_n$ be independent random variables with 
$$
    \bbE[X_i] = 0, \qquad \bbE[X_i^2] = \sigma_i^2, \qquad \bbE[|X_i|^3]  = \rho_i, \qquad \forall i\in[n].
$$
Then 
$$
    \sup_{t\in\bbR} \bigg|\bbP\bigg( \frac{\sum_{i\in[n]} X_i}{\sqrt{\sum_{i\in[n]} \sigma_i^2}} \leq t\bigg) - \Phi(t)\bigg| \leq \frac{6\sum_{i\in[n]} \rho_i}{(\sum_{i\in[n]} \sigma_i^2)^{3/2}},
$$
where $\Phi(t)$ is the CDF of $N(0, 1)$.
\end{theorem}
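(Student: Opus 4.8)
The statement to be proven is the Berry--Esseen theorem itself, so the plan is the classical Fourier-analytic argument via Esseen's smoothing inequality, adapted to the independent but non-identically-distributed setting. Write $B_n = \sum_{i\in[n]}\sigma_i^2$, let $W = B_n^{-1/2}\sum_{i\in[n]} X_i$ with distribution function $F$ and characteristic function $f(t) = \bbE[e^{itW}] = \prod_{i\in[n]} f_i(t)$, where $f_i(t) = \bbE[\exp(itX_i/\sqrt{B_n})]$, and set $L_n = B_n^{-3/2}\sum_{i\in[n]}\rho_i$ so that the target reads $\sup_t|F(t) - \Phi(t)| \le 6 L_n$. Since the left-hand side never exceeds $1$, it suffices to treat the nontrivial regime $L_n < 1/6$; there the elementary fact $\max_i \sigma_i^2/B_n \le L_n^{2/3}$ (which follows from $\sigma_i^3 \le \rho_i$ by Jensen's inequality) guarantees that no single summand dominates.

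First I would invoke Esseen's smoothing inequality \citep[Chapter 16]{feller2008introduction}: for every $T>0$,
\[
    \sup_{t}|F(t) - \Phi(t)| \le \frac{1}{\pi}\int_{-T}^{T}\left|\frac{f(t) - e^{-t^2/2}}{t}\right|\,dt + \frac{24}{\pi T}\cdot\frac{1}{\sqrt{2\pi}},
\]
which reduces matters to bounding the characteristic-function difference on $[-T,T]$ and then choosing $T\asymp 1/L_n$ so that the tail term is of order $L_n$. The integral I would split into a low-frequency band and a medium-frequency band. The two workhorse per-factor estimates, both coming from a third-order Taylor expansion with a third-moment remainder, are $f_i(t) = 1 - \sigma_i^2 t^2/(2B_n) + R_i(t)$ with $|R_i(t)| \le \rho_i|t|^3/(6B_n^{3/2})$ valid for all $t$, and the modulus bound $|f_i(t)| \le \exp(-\sigma_i^2 t^2/(3B_n))$ valid for $|t|\le \sigma_i^2\sqrt{B_n}/\rho_i$.

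On the low-frequency band I would combine these with the telescoping identity $f(t)-e^{-t^2/2} = \sum_i (\prod_{j<i}f_j)(f_i-g_i)(\prod_{j>i}g_j)$, where $g_i(t)=e^{-\sigma_i^2 t^2/(2B_n)}$, bounding the partial products by the Gaussian envelopes from the modulus estimate and summing the remainders via $\sum_i \rho_i/B_n^{3/2}=L_n$. This yields
\[
    \big|f(t) - e^{-t^2/2}\big| \le C\, L_n\, |t|^3\, e^{-t^2/4},
\]
whose contribution to the smoothing integral is $O(L_n)$ after integrating $C L_n|t|^2 e^{-t^2/4}$ over $\bbR$. On the medium band I would instead bound $|f(t)-e^{-t^2/2}| \le |f(t)| + e^{-t^2/2}$, where the Gaussian term contributes negligibly and the modulus $|f(t)|=\prod_i|f_i(t)|$ is controlled by the product of Gaussian envelopes. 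Adding the three pieces and carefully tracking the numerical constants through the smoothing inequality, the Taylor remainder, and the Gaussian integral is what produces the explicit factor $6$.

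The main obstacle is extending the modulus control of $f(t)$ all the way out to $|t|\asymp 1/L_n$, rather than merely to the ``local'' scale $|t|\lesssim L_n^{-1/3}$ on which the per-factor expansions are naively valid; this is precisely what upgrades Lyapunov's weaker rate $L_n^{1/3}$ to the sharp Berry--Esseen rate $L_n$, and it must be done with no non-lattice (Cram\'er) assumption. The two ingredients that make it go through are that the envelope bound $|f_i(t)|\le \exp(-\sigma_i^2 t^2/(3B_n))$ rests only on the Taylor remainder and so remains a valid modulus bound up to $|t|\le \sigma_i^2\sqrt{B_n}/\rho_i$ (no resonance enters below this scale), and the no-dominant-summand relation $\max_i \sigma_i^2/B_n \le L_n^{2/3}$, which in the regime $L_n<1/6$ forces these per-factor ranges to collectively reach $T\asymp 1/L_n$. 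Verifying this reach quantitatively, and keeping the constants sharp enough to land on $6$, is the delicate Feller-type bookkeeping at the heart of the proof.
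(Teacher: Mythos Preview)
The paper does not actually prove this statement: its ``proof'' is a one-line citation to Theorem~2 in Chapter~16 of Feller's textbook. Your proposal is a faithful sketch of precisely that classical Fourier-analytic argument (Esseen's smoothing inequality, third-order Taylor expansion of each factor, Gaussian envelope on the characteristic function, and the splitting of the frequency range), so you are not taking a different route---you are supplying the details the paper deliberately omits. The outline is correct; the only place to be careful is the one you flag yourself, namely that the per-factor modulus range $|t|\le \sigma_i^2\sqrt{B_n}/\rho_i$ varies with $i$, so reaching $T\asymp 1/L_n$ requires the usual trick of dropping the factors whose range is too short (bounding them by $1$) and showing the remaining product still decays, rather than a uniform bound over all $i$.
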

\begin{proof}
    See, e.g., Theorem 2 in Chapter 16 of \cite{feller2008introduction}.
\end{proof}

We will in fact prove the following stronger result:
\begin{equation}
    \label{eq:amse_convergence_convex_comb}
    \sup_{{\tilde{\weight}}\in[0, 1]} \bigg|\frac{\bbE[|\hat\beta_{\tilde{\weight}} - \beta^\star|^2]}{\amse(\hat\beta_{\tilde{\weight}})} - 1\bigg| \to 0 \qquad \textnormal{as }  n_\rct \land n_{\obs} \to \infty,
\end{equation}  
where 
\begin{equation*}
    \amse(\hat\beta_{\tilde{\weight}}) = 
    \frac{(1-{\tilde{\weight}})^2 \sigma_\rct^2}{n_\rct} + \frac{{\tilde{\weight}}^2 \sigma_\obs^2}{n_{\obs}} + {\tilde{\weight}}^2\Delta^2.
\end{equation*}
The formula for $\amse(\hat\beta_\rct)$ and $\amse(\hat\beta_w)$ follows by setting $\tilde\weight = 0$ and $\tilde\weight = \weight$, respectively. As an immediate result, we have
$$
    \amse(\hat\beta_\orac) = 
    \begin{cases}
        \frac{\sigma_\rct^2}{n_\rct} & \textnormal{ if }  \overline\Delta \geq \sigma_c/\sqrt{n_\rct}\\
        \frac{\sigma_\rct^2 \sigma_\obs^2}{n_\rct \sigma_\obs^2 + n_{\obs} \sigma_\rct^2} + \weight^2 \Delta^2 & \textnormal{ otherwise},
    \end{cases}.
$$
If $\overline\Delta \geq \sigma_\rct/\sqrt{n_\rct}$, we have
$$
    \amse(\hat\beta_\orac) = \frac{\sigma_\rct^2}{n_\rct}  = \frac{\sigma_\rct^2}{n_\rct}\land \overline{\Delta}^2 \leq 
    \frac{\sigma_\rct^2 \sigma_\obs^2}{n_\rct \sigma_\obs^2 + n_{\obs} \sigma_\rct^2} +  \frac{\sigma_\rct^2}{n_\rct}\land {\overline{\Delta}^2}.
$$
If $\overline\Delta < \sigma_\rct/\sqrt{n_\rct}$, we have
$$
   \amse(\hat\beta_\orac) \leq  \frac{\sigma_\rct^2 \sigma_\obs^2}{n_\rct \sigma_\obs^2 + n_{\obs} \sigma_\rct^2} + \weight^2 \Delta^2 \leq \frac{\sigma_\rct^2 \sigma_\obs^2}{n_\rct \sigma_\obs^2 + n_{\obs} \sigma_\rct^2} + \overline\Delta^2 = \frac{\sigma_\rct^2 \sigma_\obs^2}{n_\rct \sigma_\obs^2 + n_{\obs} \sigma_\rct^2} +  \frac{\sigma_\rct^2}{n_\rct}\land \overline{\Delta}^2.
$$
The formula for $\amse(\hat\beta_\orac)$ follows by combining the above two cases.

It suffices to prove \eqref{eq:amse_convergence_convex_comb}.
For notational simplicity, we write $\psi_\rct(X_i, A_i, Y_i) = \psi_{\rct, i}$ for $i\in\Rct$ and $\psi_\obs(X_i, A_i, Y_i) = \psi_{\obs, i}$ for $i\in\Obs$.
By Berry-Esseen theorem (see Theorem \ref{thm:berry_esseen}), for any fixed ${\tilde{\weight}}\in[0, 1]$, we have
\begin{align*}
     \sup_{t\in\bbR}\bigg|\bbP\bigg[ 
    \frac{ {(1-{\tilde{\weight}}) n_\rct^{-1} \sum_{i\in\Rct} \psi_{\rct, i}} + {{\tilde{\weight}} n_{\obs}^{-1}\sum_{i\in\Obs} \psi_{\obs, i}} }{\sqrt{\frac{(1-{\tilde{\weight}})^2 \sigma_\rct^2}{n_\rct} + \frac{{\tilde{\weight}}^2 \sigma_\obs^2}{n_{\obs}}}} \leq t\bigg] - \Phi(t)\bigg| 
    & \leq 6 \cdot \frac{\frac{(1-{\tilde{\weight}})^3 \rho_\rct}{n_\rct^2} + \frac{{\tilde{\weight}}^3 \rho_\obs}{n_{\obs}^2}}{\bigg(\frac{(1-{\tilde{\weight}})^2 \sigma_\rct^2}{n_\rct} + \frac{{\tilde{\weight}}^2\sigma_\obs^2}{n_{\obs}}\bigg)^{3/2}}\\
    & \leq 6 \bigg(\frac{\rho_\rct}{\sigma_\rct^3 \sqrt{n_\rct}} + \frac{\rho_\obs}{\sigma_\obs^3 \sqrt{n_{\obs}}}\bigg).
\end{align*}
Taking the supremum over ${\tilde{\weight}}\in[0, 1]$ at both sides, we get
\begin{align*}
    & \sup_{{\tilde{\weight}} \in [0, 1]}
    \sup_{t\in\bbR}\bigg|\bbP\bigg[ 
    \frac{ {(1-{\tilde{\weight}}) n_\rct^{-1} \sum_{i\in\Rct} \psi_{\rct, i}} + {{\tilde{\weight}} n_{\obs}^{-1}\sum_{i\in\Obs} \psi_{\obs, i}} }{\sqrt{\frac{(1-{\tilde{\weight}})^2 \sigma_\rct^2}{n_\rct} + \frac{{\tilde{\weight}}^2 \sigma_\obs^2}{n_{\obs}}}} \leq t\bigg] - \Phi(t)\bigg| 
    \label{eq:berry_esseen_comvex_comb}
    \leq 6 \bigg(\frac{\rho_\rct}{\sigma_\rct^3 \sqrt{n_\rct}} + \frac{\rho_\obs}{\sigma_\obs^3 \sqrt{n_{\obs}}}\bigg).\numberthis
\end{align*}
From the definition of $\hat\beta_{\tilde{\weight}}$ and Assumption \ref{assump:asymp_linear}, we have
\begin{align*}
    \frac{\hat\beta_{\tilde{\weight}} - [(1-{\tilde{\weight}})\beta^\star + {\tilde{\weight}} \beta_\obs]}{\sqrt{ \frac{(1-{\tilde{\weight}})^2 \sigma_\rct^2}{n_\rct} + \frac{{\tilde{\weight}}^2 \sigma_\obs^2}{n_{\obs}}}} 
    & = \frac{ (1-{\tilde{\weight}})n_\rct^{-1} \sum_{i\in\calC} \psi_{\rct, i} + {\tilde{\weight}} n_{\obs}^{-1}\sum_{i\in\Obs}\psi_{\obs, i} + (1-{\tilde{\weight}}) o_p(n_\rct^{-1/2}) + {\tilde{\weight}} o_p(n_{\obs}^{-1/2}) }{\sqrt{ \frac{(1-{\tilde{\weight}})^2 \sigma_\rct^2}{n_\rct} + \frac{{\tilde{\weight}}^2 \sigma_\obs^2}{n_{\obs}}}} \\
    & = \frac{ (1-{\tilde{\weight}})n_\rct^{-1} \sum_{i\in\calC} \psi_{\rct, i} + {\tilde{\weight}} n_{\obs}^{-1}\sum_{i\in\Obs}\psi_{\obs, i}  }{\sqrt{ \frac{(1-{\tilde{\weight}})^2 \sigma_\rct^2}{n_\rct} + \frac{{\tilde{\weight}}^2 \sigma_\obs^2}{n_{\obs}}}} + o_p(1).
\end{align*}
Using \eqref{eq:berry_esseen_comvex_comb} and invoking Slutsky's theorem, we get
\begin{equation}
    \label{eq:asymp_normal_convex_comb}
    \frac{\hat\beta_{\tilde{\weight}} - [(1-{\tilde{\weight}})\beta^\star + {\tilde{\weight}} \beta_\obs]}{\sqrt{ \frac{(1-{\tilde{\weight}})^2 \sigma_\rct^2}{n_\rct} + \frac{{\tilde{\weight}}^2 \sigma_\obs^2}{n_{\obs}}}} \xrightarrow{d} N(0, 1)
\end{equation}  
as $n_\rct \land n_{\obs} \to \infty$, where $\overset{d}{\to}$ denotes convergence in distribution and the rate of convergence is independent of the value of ${\tilde{\weight}}$. We now write
\begin{align*}
    & |\hat \beta_{\tilde{\weight}} - \beta^\star|^2 \\
    & = \bigg| (1-{\tilde{\weight}})\hat\beta_\rct + {\tilde{\weight}} \hat\beta_\obs - [(1-{\tilde{\weight}})\beta^\star + {\tilde{\weight}} \beta_\obs] - {\tilde{\weight}} \Delta\bigg|^2 \\
    & = \bigg| (1-{\tilde{\weight}})\hat\beta_\rct + {\tilde{\weight}} \hat\beta_\obs - [(1-{\tilde{\weight}})\beta^\star + {\tilde{\weight}} \beta_\obs]\bigg|^2 - 2 {\tilde{\weight}} \Delta \bigg((1-{\tilde{\weight}})\hat\beta_\rct + {\tilde{\weight}} \hat\beta_\obs - [(1-{\tilde{\weight}})\beta^\star + {\tilde{\weight}} \beta_\obs]\bigg) + {\tilde{\weight}}^2\Delta^2.
\end{align*}
By \eqref{eq:asymp_normal_convex_comb}, we have
\begin{align*}
    & \frac{\bbE\bigg| (1-{\tilde{\weight}})\hat\beta_\rct + {\tilde{\weight}} \hat\beta_\obs - [(1-{\tilde{\weight}})\beta^\star + {\tilde{\weight}} \beta_\obs]\bigg|^2}{ \frac{(1-{\tilde{\weight}})^2 \sigma_\rct^2}{n_\rct} + \frac{{\tilde{\weight}}^2 \sigma_\obs^2}{n_{\obs}}}  = 1+o(1), \qquad \frac{\bbE\bigg[ (1-{\tilde{\weight}})\hat\beta_\rct + {\tilde{\weight}} \hat\beta_\obs - [(1-{\tilde{\weight}})\beta^\star + {\tilde{\weight}} \beta_\obs] \bigg]}{\sqrt{ \frac{(1-{\tilde{\weight}})^2 \sigma_\rct^2}{n_\rct} + \frac{{\tilde{\weight}}^2 \sigma_\obs^2}{n_{\obs}}}} = o(1),
\end{align*}
where the $o(1)$ term is independent of the value of ${\tilde{\weight}}$. Thus, 
\begin{align*}
     \frac{\bbE|\hat\beta_{\tilde{\weight}} - \beta^\star|^2}{\frac{(1-{\tilde{\weight}})^2 \sigma_\rct^2}{n_\rct} + \frac{{\tilde{\weight}}^2 \sigma_\obs^2}{n_{\obs}} + {\tilde{\weight}}^2\Delta^2} 
    & =  1 + o(1) +  \frac{\bbE\bigg[(1-{\tilde{\weight}})\hat\beta_\rct + {\tilde{\weight}} \hat\beta_\obs - [(1-{\tilde{\weight}})\beta^\star + {\tilde{\weight}} \beta_\obs]\bigg]}{\sqrt{\frac{(1-{\tilde{\weight}})^2 \sigma_\rct^2}{n_\rct} + \frac{{\tilde{\weight}}^2 \sigma_\obs^2}{n_{\obs}}}} \cdot \frac{2 {\tilde{\weight}} \Delta \sqrt{ \frac{(1-{\tilde{\weight}})^2 \sigma_\rct^2}{n_\rct} + \frac{{\tilde{\weight}}^2 \sigma_\obs^2}{n_{\obs}} }}{\frac{(1-{\tilde{\weight}})^2 \sigma_\rct^2}{n_\rct} + \frac{{\tilde{\weight}}^2 \sigma_\obs^2}{n_{\obs}} + {\tilde{\weight}}^2\Delta^2} \\
    & = 1+ o(1) + o(1) \cdot \frac{2 {\tilde{\weight}} \Delta \sqrt{ \frac{(1-{\tilde{\weight}})^2 \sigma_\rct^2}{n_\rct} + \frac{{\tilde{\weight}}^2 \sigma_\obs^2}{n_{\obs}} }}{\frac{(1-{\tilde{\weight}})^2 \sigma_\rct^2}{n_\rct} + \frac{{\tilde{\weight}}^2 \sigma_\obs^2}{n_{\obs}} + {\tilde{\weight}}^2\Delta^2}.
\end{align*}
The proof of \eqref{eq:amse_convergence_convex_comb} is concluded by noting that
$$
    \frac{2 |{\tilde{\weight}} \Delta| \sqrt{ \frac{(1-{\tilde{\weight}})^2 \sigma_\rct^2}{n_\rct} + \frac{{\tilde{\weight}}^2 \sigma_\obs^2}{n_{\obs}} }}{\frac{(1-{\tilde{\weight}})^2 \sigma_\rct^2}{n_\rct} + \frac{{\tilde{\weight}}^2 \sigma_\obs^2}{n_{\obs}} + {\tilde{\weight}}^2\Delta^2} \leq 1,
$$
which follows from $a^2 + b^2 \geq 2ab$ for any $a, b \in \bbR$.

\subsection{Proof of Theorem \ref{thm:lb}}\label{prf:thm:lb}
    We first prove
    \begin{equation}
        \label{eq:lb_1}
        \inf_{\hat\beta} \sup_{(\beta^\star, \beta_\obs)\in \calP_{\overline\Delta}} \bbE [|\hat \beta - \beta^\star|^2] 
        \gtrsim \frac{\sigma_\rct^2}{n_\rct} \land \bar\Delta^2 .
    \end{equation}  
    To show this, consider the prior distribution $\beta^\star = \delta v_\rct, \beta_\obs = \delta v_\obs$, where $v_\rct, v_\obs \in \{\pm 1\}$ are independent symmetric Rademacher random variables, and $\delta$ is a constant that satisfies $\delta\leq \bar\Delta/2$, whose exact value will be determined later. The requirement of $\delta \leq \bar\Delta/2$ is to make sure 
    $$
    |\beta^\star - \beta_\obs| = \delta |v_\rct - v_\obs|\leq \overline\Delta,
    $$ 
    so that the specified prior distribution is indeed placed on the parameter space $\calP_{\overline\Delta}$. 
    Now, we proceed by
    \begin{align*}
        & \inf_{\hat\beta} \sup_{(\beta^\star, \beta_\obs)\in \calP_{\overline\Delta}} \bbE [|\hat \beta - \beta^\star|^2] \\
        & \geq 
        \inf_{\hat\beta} \bbE_{v_c, v_o} \bbE [|\hat \beta - \delta v_\rct|^2] \\
        & = \delta^2 \inf_{\hat v_\rct\in\{\pm 1\}}\bbE_{v_\rct, v_\obs} \bbE[|\hat v_\rct - v_\rct|^2] \\
        & \gtrsim \delta^2 \inf_{\hat v_\rct\in\{\pm 1\}}\bbE_{v_\rct, v_\obs} \bbP(\hat v_\rct \neq v_\rct)  \\
        & \geq \delta^2 \bbE_{v_\obs,\{\bfX_i\}, \{A_i\}}\inf_{\hat v_\rct\in\{\pm 1\}}\bbP(\hat v_\rct \neq v_\rct ~|~ v_\obs, \{\bfX_i\}, \{A_i\})\\
        & = \delta^2 \bbE_{v_\obs,\{\bfX_i\}, \{A_i\}} \inf_{\hat v_\rct\in\{\pm 1\}} \frac{1}{2}\bigg(\bbP(\hat v_\rct = -1 ~|~ v_\rct = +1, v_\obs, \{\bfX_i\}, \{A_i\})\\
        \label{eq:lb_reduction}
        & \qquad\qquad\qquad\qquad\qquad\qquad\qquad  + \bbP(\hat v_\rct = +1 ~|~ v_\rct = -1, v_\obs,  \{\bfX_i\}, \{A_i\})\bigg) \numberthis,
    \end{align*}
    Note that the infimum term in the right-hand side above is the type-\RN{1} plus type-\RN{2} error of testing $H_0: v_\rct = +1$ v.s. $H_1: v_\rct = -1$ conditional on the knowledge of $v_\obs$, $\{A_i: i\in \Rct\cup \Obs\}$ and $\{\bfX_i: i\in \Rct\cup\Obs\}$. The likelihood function under $H_0$ is given by
    \begin{align*}
        L_0 & = \bigg(\prod_{i\in\Rct, A_i = 1} \frac{1}{\sqrt{2\pi \sigma_\rct^2}} \exp\{-(Y_i - \delta/2 - \bsgamma_\rct^\top \bfX_i)^2/(2\sigma_\rct^2)\}\bigg) \\
        & \qquad \times \bigg(\prod_{i\in\Rct, A_i = 0}\frac{1}{\sqrt{2\pi \sigma_\rct^2}} \exp\{-(Y_i + \delta/2 - \bsgamma_\rct^\top \bfX_i)^2/(2\sigma_\rct^2)\}\bigg) \\
        & \qquad \times \bigg(\prod_{i\in\Obs, A_i = 1} \frac{1}{\sqrt{2\pi \sigma_\obs^2}} \exp\{-(Y_i - \delta v_\obs/2 - \bsgamma_\obs^\top \bfX_i)^2/(2\sigma_\obs^2)\}\bigg) \\
        & \qquad \times
        \bigg(\prod_{i\in\Obs, A_i = 0} \frac{1}{\sqrt{2\pi \sigma_\obs^2}} \exp\{-(Y_i + \delta v_\obs/2 -\bsgamma_\obs^\top \bfX_i)^2/(2\sigma_\obs^2)\}\bigg),
    \end{align*}    
    and the likelihood function under $H_1$ is given by
    \begin{align*}
        L_1 & = \bigg(\prod_{i\in\Rct, A_i = 1} \frac{1}{\sqrt{2\pi \sigma_\rct^2}} \exp\{-(Y_i + \delta/2 - \bsgamma_\rct^\top \bfX_i)^2/(2\sigma_\rct^2)\}\bigg) \\
        & \qquad \times \bigg(\prod_{i\in\Rct, A_i = 0} \frac{1}{\sqrt{2\pi \sigma_\rct^2}} \exp\{-(Y_i - \delta/2 - \bsgamma_\rct^\top \bfX_i)^2/(2\sigma_\rct^2)\}\bigg) \\
        & \qquad \times \bigg(\prod_{i\in\Obs, A_i = 1} \frac{1}{\sqrt{2\pi \sigma_\obs^2}} \exp\{-(Y_i - \delta v_\obs/2 - \bsgamma_\obs^\top \bfX_i)^2/(2\sigma_\obs^2)\}\bigg) \\
        & \qquad \times
        \bigg(\prod_{i\in\Obs, A_i = 0} \frac{1}{\sqrt{2\pi \sigma_\obs^2}} \exp\{-(Y_i + \delta v_\obs/2 -\bsgamma_\obs^\top \bfX_i)^2/(2\sigma_\obs^2)\}\bigg),
    \end{align*}    
    By Neyman--Pearson Lemma, the optimal test that minimizes the sum of type-\RN{1} and type-\RN{2} error is given by rejecting $H_0$ when $L_1 \geq L_0$. This criterion translates to rejecting $H_0$ when
    $$
        \sum_{i\in\Rct, A_i = 1} (Y_i -\bsgamma_\rct^\top \bfX_i) - \sum_{i\in\Rct, A_i = 0} (Y_i -\bsgamma_\rct^\top \bfX_i)\leq 0.
    $$
    Under $H_0$, we know that $Y_i - \bsgamma_\rct^\top \bfX_i \mid A_i, \bfX_i \overset{i.i.d.}{\sim} N( (A_i-1/2)\delta, \sigma_\rct^2)$ for any $i\in\Rct$.
    Denoting $n_{\rct, 1} = |\{i\in \Rct: A_i = 1\}|$ and $n_{\rct, 0} = |\{i\in \Rct: A_i = 0\}|$.
    The left-hand side above is then distributed as
    $$
        N\bigg(
            \frac{\delta(n_{c, 1} + n_{c, 0})}{2},
            \sigma_\rct^2 (n_{c, 1} + n_{c, 0})
        \bigg)
        \overset{d}{=}
        N\bigg(
            \frac{\delta n_\rct}{2},
            \sigma_\rct^2 n_\rct
        \bigg).
    $$
    Thus, the type-\RN{1} error is given by
    \begin{align*}
        \bbP\bigg[
        N\bigg(
            \frac{\delta n_\rct}{2},
            \sigma_\rct^2 n_\rct
        \bigg) \leq 0
        \bigg] = \Phi \bigg(- \frac{\delta}{2 \sigma_\rct/\sqrt{n_\rct}}\bigg).
    \end{align*}
    A symmetric argument shows that the type-\RN{2} error also admits the above lower bound. Thus, we have
    \begin{align*}
        \inf_{\hat\beta} \sup_{(\beta^\star, \beta_\obs)\in \calP_{\overline\Delta}} \bbE [|\hat \beta - \beta^\star|^2] 
        & \gtrsim \delta^2 \Phi\bigg(-\frac{\delta}{2 \sigma_{\rct}/\sqrt{n_{\rct}}}\bigg)
    \end{align*}
    for any $\delta\leq \overline\Delta/2$. We consider the following two cases.
    \begin{enumerate}
        \item If $\overline\Delta^2 \geq \sigma_\rct^2/n_\rct$, we choose $\delta = \sigma_\rct/(2\sqrt{n_\rct})$, which gives 
        $$
        \inf_{\hat\beta} \sup_{(\beta^\star, \beta_\obs)\in \calP_{\overline\Delta}} \bbE [|\hat \beta - \beta^\star|^2] 
        \gtrsim \frac{\sigma_\rct^2}{n_\rct}.
        $$
        \item If $\overline\Delta^2 < \sigma_\rct^2/n_\rct$, we choose $\delta = \overline\Delta/2$, which gives 
        $$
        \inf_{\hat\beta} \sup_{(\beta^\star, \beta_\obs)\in \calP_{\overline\Delta}} \bbE [|\hat \beta - \beta^\star|^2] 
        \gtrsim \overline\Delta^2\Phi\bigg(-\frac{\overline\Delta}{2 \sigma_\rct/\sqrt{n_c}}\bigg) \gtrsim \overline\Delta^2 .
        $$
    \end{enumerate}
    Equation \eqref{eq:lb_1} follows from the above two cases. 

    We then show
    \begin{equation}
        \label{eq:lb_2}
        \inf_{\hat\beta} \sup_{(\beta^\star, \beta_\obs)\in \calP_{\overline\Delta}} \bbE [|\hat \beta - \beta^\star|^2] 
        \gtrsim \frac{\sigma_\rct^2 \sigma_\obs^2}{n_\rct \sigma_\obs^2 + n_\obs \sigma_\rct^2}.
    \end{equation}  
    If the above inequality holds, then we get
    $$
        \inf_{\hat\beta} \sup_{(\beta^\star, \beta_\obs)\in \calP_{\overline\Delta}} \bbE [|\hat \beta - \beta^\star|^2] 
        \gtrsim \bigg(\frac{\sigma_\rct^2 \sigma_\obs^2}{n_\rct \sigma_\obs^2 + n_\obs \sigma_\rct^2}\bigg) \lor  \bigg(\frac{\sigma_\rct^2}{n_\rct} \land \overline\Delta^2\bigg) 
        \gtrsim \frac{\sigma_\rct^2 \sigma_\obs^2}{n_\rct \sigma_\obs^2 + n_\obs \sigma_\rct^2} + \frac{\sigma_\rct^2}{n_\rct} \land \overline\Delta^2
        ,
    $$
    which is our desired result. To show \eqref{eq:lb_2}, we consider the prior distribution $\beta^\star = \beta_\obs = \delta v$, where $v$ is a symmetric Rademacher random variable and $\delta > 0$ is a constant whose value will be specified later. 
    Note that under this prior distribution, $|\beta^\star-\beta_\obs| = 0$, so $(\beta^\star, \beta_\obs)\in\calP(\bar\Delta)$ automatically holds. Using a nearly identical argument as what led to \eqref{eq:lb_reduction}, we have
    \begin{align*}
        & \inf_{\hat\beta} \sup_{(\beta^\star, \beta_\obs)\in \calP_{\overline\Delta}} \bbE [|\hat \beta - \beta^\star|^2] \\
        & \gtrsim \delta^2 \bbE_{\{\bfX_i\}, \{A_i\}}\inf_{\hat v\in\{\pm 1\}} \frac{1}{2}\bigg(\bbP(\hat v= -1 ~|~ v = +1, \{A_i\}, \{\bfX_i\}) + \bbP(\hat v= +1 ~|~ v= -1, \{A_i\}, \{\bfX_i\})\bigg) ,
    \end{align*}
    The right-hand side above again amounts to a binary hypothesis testing problem between $H_0: v = +1$ and $H_1: v = -1$ given the knowledge of $\{A_i\}$ and $\{\bfX_i\}$. The likelihood functions can again be explicitly calculated, and with some algebra, we arrive at the following optimal test, which rejects $H_0$ when
    $$
        \frac{1}{\sigma_\rct^2} \sum_{i\in\Rct, A_i = 1} (Y_i - \bsgamma_\rct^\top \bfX_i) - \frac{1}{\sigma_\rct^2} \sum_{i\in\Rct, A_i = 0} (Y_i - \bsgamma_\rct^\top \bfX_i) + \frac{1}{\sigma_\obs^2} \sum_{i\in\Obs, A_i = 1} (Y_i - \bsgamma_\obs^\top \bfX_i) -  \frac{1}{\sigma_\obs^2} \sum_{i\in\Obs, A_i = 0} (Y_i - \bsgamma_\obs^\top \bfX_i) \leq 0.
    $$
    Under $H_0$, we have $Y_i- \bsgamma_\rct^\top \bfX_i \mid A_i, \bfX_i \overset{i.i.d.}\sim N( (A_i-1/2)\delta, \sigma_\rct^2)$ for $i \in \Rct$ and $Y_i - \bsgamma_\obs^\top \bfX_i\mid A_i, \bfX_i \overset{i.i.d.}\sim N((A_i - 1/2)\delta, \sigma_\obs^2)$ for $i \in \Obs$. Thus, 
    the type-\RN{1} error (and type-\RN{2} error as well by a symmetric argument) takes the following form:
    \begin{align*}
        \bbP\bigg[ N\bigg( \frac{\delta n_\rct}{2\sigma_\rct^2} + \frac{\delta n_{\obs}}{2\sigma_\obs^2}, \frac{n_{\rct} }{\sigma_\rct^2} + \frac{n_{\obs}}{\sigma_\obs^2}\bigg)  \leq 0\bigg] 
        & = \Phi\bigg({-\frac{\delta}{2}\sqrt{\frac{n_{\rct}}{\sigma_\rct^2} + \frac{n_{\obs}}{2\sigma_\obs^2}}}\bigg) .
    \end{align*}
    Choosing $\delta = 1 / \sqrt{\frac{n_{\rct}}{\sigma_\rct^2} + \frac{n_{\obs}}{\sigma_\obs^2}}$, we get
    $$
        \inf_{\hat\beta} \sup_{(\beta^\star, \beta_\obs)\in \calP_{\overline\Delta}} \bbE [|\hat \beta - \beta^\star|^2] \gtrsim \frac{1}{n_\rct/\sigma_\rct^2 + n_\obs/\sigma_\obs^2},
    $$
    which is exactly \eqref{eq:lb_2}. The proof is concluded.

\subsection{Proof of Theorem \ref{thm:high_prob_bound_adaptive_estimator}} \label{prf:thm:high_prob_bound_adaptive_estimator}

Similar to the proof of Theorem \ref{thm:oracle_estimator}, we write $\psi_\rct(X_i, A_i, Y_i) = \psi_{\rct, i}$ for $i\in\Rct$ and $\psi_\obs(X_i, A_i, Y_i) = \psi_{\obs, i}$ for $i\in\Obs$.

We first show that $\Delta$ is feasible for \eqref{eq:est_delta} with high probability. By construction, we have
\begin{align*}
    \frac{\tilde \Delta - \Delta}{\sqrt{\frac{\sigma_\rct^2}{n_\rct} + \frac{\sigma_\obs^2}{n_{\obs}}}} 
    = \frac{n_\rct^{-1}\sum_{i\in\Rct} \psi_{\rct, i} - n_{\obs}^{-1}\sum_{i\in\Obs}\psi_{\obs, i}}{\sqrt{\frac{\sigma_\rct^2}{n_\rct} + \frac{\sigma_\obs^2}{n_{\obs}}}} + o_p(1).
\end{align*}
An application of Berry-Esseen theorem (see Theorem \ref{thm:berry_esseen}) gives
\begin{align*}
    \sup_{t\in\bbR}\bigg| \bbP\bigg( \frac{n_\rct^{-1}\sum_{i\in\Rct} \psi_{\rct, i} - n_{\obs}^{-1}\sum_{i\in\Obs}\psi_{\obs, i}}{\sqrt{\frac{\sigma_\rct^2}{n_\rct} + \frac{\sigma_\obs^2}{n_{\obs}}}} \leq t \bigg) -\Phi(t) \bigg| \leq 6 \cdot \frac{\frac{\rho_\rct}{n_\rct^2} + \frac{\rho_\obs}{n_{\obs}^2}}{(\frac{\sigma_\rct^2}{n_\rct} + \frac{\sigma_\obs^2}{n_{\obs}})^{3/2}} \leq 6 \bigg(\frac{\rho_\rct}{\sigma_\rct^3 \sqrt{n_\rct}} + \frac{\rho_\obs}{\sigma_\obs^3 \sqrt{n_{\obs}}}\bigg)  = o(1),
\end{align*}
where the last inequality is by $n_\rct \land n_{\obs} \to \infty$.
Thus, for any $t > 0$, we have
$$
    \bigg|\frac{n_\rct^{-1}\sum_{i\in\Rct} \psi_{\rct, i} - n_{\obs}^{-1}\sum_{i\in\Obs}\psi_{\obs, i}}{\sqrt{\frac{\sigma_\rct^2}{n_\rct} + \frac{\sigma_\obs^2}{n_{\obs}}}}\bigg| \leq t
$$
with probability at least $1-2\Phi(-t) - o(1)$. As long as $t \gtrsim 1$, the magnitude of the $o_p(1)$ term is at most $t$ with probability at least $1-o(1)$. Thus, a union bound gives that with probability at least $1-2\Phi(-t) - o(1)$, 
$$
    |\tilde\Delta - \Delta| \lesssim  t \sqrt{\frac{\sigma_\rct^2}{n_\rct} + \frac{\sigma_\obs^2}{n_{\obs}}}.
$$
Consider the event
$$
    \calE := \{0.9\sigma_\rct \leq \hat \sigma_\rct \leq 1.1\sigma_\rct\} \cap \{0.9\sigma_\obs \leq \hat \sigma_\obs \leq 1.1\sigma_\obs\}.
$$
Since $\hat\sigma_\rct\overset{p}{\to}{\sigma_\rct}$ and $\hat\sigma_\obs\overset{p}{\to}{\sigma_\obs}$, we know that $\bbP(\calE) \geq 1-o(1)$. Thus, we have
$$
    |\tilde\Delta - \Delta| \leq C t \sqrt{\frac{\hat\sigma_\rct^2}{n_\rct} + \frac{\hat\sigma_\obs^2}{n_{\obs}}}
$$
with probability $1-2\Phi(-t)-o(1)$, where $C>0$ is some absolute constant. Under the choice of $t = \lambda/C$, the above bound reads
\begin{align*}
    \bbP\bigg(|\tilde\Delta - \Delta| \leq \lambda \cdot \sqrt{\frac{\hat\sigma_\rct^2}{n_\rct} + \frac{\hat\sigma_\obs^2}{n_{\obs}}}\bigg) 
    & \geq 1 - 2\Phi(-{\lambda}/{C}) - o(1)\\
    & \geq 1 - \calO(e^{-\lambda^2/(2C^2)}) - o(1) \\
    \label{eq:high_prob_feasible}
    & \geq 1- e^{-\sfc \lambda^2} - o(1).\numberthis
\end{align*}    
Here, the second inequality above is by the Gaussian tail bound $\Phi(-t) \leq \frac{1}{t\sqrt{2\pi}}e^{-t^2/2}$ and $\lambda \gtrsim 1$, and $\sfc > 0$ is an absolute constant only depending on $C$. Thus, we have shown that $\Delta$ is feasible for for \eqref{eq:est_delta} with probability at least $1- e^{-\sfc \lambda^2} - o(1)$.
On this high probability event, we have $|\hat\Delta_\lambda|\leq |\Delta|$, which further yields
\begin{equation}
    \label{eq:ub_by_delta}
    |\hat\Delta_\lambda -\Delta| \leq 2|\Delta|.
\end{equation}
Meanwhile, on the same high probability event, we have
\begin{align*}
    |\hat\Delta_\lambda - \Delta| & \leq |\tilde \Delta - \Delta| + |\hat \Delta_\lambda - \tilde \Delta|  \\
    & \leq \lambda \cdot \sqrt{\frac{\hat\sigma_\rct^2}{n_\rct} + \frac{\hat\sigma_\obs^2}{n_{\obs}}} +  |\hat\Delta_\lambda - \tilde \Delta|\\
    & \leq 2\lambda \cdot \sqrt{\frac{\hat\sigma_\rct^2}{n_\rct} + \frac{\hat\sigma_\obs^2}{n_{\obs}}},
\end{align*}    
where the second inequality is by \eqref{eq:high_prob_feasible} and the last inequality is by the constraint in \eqref{eq:est_delta}. Recall that the high probability event in \eqref{eq:high_prob_feasible} is a superset of the event $\calE$, so we can further upper bound the right-hand side above by
\begin{equation}
    \label{eq:ub_by_sd}
    2\sqrt{1.1} \cdot \lambda \cdot \sqrt{\frac{\sigma_\rct^2}{n_\rct} + \frac{\sigma_\obs^2}{n_{\obs}}}.
\end{equation}  
Summarizing \eqref{eq:ub_by_delta} and \eqref{eq:ub_by_sd}, we arrive at
\begin{equation}
    \label{eq:est_err_delta}
    \bbP\bigg[|\hat\Delta_\lambda - \Delta| \lesssim \bigg(\lambda \cdot \sqrt{\frac{\sigma_\rct^2}{n_\rct} + \frac{\sigma_\obs^2}{n_{\obs}}}\bigg) \land |\Delta|\bigg]  \geq 1- e^{-\sfc \lambda^2} - o(1).
\end{equation}  
Denote $\tilde \beta = (1-\hat\weight)\hat\beta_\rct + \hat\weight \hat\beta_\obs$, so that $\hat\beta_\lambda = \tilde \beta + \hat\weight \hat\Delta_\lambda$.
To analyze $\hat\beta_\lambda$, we decompose
\begin{align*}
    |\hat \beta_\lambda - \beta^\star|^2 & \lesssim \bigg|\tilde \beta - \bigg(\hat\weight \cdot \beta_\obs + (1-\hat\weight) \cdot \beta^\star\bigg)\bigg|^2 + \bigg|\bigg(\hat\weight \cdot \beta_\obs + (1-\hat\weight) \cdot \beta^\star\bigg) - \beta^\star + \omega\hat \Delta_\lambda\bigg|^2 \\
    & = \bigg|\tilde \beta - \bigg(\hat\weight \cdot \beta_\obs + (1-\hat\weight) \cdot \beta^\star\bigg)\bigg|^2 + \hat\weight^2|\hat \Delta_\lambda - \Delta|^2 \\
    \label{eq:param_est_err_decomp}
    & \lesssim \bigg|\tilde \beta - \bigg(\hat\weight \cdot \beta_\obs + (1-\hat\weight) \cdot \beta^\star\bigg)\bigg|^2 + \bigg(\lambda^2 \cdot {\frac{\sigma_\rct^2}{n_\rct}}\bigg) \land \Delta^2,\numberthis
\end{align*}
where the last inequality happens with probability $1-e^{-\sfc \lambda^2} - o(1)$ by \eqref{eq:est_err_delta} and $\frac{\sigma_\rct^2}{n_\rct}\gtrsim \frac{\sigma_\obs^2}{n_{\obs}}$. On the event $\calE$, we have
\begin{align*}
    \hat\weight & =  \frac{\hat \sigma_\rct^2/n_\rct}{\hat \sigma_\rct^2/n_\rct + \hat\sigma_\obs^2/n_{\obs}} \in \Big[\frac{0.9\weight}{1.1} , \frac{1.1\weight}{0.9}\Big],\\
    1 - \hat\weight &  =  \frac{\hat \sigma_\obs^2/n_{\obs}}{\hat \sigma_\rct^2/n_\rct + \hat\sigma_\obs^2/n_{\obs}} \in \Big[\frac{0.9(1-\weight)}{1.1} , \frac{1.1(1-\weight)}{0.9}\Big].
\end{align*}    
Thus, on the event $\calE$, we can proceed by
\begin{align*}
    \bigg|\tilde \beta - \bigg(\hat\weight \cdot \beta_\obs + (1-\hat\weight) \cdot \beta^\star\bigg)\bigg|^2 
    & \lesssim (1-\weight)^2 |\hat\beta_\rct - \beta^\star|^2 + \weight^2 |\hat\beta_\obs - \beta_\obs|^2.
\end{align*}
Using Assumption \ref{assump:asymp_linear}, we write
\begin{align*}
    (1-\weight)|\hat\beta_\rct- \beta^\star| = (1-\weight) \cdot \bigg| \frac{1}{n_\rct}\sum_{i\in\Rct}\psi_{\rct, i} + o_p(n_\rct^{-1/2})\bigg|.
\end{align*}
By Berry-Esseen theorem (see Theorem \ref{thm:berry_esseen}), we have
$$
    \sup_{t\in\bbR} \bigg| \bbP\bigg(\frac{1}{n_\rct}\sum_{i\in\Rct}\psi_{\rct, i} \leq t \cdot \frac{\sigma_\rct}{\sqrt{n_\rct}}\bigg) - \Phi(t)\bigg| \leq \frac{6 \rho_\rct}{\sigma_\rct^3 \sqrt{n_\rct}},
$$
which further gives
$$
    \bbP\bigg(\frac{1}{n_\rct}\sum_{i\in\Rct}\psi_{\rct, i} \leq \lambda \cdot \frac{\sigma_\rct}{\sqrt{n_\rct}}\bigg) \geq 1 - \Phi(-\lambda)  - o(1) \geq 1 - \calO(e^{-\lambda^2/2}) - o(1),
$$
where the last inequality is by the Gaussian tail bound and $\lambda \gtrsim 1$.
The above display, along with $\lambda \gtrsim 1$, implies that with probability $1-e^{-\sfc \lambda^2} - o(1)$ for a potentially different absolute constant $\sfc >0$, we have
\begin{align*}
    (1-\weight)|\hat\beta_\rct- \beta^\star| 
    & \lesssim \lambda \cdot \frac{\sigma_\obs^2/n_{\obs}}{\sigma_\rct^2/n_\rct + \sigma_\obs^2/n_{\obs}} \cdot \frac{\sigma_\rct}{\sqrt{n_\rct}} \\
    & \leq \lambda \cdot \frac{\sigma_\obs^2/n_{\obs}}{\sigma_\rct^2/n_\rct} \cdot \frac{\sigma_\rct}{\sqrt{n_\rct}} \\
    & = \lambda \cdot \frac{\sigma_\obs}{\sqrt{n_{\obs}}} \cdot \sqrt{\frac{\sigma_\obs^2/n_{\obs}}{\sigma_\rct^2/n_\rct}} \\
    & \lesssim \lambda \cdot \frac{\sigma_\obs}{\sqrt{n_{\obs}}} \\
    & \lesssim \lambda \cdot \sqrt{\frac{\sigma_\rct^2 \sigma_\obs^2}{n_\rct \sigma_\obs^2 + n_{\obs} \sigma_\rct^2}},
\end{align*}
where the last two inequalities are by $\sigma_\rct^2/n_\rct\gtrsim \sigma_\obs^2/n_{\obs}$. Using a nearly identical argument, we have
$$
    \bbP\bigg(\frac{1}{n_{\obs}}\sum_{i\in\Obs}\psi_{\obs, i} \leq \lambda \cdot \frac{\sigma_\obs}{\sqrt{n_{\obs}}}\bigg) \geq 1 - e^{-\sfc \lambda^2}-o(1),
$$
which further implies that,
\begin{align*}
    \weight |\hat\beta_\obs - \beta_\obs| 
    & \lesssim \lambda \cdot \frac{\sigma_\rct^2/n_\rct}{\sigma_\rct^2/n_\rct + \sigma_\obs^2/n_{\obs}} \cdot \frac{\sigma_\obs}{\sqrt{\sigma_\obs}} \\
    & \leq\lambda \cdot  \frac{\sigma_\obs}{\sqrt{\sigma_\obs}} \\
    & \lesssim \lambda \cdot \sqrt{\frac{\sigma_\rct^2 \sigma_\obs^2}{n_\rct \sigma_\obs^2 + n_{\obs} \sigma_\rct^2}},
\end{align*}
where the last inequality is again by $\sigma_\rct^2/n_\rct\gtrsim \sigma_\obs^2/n_{\obs}$. A union bound then gives
$$
    \bigg|\tilde \beta - \bigg(\hat\weight \cdot \beta_\obs + (1-\hat\weight) \cdot \beta^\star\bigg)\bigg|^2  \lesssim \lambda^2\cdot \frac{\sigma_\rct^2 \sigma_\obs^2}{n_\rct \sigma_\obs^2 + n_{\obs} \sigma_\rct^2}
$$
with probability $1-2 e^{-\sfc\lambda^2} - o(1)$. The proof is concluded by plugging the above display back to \eqref{eq:param_est_err_decomp}, invoking another union bound, and recalling that $\lambda\gtrsim 1$ and $|\Delta|\leq \overline\Delta$.

\subsection{Proof of Theorem \ref{thm:oracle_ci}} \label{prf:thm:oracle_ci}
We first show the proposed oracle CI is asymptotically level $(1-\alpha)$. This conclusion is trivial for the $\overline\Delta \geq \sigma_\rct/\sqrt{n_\rct}$ case. When $\overline\Delta < \sigma_\rct/\sqrt{n_\rct}$, recall that we have shown in the proof of Theorem \ref{thm:oracle_estimator} that
$$
    {\hat\beta_\weight}\bigg/\bigg({\frac{\sigma_\rct\sigma_\obs}{\sqrt{n_\rct\sigma_\obs^2 + n_\obs\sigma_\rct^2}} }\bigg) \overset{d}{\to} N(0, 1).
$$
Hence, we have
$$
    \bbP\bigg( - \Phi^{-1}(1-\frac{\alpha}{2})\cdot \frac{\sigma_\rct\sigma_\obs}{\sqrt{n_\rct\sigma_\obs^2 + n_{\obs}\sigma_\rct^2}} + \hat\beta_\weight \leq (1-\weight)\beta^\star + \weight \beta_\obs \leq \hat\beta_\weight + \Phi^{-1}(1-\frac{\alpha}{2})\cdot \frac{\sigma_\rct\sigma_\obs}{\sqrt{n_\rct\sigma_\obs^2 + n_{\obs}\sigma_\rct^2}}\bigg) \geq 1 - \alpha - o(1).
$$
Since $\beta^\star = (1-\weight)\beta^\star + \weight \beta_\obs + \weight \Delta$ and $|\Delta|\leq \overline\Delta$, it follows that
\begin{align*}
    & \bbP\bigg( -\overline\Delta- \Phi^{-1}(1-\frac{\alpha}{2})\cdot \frac{\sigma_\rct\sigma_\obs}{\sqrt{n_\rct\sigma_\obs^2 + n_{\obs}\sigma_\rct^2}} + \hat\beta_\weight 
    \leq \beta^\star 
    \leq \hat\beta_\weight + \Phi^{-1}(1-\frac{\alpha}{2})\cdot \frac{\sigma_\rct\sigma_\obs}{\sqrt{n_\rct\sigma_\obs^2 + n_{\obs}\sigma_\rct^2}} + \overline \Delta\bigg) \\
    & \geq \bbP\bigg( - \Phi^{-1}(1-\frac{\alpha}{2})\cdot \frac{\sigma_\rct\sigma_\obs}{\sqrt{n_\rct\sigma_\obs^2 + n_{\obs}\sigma_\rct^2}} + \hat\beta_\weight \leq (1-\weight)\beta^\star + \weight \beta_\obs \leq \hat\beta_\weight + \Phi^{-1}(1-\frac{\alpha}{2})\cdot \frac{\sigma_\rct\sigma_\obs}{\sqrt{n_\rct\sigma_\obs^2 + n_{\obs}\sigma_\rct^2}}\bigg)\\
    & \geq 1 - \alpha - o(1).
\end{align*} 
By construction, 
\begin{align*}
    \len([L_\alpha, U_\alpha])
     = 
    \begin{cases}
        2 \Phi^{-1}(1-\alpha/2) \frac{\sigma_\rct}{\sqrt{n_\rct}} & \textnormal{ if } \overline\Delta \geq \sigma_\rct/\sqrt{n_\rct} \\
        2 \Phi^{-1}(1-\alpha/2) \frac{\sigma_\rct\sigma_\obs}{\sqrt{n_\rct\sigma_\obs^2 + n_{\obs}\sigma_\rct^2}} + 2 \overline\Delta  & \textnormal{ otherwise}.
    \end{cases}
\end{align*}
As long as $\alpha \geq \ep\gtrsim 1$, we have $\Phi^{-1}(1-\alpha/2)\lesssim 1$. 
It follows that  
$$
    \len([L_\alpha, U_\alpha]) \lesssim \frac{\sigma_\rct\sigma_\obs}{\sqrt{n_\rct\sigma_\obs^2 + n_\obs\sigma_\rct^2}} + \frac{\sigma_\rct}{\sqrt{n_\rct}} \land \overline\Delta,
$$
and the proof is concluded by taking expectation at both sides.

\subsection{Proof of Theorem \ref{thm:minimax_adaptive_length}}\label{prf:thm:minimax_adaptive_length}
We first present two useful lemmas.
\begin{lemma}
    \label{lemma:minimax_length}
    Let $\pi_{\underline\Delta}$ and $\pi_{\overline\Delta}$ be prior distributions on $\calP_{\underline\Delta}$ and $\calP_{\overline\Delta}$, respectively. Let $f_{\pi_{\underline\Delta}}$ and $f_{\pi_{\overline\Delta}}$ be the marginal distribution of all the data, marginalized over $\pi_{\underline\Delta}$ and $\pi_{\overline\Delta}$, respectively.
    Assume the prior distributions are chosen so that $\beta^\star = \underline\beta$ under $\pi_{\underline\Delta}$ and $\beta^\star = \overline\beta$ under $\pi_{\overline\Delta}$. Then we have
    $$
        \len^\star_\alpha(\calP_{\underline\Delta}, \calP_{\overline\Delta}) \geq |\underline\beta - \overline\beta| \cdot \bigg[\bigg(1 - TV(f_{\pi_{\underline\Delta}}, f_{\pi_{\overline\Delta}}) - 2\alpha\bigg) \lor 0\bigg],
    $$
    where $TV(\cdot, \cdot)$ is the total variation distance.
\end{lemma}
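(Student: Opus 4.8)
The plan is to run a standard two-prior (constrained-risk) lower bound argument of the type pioneered by \cite{cai2004adaptation}. First I would lower bound the worst-case expected length over the smaller space by a Bayes average: since $\pi_{\underline\Delta}$ is supported on $\calP_{\underline\Delta}$,
$$
\sup_{(\beta^\star,\beta_\obs)\in\calP_{\underline\Delta}} \bbE_{(\beta^\star,\beta_\obs)}[\len(\ci_\alpha)] \geq \bbE_{\pi_{\underline\Delta}}[\len(\ci_\alpha)] = \bbE_{f_{\pi_{\underline\Delta}}}[\len(\ci_\alpha)],
$$
where the last equality holds because $\len(\ci_\alpha)$ is a function of the data alone, so averaging over the prior collapses to the marginal data law $f_{\pi_{\underline\Delta}}$. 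This reduction holds for every $\ci_\alpha\in\calI_\alpha(\calP_{\overline\Delta})$, and hence survives the infimum defining $\len^\star_\alpha(\calP_{\underline\Delta},\calP_{\overline\Delta})$.

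Next I would introduce the two coverage events $A = \{L \leq \underline\beta \leq U\}$ and $B = \{L \leq \overline\beta \leq U\}$. The geometric crux is that on $A\cap B$ the interval contains both endpoints, so $\len(\ci_\alpha) = U - L \geq |\underline\beta - \overline\beta|$. It therefore remains to lower bound $\bbP_{f_{\pi_{\underline\Delta}}}(A\cap B)$. For event $A$, uniform validity over $\calP_{\overline\Delta}$ together with $\calP_{\underline\Delta}\subseteq\calP_{\overline\Delta}$ and the fact that $\beta^\star\equiv\underline\beta$ under $\pi_{\underline\Delta}$ gives $\bbP_{f_{\pi_{\underline\Delta}}}(A)\geq 1-\alpha$. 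For event $B$, the same validity applied under $\pi_{\overline\Delta}$ (where $\beta^\star\equiv\overline\beta$) yields $\bbP_{f_{\pi_{\overline\Delta}}}(B)\geq 1-\alpha$; I then transfer this to $f_{\pi_{\underline\Delta}}$ using the defining property of total variation,
$$
\bbP_{f_{\pi_{\underline\Delta}}}(B) \geq \bbP_{f_{\pi_{\overline\Delta}}}(B) - TV(f_{\pi_{\underline\Delta}}, f_{\pi_{\overline\Delta}}) \geq 1-\alpha - TV(f_{\pi_{\underline\Delta}}, f_{\pi_{\overline\Delta}}).
$$
A union bound $\bbP(A\cap B)\geq \bbP(A)+\bbP(B)-1$ then gives $\bbP_{f_{\pi_{\underline\Delta}}}(A\cap B)\geq 1-2\alpha - TV(f_{\pi_{\underline\Delta}}, f_{\pi_{\overline\Delta}})$.

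Finally I would combine these pieces. From nonnegativity of the length and the bound on $A\cap B$,
$$
\bbE_{f_{\pi_{\underline\Delta}}}[\len(\ci_\alpha)] \geq |\underline\beta - \overline\beta| \cdot \bbP_{f_{\pi_{\underline\Delta}}}(A\cap B) \geq |\underline\beta - \overline\beta|\big(1 - 2\alpha - TV(f_{\pi_{\underline\Delta}}, f_{\pi_{\overline\Delta}})\big),
$$
and since the left-hand side is automatically nonnegative, the right-hand side may be replaced by its positive part, yielding the claimed inequality after taking the infimum over $\ci_\alpha$.

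The argument is essentially careful bookkeeping once the setup is fixed, so I do not expect a genuine obstacle. The only step requiring real care is the coverage-transfer: I must ensure each prior is supported strictly inside the correct parameter space so that uniform validity over $\calP_{\overline\Delta}$ can be invoked to control coverage at \emph{both} $\underline\beta$ and $\overline\beta$, and I must correctly reduce the prior-averaged coverage probabilities to statements about the marginal data laws $f_{\pi_{\underline\Delta}}$ and $f_{\pi_{\overline\Delta}}$ so that the total-variation inequality applies cleanly to the common event $B$.
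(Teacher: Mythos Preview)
Your argument is correct and is precisely the standard two-prior constrained-risk argument underlying Lemma~1 of \cite{cai2017confidence}; the paper itself does not reprove the lemma but simply cites that result, so you have supplied the details the paper defers to the reference. The reduction to the Bayes average under $f_{\pi_{\underline\Delta}}$, the coverage-transfer via total variation, and the union bound on $A\cap B$ are all exactly as in Cai--Guo, and your care about where each prior must be supported is on point.
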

\begin{proof}
    This is a direct corollary of Lemma 1 in \cite{cai2017confidence}.
\end{proof}

\begin{lemma}
    \label{lemma:tv_ineq}
    Let $X, Y, Z$ be random variables. Then
    $$
        TV(\textnormal{Law}(X), \textnormal{Law}(Y)) \leq \bbE_{Z}[TV(\textnormal{Law}(X|Z), \textnormal{Law}(Y|Z))].
    $$
\end{lemma}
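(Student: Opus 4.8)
The plan is to prove the inequality directly from the supremum-over-events characterization of total variation, reducing it to the elementary pointwise fact that a conditional probability difference is controlled by the conditional total variation, and then averaging over $Z$. The only point requiring care is the correct reading of the statement: $\textnormal{Law}(X\mid Z)$ and $\textnormal{Law}(Y\mid Z)$ are the regular conditional distributions arising from two (possibly different) joint laws, of $(X,Z)$ and of $(Y,Z)$, and these two joint laws are assumed to induce the \emph{same} marginal law on $Z$. This is exactly the situation in the intended application, where $Z$ collects the treatments and covariates whose distribution is unaffected by the priors $\pi_{\underline\Delta}$ and $\pi_{\overline\Delta}$; the shared $Z$-marginal is what legitimizes the single outer expectation $\bbE_Z[\cdot]$ appearing on the right-hand side, so I would state it explicitly at the outset.

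First I would fix an arbitrary measurable event $A$ in the common sample space of $X$ and $Y$ and apply the tower property under each joint law; because the two laws share the $Z$-marginal, this yields the single identity
\[
    \bbP(X\in A) - \bbP(Y\in A) = \bbE_Z\big[\bbP(X\in A\mid Z) - \bbP(Y\in A\mid Z)\big].
\]
Next, for each realized value of $Z$ I would bound the integrand pointwise: by the definition of total variation as a supremum over events,
\[
    \bbP(X\in A\mid Z) - \bbP(Y\in A\mid Z) \leq TV\big(\textnormal{Law}(X\mid Z), \textnormal{Law}(Y\mid Z)\big).
\]
Taking $\bbE_Z[\cdot]$ on both sides, and then taking the supremum over all events $A$ on the left (using $TV(\textnormal{Law}(X),\textnormal{Law}(Y)) = \sup_A(\bbP(X\in A) - \bbP(Y\in A))$), delivers the claimed inequality.

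An equivalent and perhaps more transparent derivation, valid when all laws admit densities against a common dominating measure, writes $TV(\textnormal{Law}(X),\textnormal{Law}(Y)) = \tfrac12\int|p(x)-q(x)|\,dx$ with $p(x)=\int p(x\mid z)\,m(z)\,dz$ and $q(x)=\int q(x\mid z)\,m(z)\,dz$ for the shared $Z$-marginal $m$; pulling the absolute value inside the $z$-integral by the triangle inequality and recognizing the inner integral as the conditional total variation gives the result, so the inequality is really just Jensen's inequality for the convex map $u\mapsto|u|$. I do not expect a genuine obstacle here, as this is a standard convexity/averaging bound. The one place the argument could break is the shared-$Z$-marginal hypothesis: without it the tower-property step produces two distinct outer expectations and the clean single-expectation bound fails. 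I would therefore flag that hypothesis and note that it holds at every invocation of the lemma in the proof of Theorem \ref{thm:minimax_adaptive_length}.
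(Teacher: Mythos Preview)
Your proposal is correct and follows essentially the same route as the paper's proof: tower property under the shared $Z$-marginal, then a pointwise bound of the conditional probability difference by the conditional total variation, then average over $Z$ and take the supremum over events. Your explicit flagging of the shared-$Z$-marginal hypothesis is a genuine improvement in clarity over the paper, which uses it silently when collapsing the two tower expansions into a single $\bbE_Z[\cdot]$.
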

\begin{proof}
    By definition, we have
    \begin{align*}
        TV(\textnormal{Law}(X), \textnormal{Law}(Y)) & = \sup_{\calE}|\bbP(X\in \calE) - \bbP(Y \in \calE)|\\
        & = \sup_{\calE} \bigg|\bbE_Z \bigg[\bbP(X\in\calE \mid Z) - \bbP(Y\in\calE \mid Z)\bigg] \bigg|\\
        & \leq \sup_{\calE} \bbE_Z \bigg| \bbP(X\in\calE \mid Z) - \bbP(Y\in\calE \mid Z)\bigg| \\
        & \leq \bbE_Z \sup_{\calE}\bigg| \bbP(X\in\calE \mid Z) - \bbP(Y\in\calE \mid Z)\bigg| \\
        & = \bbE_{Z}[TV(\textnormal{Law}(X|Z), \textnormal{Law}(Y|Z))],
    \end{align*}
    where the last two inequalities are by Jensen's inequality.
\end{proof}

We break the proof into three cases.

\paragraph{Case A. $\boldsymbol{\overline\Delta \geq \sigma_\rct/\sqrt{n_\rct}}$.}
In this case, we choose $\pi_{\underline\Delta}$ to be a point mass at $\beta^\star = 0, \beta_\obs = 0$ and we let $\pi_{\overline\Delta}$ be a point mass at $\beta^\star = \sigma_\rct/\sqrt{n_\rct}, \beta_\obs = 0$. By Lemmas \ref{lemma:minimax_length} and \ref{lemma:tv_ineq}, we have
\begin{align*}
    \len^\star_\alpha(\calP_{\underline\Delta}, \calP_{\overline\Delta}) 
    &\geq \frac{\sigma_\rct}{\sqrt{n_\rct}} \cdot \bigg[\bigg(1 - TV(f_{\pi_{\underline\Delta}}, f_{\pi_{\overline\Delta}}) - 2\alpha\bigg) \lor 0\bigg] \\
    & \geq \frac{\sigma_\rct}{\sqrt{n_\rct}} \cdot \bigg[\bigg( \bbE_{\{\bfX_i\}, \{A_i\}}[ 1 - TV(f_{\pi_{\underline\Delta}|\{\bfX_i\}, \{A_i\}}, f_{\pi_{\overline\Delta}|\{\bfX_i\}, \{A_i\}})] - 2\alpha\bigg) \lor 0\bigg],
\end{align*}
where $f_{\pi_{\underline\Delta}|\{\bfX_i\}, \{A_i\}}$ (resp.~$f_{\pi_{\overline\Delta}|\{\bfX_i\}, \{A_i\}}$) is the conditional distribution of $f_{\pi_{\underline\Delta}}$ (resp.~$f_{\pi_{\underline\Delta}}$) given $\{\bfX_i\}$ and $\{A_i\}$. 
By Neyman--Pearson lemma, the quantity $1 - TV(f_{\pi_{\underline\Delta}|\{\bfX_i\}, \{A_i\}}, f_{\pi_{\overline\Delta}|\{\bfX_i\}, \{A_i\}})$ is equal to the Type-I plus Type-II error of testing
$$
    H_0: \{Y_i\}\mid \{\bfX_i\}, \{A_i\} \sim f_{\pi_{\underline\Delta}|\{\bfX_i\}, \{A_i\}}
    ~~~
    \textnormal{v.s.}
    ~~~
    H_1: \{Y_i\}\mid \{\bfX_i\}, \{A_i\} \sim f_{\pi_{\overline\Delta}|\{\bfX_i\}, \{A_i\}},
$$
and the optimal test is given by the likelihood ratio test.
By construction, the likelihood function under $H_0$ is
\begin{align*}
    L_0 & = \bigg(\prod_{i\in\Rct, A_i = 1} \frac{1}{\sqrt{2\pi \sigma_\rct^2}} \exp\{-(Y_i - \bsgamma_\rct^\top \bfX_i)^2/(2\sigma_\rct^2)\}\bigg) \\
    & \qquad \times \bigg(\prod_{i\in\Rct, A_i = 0}\frac{1}{\sqrt{2\pi \sigma_\rct^2}} \exp\{-(Y_i - \bsgamma_\rct^\top \bfX_i)^2/(2\sigma_\rct^2)\}\bigg) \\
    & \qquad \times \bigg(\prod_{i\in\Obs, A_i = 1} \frac{1}{\sqrt{2\pi \sigma_\obs^2}} \exp\{-(Y_i -  \bsgamma_\obs^\top \bfX_i)^2/(2\sigma_\obs^2)\}\bigg) \\
    \label{eq:minimax_length_L0_caseA}
    & \qquad \times
    \bigg(\prod_{i\in\Obs, A_i = 0} \frac{1}{\sqrt{2\pi \sigma_\obs^2}} \exp\{-(Y_i -\bsgamma_\obs^\top \bfX_i)^2/(2\sigma_\obs^2)\}\bigg),\numberthis
\end{align*}    
whereas the likelihood function under $H_1$ is
\begin{align*}
    L_1 & = \bigg(\prod_{i\in\Rct, A_i = 1} \frac{1}{\sqrt{2\pi \sigma_\rct^2}} \exp\{-(Y_i - \sigma_\rct/(2\sqrt{n_\rct}) - \bsgamma_\rct^\top \bfX_i)^2/(2\sigma_\rct^2)\}\bigg) \\
    & \qquad \times \bigg(\prod_{i\in\Rct, A_i = 0}\frac{1}{\sqrt{2\pi \sigma_\rct^2}} \exp\{-(Y_i + \sigma_\rct/(2\sqrt{n_\rct}) - \bsgamma_\rct^\top \bfX_i)^2/(2\sigma_\rct^2)\}\bigg) \\
    & \qquad \times \bigg(\prod_{i\in\Obs, A_i = 1} \frac{1}{\sqrt{2\pi \sigma_\obs^2}} \exp\{-(Y_i -  \bsgamma_\obs^\top \bfX_i)^2/(2\sigma_\obs^2)\}\bigg) \\
    & \qquad \times
    \bigg(\prod_{i\in\Obs, A_i = 0} \frac{1}{\sqrt{2\pi \sigma_\obs^2}} \exp\{-(Y_i -\bsgamma_\obs^\top \bfX_i)^2/(2\sigma_\obs^2)\}\bigg).
\end{align*}    
The optimal test rejects $H_0$ in favor of $H_1$ when $L_1/L_0\geq 1$, which is equivalent to
\begin{align*}
    -\sum_{i\in\calC, A_i = 1} \bigg(2(Y_i - \gamma_\rct^\top \bfX_i) - \frac{\sigma_\rct}{2\sqrt{n_\rct}}\bigg) 
    + 
    \sum_{i\in\calC, A_i = 0} \bigg(2(Y_i - \gamma_\rct^\top \bfX_i) + \frac{\sigma_\rct}{2\sqrt{n_\rct}}\bigg)  \leq 0.
\end{align*} 
Under $H_0$, we have $Y_i-\gamma_\rct^\top \bfX_i \mid \bfX_i, A_i \overset{i.i.d.}{\sim} N(0, \sigma_\rct^2)$. Thus, the left-hand side above is distributed as
$
    N\left(\frac{\sigma_\rct \sqrt{n_\rct}}{2}, 4n_\rct\sigma_\rct^2\right),
$
and the type-I error is given by
$$
    \bbP\bigg( N\left(\frac{\sigma_\rct \sqrt{n_\rct}}{2}, 4n\sigma_\rct^2\right) \leq 0\bigg) = \Phi(-1/4).
$$
A nearly identical argument shows that the Type-II error is also given by $\Phi(-1/4)$.
Hence, we get
$$
    1 - TV(f_{\pi_{\underline\Delta}|\{\bfX_i\}, \{A_i\}}, f_{\pi_{\overline\Delta}|\{\bfX_i\}, \{A_i\}})] = 2 \Phi(-1/4),
$$
and it follows that
$$
    \len^\star_\alpha(\calP_{\underline\Delta}, \calP_{\overline\Delta}) 
    \geq 2\frac{\sigma_\rct}{\sqrt{n_\rct}} \cdot (\Phi(-1/4) - \alpha)\gtrsim \frac{\sigma_\rct}{\sqrt{n_\rct}},
$$
where the last inequality is by $\alpha \leq \Phi(-1/4)-\ep$.

\paragraph{Case B. $\boldsymbol{ 1/\sqrt{n_\rct/\sigma_\rct^2 + n_\obs/\sigma_\obs^2}\leq \overline \Delta \leq \sigma_\rct/\sqrt{n_\rct}}$.}
In this case, we choose $\pi_{\underline\Delta}$ to be a point mass at $\beta^\star =0, \beta_\obs = 0$, and we choose $\pi_{\overline\Delta}$ to be a point mass at $\beta^\star = \overline\Delta, \beta_\obs = 0$. By Lemmas \ref{lemma:minimax_length} and \ref{lemma:tv_ineq}, we have
\begin{align*}
    \len^\star_\alpha(\calP_{\underline\Delta}, \calP_{\overline\Delta}) 
    & \geq \overline\Delta \cdot \bigg[\bigg( \bbE_{\{\bfX_i\}, \{A_i\}}[ 1 - TV(f_{\pi_{\underline\Delta}|\{\bfX_i\}, \{A_i\}}, f_{\pi_{\overline\Delta}|\{\bfX_i\}, \{A_i\}})] - 2\alpha\bigg) \lor 0\bigg].
\end{align*}
Similar to Case A, we need to compute the optimal Type-I plus Type-II error of testing
$$
    H_0: \{Y_i\}\mid \{\bfX_i\}, \{A_i\} \sim f_{\pi_{\underline\Delta}|\{\bfX_i\}, \{A_i\}}
    ~~~
    \textnormal{v.s.}
    ~~~
    H_1: \{Y_i\}\mid \{\bfX_i\}, \{A_i\} \sim f_{\pi_{\overline\Delta}|\{\bfX_i\}, \{A_i\}}.
$$
The likelihood function under $H_0$ is given by \eqref{eq:minimax_length_L0_caseA}, and the likelihood function under $H_1$ is given by
\begin{align*}
    L_1 & = \bigg(\prod_{i\in\Rct, A_i = 1} \frac{1}{\sqrt{2\pi \sigma_\rct^2}} \exp\{-(Y_i - \overline\Delta/2 - \bsgamma_\rct^\top \bfX_i)^2/(2\sigma_\rct^2)\}\bigg) \\
    & \qquad \times \bigg(\prod_{i\in\Rct, A_i = 0}\frac{1}{\sqrt{2\pi \sigma_\rct^2}} \exp\{-(Y_i + \overline\Delta/2 - \bsgamma_\rct^\top \bfX_i)^2/(2\sigma_\rct^2)\}\bigg) \\
    & \qquad \times \bigg(\prod_{i\in\Obs, A_i = 1} \frac{1}{\sqrt{2\pi \sigma_\obs^2}} \exp\{-(Y_i -  \bsgamma_\obs^\top \bfX_i)^2/(2\sigma_\obs^2)\}\bigg) \\
    & \qquad \times
    \bigg(\prod_{i\in\Obs, A_i = 0} \frac{1}{\sqrt{2\pi \sigma_\obs^2}} \exp\{-(Y_i -\bsgamma_\obs^\top \bfX_i)^2/(2\sigma_\obs^2)\}\bigg).
\end{align*}    
The optimal test rejects $H_0$ in favor of $H_1$ when $L_1/L_0\geq 1$, which is equivalent to
\begin{align*}
    -\sum_{i\in\calC, A_i = 1} \bigg(2(Y_i - \gamma_\rct^\top \bfX_i) - {\overline\Delta}/{2}\bigg) 
    + 
    \sum_{i\in\calC, A_i = 0} \bigg(2(Y_i - \gamma_\rct^\top \bfX_i) +\overline\Delta/2\bigg)  \leq 0.
\end{align*} 
Under $H_0$, the left-hand side above is distributed as
$
N\left(\overline\Delta n_\rct/2, 4n_\rct \sigma_\rct^2\right),
$
and thus the type-I error is given by
$$
    \bbP\bigg(N\left(\overline\Delta n_\rct/2, 4n_\rct \sigma_\rct^2\right) \leq 0\bigg) = \Phi\bigg(-\frac{\overline\Delta}{4 \sigma_\rct/\sqrt{n_\rct}}\bigg) \geq \Phi(-1/4),
$$
where the last inequality is by $\overline\Delta \leq \sigma_\rct/\sqrt{n_\rct}$. A nearly identical argument shows that the type-I error is also lower bounded by $\Phi(-1/4)$. Thus, 
$$
1 - TV(f_{\pi_{\underline\Delta}|\{\bfX_i\}, \{A_i\}}, f_{\pi_{\overline\Delta}|\{\bfX_i\}, \{A_i\}})] \geq 2\Phi(-1/4),
$$
and
$$
    \len^\star_\alpha(\calP_{\underline\Delta}, \calP_{\overline\Delta}) 
    \geq 2 \overline\Delta \cdot (\Phi(-1/4) - \alpha)\gtrsim \overline\Delta,
$$
where the last inequality is by $\alpha \leq \Phi(-1/4)-\ep$.

\paragraph{Case C. $\boldsymbol{\overline\Delta \leq 1/\sqrt{n_\rct/\sigma_\rct^2 + n_\obs/\sigma_\obs^2}}$.}
In this case, we let $\pi_{\underline\Delta}$ be a point mass at $\beta^\star = 0, \beta_\obs = 0$ and $\pi_{\overline\Delta}$ be a point mass at $\beta^\star = \frac{1}{\sqrt{n_\rct/\sigma_\rct^2 + n_\obs/\sigma_\obs^2}}, \beta_\obs = \frac{1}{\sqrt{n_\rct/\sigma_\rct^2 + n_\obs/\sigma_\obs^2}}$.
By Lemmas \ref{lemma:minimax_length} and \ref{lemma:tv_ineq}, we have
\begin{align*}
    \len^\star_\alpha(\calP_{\underline\Delta}, \calP_{\overline\Delta}) 
    & \geq \frac{1}{\sqrt{n_\rct/\sigma_\rct^2 + n_\obs/\sigma_\obs^2}} \cdot \bigg[\bigg( \bbE_{\{\bfX_i\}, \{A_i\}}[ 1 - TV(f_{\pi_{\underline\Delta}|\{\bfX_i\}, \{A_i\}}, f_{\pi_{\overline\Delta}|\{\bfX_i\}, \{A_i\}})] - 2\alpha\bigg) \lor 0\bigg].
\end{align*}
Similar to the previous two cases, we need to compute the optimal Type-I plus Type-II error of testing
$$
    H_0: \{Y_i\}\mid \{\bfX_i\}, \{A_i\} \sim f_{\pi_{\underline\Delta}|\{\bfX_i\}, \{A_i\}}
    ~~~
    \textnormal{v.s.}
    ~~~
    H_1: \{Y_i\}\mid \{\bfX_i\}, \{A_i\} \sim f_{\pi_{\overline\Delta}|\{\bfX_i\}, \{A_i\}}.
$$
The likelihood function under $H_0$ is given by \eqref{eq:minimax_length_L0_caseA}, and the likelihood function under $H_1$ is given by
\begin{align*}
    L_1 & = \bigg(\prod_{i\in\Rct, A_i = 1} \frac{1}{\sqrt{2\pi \sigma_\rct^2}} \exp\{-(Y_i - (2\sqrt{n_\rct/\sigma_\rct^2 + n_\obs/\sigma_\obs^2})^{-1} - \bsgamma_\rct^\top \bfX_i)^2/(2\sigma_\rct^2)\}\bigg) \\
    & \qquad \times \bigg(\prod_{i\in\Rct, A_i = 0}\frac{1}{\sqrt{2\pi \sigma_\rct^2}} \exp\{-(Y_i + (2\sqrt{n_\rct/\sigma_\rct^2 + n_\obs/\sigma_\obs^2})^{-1} - \bsgamma_\rct^\top \bfX_i)^2/(2\sigma_\rct^2)\}\bigg) \\
    & \qquad \times \bigg(\prod_{i\in\Obs, A_i = 1} \frac{1}{\sqrt{2\pi \sigma_\obs^2}} \exp\{-(Y_i - (2\sqrt{n_\rct/\sigma_\rct^2 + n_\obs/\sigma_\obs^2})^{-1} -  \bsgamma_\obs^\top \bfX_i)^2/(2\sigma_\obs^2)\}\bigg) \\
    & \qquad \times
    \bigg(\prod_{i\in\Obs, A_i = 0} \frac{1}{\sqrt{2\pi \sigma_\obs^2}} \exp\{-(Y_i + (2\sqrt{n_\rct/\sigma_\rct^2 + n_\obs/\sigma_\obs^2})^{-1} -\bsgamma_\obs^\top \bfX_i)^2/(2\sigma_\obs^2)\}\bigg).
\end{align*}
The optimal test rejects $H_0$ in favor of $H_1$ when $L_1/L_0\geq 1$. With some algebra, such a test rejects $H_0$ in favor of $H_1$ when
\begin{align*}
    & \frac{-1}{\sigma_\rct^2} \sum_{i\in\Rct, A_i = 1} \bigg(2(Y_i - \gamma_\rct^\top \bfX_i) - \frac{1}{2 \sqrt{n_\rct/\sigma_\rct^2 + n_\obs/\sigma_\obs^2}}\bigg)
    + 
    \frac{1}{\sigma_\rct^2} \sum_{i\in\Rct, A_i = 0} \bigg(2(Y_i - \gamma_\rct^\top \bfX_i) + \frac{1}{2 \sqrt{n_\rct/\sigma_\rct^2 + n_\obs/\sigma_\obs^2}}\bigg) \\
    & ~ -\frac{1}{\sigma_\obs^2} \sum_{i\in\Obs, A_i = 1} \bigg(2(Y_i - \gamma_\obs^\top \bfX_i) - \frac{1}{2 \sqrt{n_\rct/\sigma_\rct^2 + n_\obs/\sigma_\obs^2}}\bigg) 
    + \frac{1}{\sigma_\obs^2} \sum_{i\in\Obs, A_i = 0} \bigg(2(Y_i + \gamma_\obs^\top \bfX_i) - \frac{1}{2 \sqrt{n_\rct/\sigma_\rct^2 + n_\obs/\sigma_\obs^2}}\bigg)\\
    & \leq 0.
\end{align*}
The left-hand side above is distributed as
$$
    \frac{1}{\sigma_\rct^2} \cdot N\bigg(\frac{n_\rct}{2\sqrt{n_\rct/\sigma_\rct^2 + n_\obs/\sigma_\obs^2}}, 4 n_\rct \sigma_\rct^2\bigg) \Conv
    \frac{1}{\sigma_\obs^2} \cdot N\bigg(\frac{n_\obs}{2\sqrt{n_\rct/\sigma_\rct^2 + n_\obs/\sigma_\obs^2}}, 4 n_\obs \sigma_\obs^2\bigg),
$$
where $\conv$ denotes the convolution of random variables. Thus, the type-I error is given by
\begin{align*}
    & \bbP\bigg[ \frac{n_\rct/\sigma_\rct^2}{2\sqrt{n_\rct/\sigma_\rct^2 + n_\obs/\sigma_\obs^2}} +  \frac{n_\obs/\sigma_\obs^2}{2\sqrt{n_\rct/\sigma_\rct^2 + n_\obs/\sigma_\obs^2}} + \bigg(\frac{2\sqrt{n_\rct}}{\sigma_\rct} \cdot N(0, 1)\bigg)\Conv\bigg(\frac{2\sqrt{n_\obs}}{\sigma_\obs} \cdot N(0, 1)\bigg)\leq 0 \bigg]\\
    & = \bbP\bigg[ \frac{\sqrt{n_\rct/\sigma_\rct^2 + n_\obs/\sigma_\obs^2}}{2} + 2 \cdot N \bigg(0, \frac{n_\rct}{\sigma_\rct^2} + \frac{n_\obs}{\sigma_\obs^2}\bigg)\leq 0\bigg] \\
    & = \Phi(-1/4).
\end{align*}
A nearly identical argument shows that the type-I error is also lower bounded by $\Phi(-1/4)$. Thus, 
$$
    \len^\star_\alpha(\calP_{\underline\Delta}, \calP_{\overline\Delta}) 
    \geq 2 \frac{1}{\sqrt{n_\rct/\sigma_\rct^2 + n_\obs/\sigma_\obs^2}} \cdot (\Phi(-1/4) - \alpha)\gtrsim \frac{1}{\sqrt{n_\rct/\sigma_\rct^2 + n_\obs/\sigma_\obs^2}},
$$
where the last inequality is by $\alpha \leq \Phi(-1/4)-\ep$.

\paragraph{Finishing the proof.}
Combining the above three cases, we get 
$$
    \len^\star_\alpha(\calP_{\underline\Delta}, \calP_{\overline\Delta})  \gtrsim \frac{1}{\sqrt{n_\rct/\sigma_\rct^2 + n_\obs/\sigma_\obs^2}} \lor \bigg(\frac{\sigma_\rct}{\sqrt{n_\rct}} \land \overline\Delta\bigg) 
    \asymp \frac{1}{\sqrt{n_\rct/\sigma_\rct^2 + n_\obs/\sigma_\obs^2}} +  \frac{\sigma_\rct}{\sqrt{n_\rct}} \land \overline\Delta,
$$
and the proof is finished.

\newpage
\section{Additional Simulation Results}\label{appx:more_simulation}
\begin{table}[ht]
\centering
\caption{Ratios of the mean squared error of each estimator to that of the oracle estimator $\widehat{\beta}_\weight$ when $n_o = 50,000$. Estimators considered in the table are the RCT-only estimator $\widehat{\beta}_c$, the observational-data-only estimator $\widehat{\beta}_o$, and the anchored thresholding estimator $\widehat{\beta}_{\lambda}$.
}
\resizebox{\textwidth}{!}{
\begin{tabular}{ccccccccccc}
  
 &$\hat\beta_c$ & $\hat\beta_o$ &\multicolumn{2}{c}{$\hat\beta_{\lambda}$} & $\hat\beta_c$ & $\hat\beta_o$ &\multicolumn{2}{c}{$\hat\beta_{\lambda}$}\\
 \hline\hline
   \multicolumn{9}{c}{$\beta(X_1, X_2, X_3) = 2$} \\
   \hline
 \multirow{2}{*}{\begin{tabular}{c}b\end{tabular}} 
 &\multirow{2}{*}{\begin{tabular}{c}\textsf{IPW}\end{tabular}} 
 &\multirow{2}{*}{\begin{tabular}{c}\textsf{IPW}\end{tabular}} 
   & \multirow{2}{*}{\begin{tabular}{c}$\lambda_1=0.5$\end{tabular}}
   & \multirow{2}{*}{\begin{tabular}{c}$\lambda_1=0.6$\end{tabular}}
    &\multirow{2}{*}{\begin{tabular}{c}\textsf{AIPW}\end{tabular}} 
 &\multirow{2}{*}{\begin{tabular}{c}\textsf{AIPW}\end{tabular}} 
   & \multirow{2}{*}{\begin{tabular}{c}$\lambda_1=0.5$\end{tabular}}
   & \multirow{2}{*}{\begin{tabular}{c}$\lambda_1=0.6$\end{tabular}}
  \\  
  \\
  
   0.00 & 30.54 & 1.00 & 4.49 & 2.78 & 24.50 & 1.02 & 3.51 & 2.24 \\ 
   0.01 & 28.76 & 1.05 & 3.77 & 2.33 & 25.40 & 1.06 & 3.58 & 2.22 \\ 
  0.10 & 9.10 & 1.07 & 1.85 & 1.40 & 7.14 & 1.08 & 1.59 & 1.28 \\ 
   0.50 & 1.00 & 1.95 & 1.16 & 1.30 & 1.00 & 2.73 & 1.41 & 1.64 \\ 
   0.60 & 1.00 & 3.08 & 1.51 & 1.80 & 1.00 & 4.21 & 1.73 & 2.13 \\ 
   0.70 & 1.00 & 3.91 & 1.83 & 2.21 & 1.00 & 5.29 & 2.05 & 2.51 \\ 
   2.00 & 1.00 & 14.37 & 2.37 & 3.16 & 1.00 & 20.16 & 2.44 & 3.26 \\ 
  3.00 & 1.00 & 18.21 & 2.42 & 3.24 & 1.00 & 23.89 & 2.30 & 3.06 \\ 
  10.00 & 1.00 & 22.61 & 2.48 & 3.32 & 1.00 & 32.11 & 2.53 & 3.39 \\  
  \hline\hline
   \multicolumn{9}{c}{$\beta(X_1, X_2, X_3) = 2 - X_1 - X_2$} \\
   \hline

\multirow{2}{*}{\begin{tabular}{c}b\end{tabular}} 
 &\multirow{2}{*}{\begin{tabular}{c}\textsf{IPPW}\end{tabular}} 
 &\multirow{2}{*}{\begin{tabular}{c}\textsf{IPW}\end{tabular}} 
   & \multirow{2}{*}{\begin{tabular}{c}$\lambda_1=0.5$\end{tabular}}
   & \multirow{2}{*}{\begin{tabular}{c}$\lambda_1=0.6$\end{tabular}}
    &\multirow{2}{*}{\begin{tabular}{c}\textsf{AIPPW}\end{tabular}} 
 &\multirow{2}{*}{\begin{tabular}{c}\textsf{AIPW}\end{tabular}} 
   & \multirow{2}{*}{\begin{tabular}{c}$\lambda_1=0.5$\end{tabular}}
   & \multirow{2}{*}{\begin{tabular}{c}$\lambda_1=0.6$\end{tabular}}
  \\  \\  
   0.00 & 27.78 & 1.03 & 3.76 & 2.33 & 24.19 & 1.01 & 3.56 & 2.28 \\ 
   0.01 & 25.44 & 1.01 & 3.40 & 2.11 & 21.77 & 1.01 & 3.02 & 1.95 \\ 
   0.10 & 8.82 & 1.07 & 1.73 & 1.32 & 6.71 & 1.07 & 1.51 & 1.23 \\ 
   0.50 & 1.00 & 2.14 & 1.23 & 1.41 & 1.00 & 2.97 & 1.48 & 1.75 \\ 
   0.60 & 1.00 & 2.76 & 1.49 & 1.73 & 1.00 & 3.78 & 1.73 & 2.09 \\ 
   0.70 & 1.00 & 3.91 & 1.77 & 2.15 & 1.00 & 5.03 & 1.86 & 2.31 \\ 
   2.00 & 1.00 & 12.53 & 2.36 & 3.08 & 1.00 & 17.67 & 2.38 & 3.12 \\ 
   3.00 & 1.00 & 18.79 & 2.44 & 3.28 & 1.00 & 26.05 & 2.51 & 3.37 \\ 
   10.00 & 1.00 & 21.66 & 2.45 & 3.27 & 1.00 & 29.74 & 2.35 & 3.13 \\ 
\end{tabular}}
\end{table}

\begin{table}[ht]
\centering
\caption{Ratios of the mean squared error of each estimator to that of the oracle estimator $\widehat{\beta}_\weight$ when $n_o = 100,000$. Estimators considered in the table are the RCT-only estimator $\widehat{\beta}_c$, the observational-data-only estimator $\widehat{\beta}_o$, and the anchored thresholding estimator $\widehat{\beta}_{\lambda}$.
}
\resizebox{\textwidth}{!}{ 
\begin{tabular}{ccccccccccc}
 &$\hat\beta_c$ & $\hat\beta_o$ &\multicolumn{2}{c}{$\hat\beta_{\lambda}$} & $\hat\beta_c$ & $\hat\beta_o$ &\multicolumn{2}{c}{$\hat\beta_{\lambda}$}\\
 \hline\hline
   \multicolumn{9}{c}{$\beta(X_1, X_2, X_3) = 2$} \\
   \hline
 \multirow{2}{*}{\begin{tabular}{c}b\end{tabular}} 
 &\multirow{2}{*}{\begin{tabular}{c}\textsf{IPW}\end{tabular}} 
 &\multirow{2}{*}{\begin{tabular}{c}\textsf{IPW}\end{tabular}} 
   & \multirow{2}{*}{\begin{tabular}{c}$\lambda_1=0.5$\end{tabular}}
   & \multirow{2}{*}{\begin{tabular}{c}$\lambda_1=0.6$\end{tabular}}
    &\multirow{2}{*}{\begin{tabular}{c}\textsf{AIPW}\end{tabular}} 
 &\multirow{2}{*}{\begin{tabular}{c}\textsf{AIPW}\end{tabular}} 
   & \multirow{2}{*}{\begin{tabular}{c}$\lambda_1=0.5$\end{tabular}}
   & \multirow{2}{*}{\begin{tabular}{c}$\lambda_1=0.6$\end{tabular}}
  \\  \\
  
   0.00 & 28.75 & 1.04 & 3.70 & 2.26 & 24.32 & 1.04 & 3.46 & 2.23 \\ 
   0.01 & 30.02 & 1.05 & 3.99 & 2.39 & 25.30 & 1.05 & 3.66 & 2.33 \\ 
  0.10 & 8.65 & 1.08 & 1.62 & 1.25 & 6.37 & 1.09 & 1.40 & 1.17 \\ 
   0.50 & 1.00 & 2.20 & 1.21 & 1.41 & 1.00 & 3.09 & 1.52 & 1.79 \\ 
   0.60 & 1.00 & 2.77 & 1.42 & 1.68 & 1.00 & 3.76 & 1.66 & 2.02 \\ 
   0.70 & 1.00 & 3.59 & 1.67 & 2.00 & 1.00 & 4.90 & 1.87 & 2.31 \\ 
   2.00 & 1.00 & 14.05 & 2.30 & 3.07 & 1.00 & 19.78 & 2.34 & 3.13 \\ 
   3.00 & 1.00 & 16.92 & 2.22 & 2.96 & 1.00 & 22.69 & 2.20 & 2.92 \\ 
   10.00 & 1.00 & 21.80 & 2.48 & 3.30 & 1.00 & 30.30 & 2.45 & 3.26 \\  \hline\hline 
   \multicolumn{9}{c}{$\beta(X_1, X_2, X_3) = 2 - X_1 - X_2$} \\ \hline

\multirow{2}{*}{\begin{tabular}{c}b\end{tabular}} 
 &\multirow{2}{*}{\begin{tabular}{c}\textsf{IPPW}\end{tabular}} 
 &\multirow{2}{*}{\begin{tabular}{c}\textsf{IPW}\end{tabular}} 
   & \multirow{2}{*}{\begin{tabular}{c}$\lambda_1=0.5$\end{tabular}}
   & \multirow{2}{*}{\begin{tabular}{c}$\lambda_1=0.6$\end{tabular}}
    &\multirow{2}{*}{\begin{tabular}{c}\textsf{AIPPW}\end{tabular}} 
 &\multirow{2}{*}{\begin{tabular}{c}\textsf{AIPW}\end{tabular}} 
   & \multirow{2}{*}{\begin{tabular}{c}$\lambda_1=0.5$\end{tabular}}
   & \multirow{2}{*}{\begin{tabular}{c}$\lambda_1=0.6$\end{tabular}}
  \\  \\  
    0.00 & 28.75 & 1.04 & 3.70 & 2.26 & 24.32 & 1.04 & 3.46 & 2.23 \\ 
   0.01 & 30.02 & 1.05 & 3.99 & 2.39 & 25.30 & 1.05 & 3.66 & 2.33 \\ 
   0.10 & 8.65 & 1.08 & 1.62 & 1.25 & 6.37 & 1.09 & 1.40 & 1.17 \\ 
   0.50 & 1.00 & 2.20 & 1.21 & 1.41 & 1.00 & 3.09 & 1.52 & 1.79 \\ 
   0.60 & 1.00 & 2.77 & 1.42 & 1.68 & 1.00 & 3.76 & 1.66 & 2.02 \\ 
   0.70 & 1.00 & 3.59 & 1.67 & 2.00 & 1.00 & 4.90 & 1.87 & 2.31 \\ 
   2.00 & 1.00 & 14.05 & 2.30 & 3.07 & 1.00 & 19.78 & 2.34 & 3.13 \\ 
   3.00 & 1.00 & 16.92 & 2.22 & 2.96 & 1.00 & 22.69 & 2.20 & 2.92 \\ 
  10.00 & 1.00 & 21.80 & 2.48 & 3.30 & 1.00 & 30.30 & 2.45 & 3.26 \\
\end{tabular}}
\end{table}

\end{appendices}

\end{document}